\newcommand\ignore[1]{}
\newcommand{\squishlist}{
 \begin{list}{$\bullet$}
  { \setlength{\itemsep}{0pt}
     \setlength{\parsep}{3pt}
     \setlength{\topsep}{3pt}
     \setlength{\partopsep}{0pt}
     \setlength{\leftmargin}{1.5em}
     \setlength{\labelwidth}{1em}
     \setlength{\labelsep}{0.5em} } }
\newcommand{\squishend}{
  \end{list}  }
\newtheorem{lemma}{Lemma}[section]
\newtheorem{theorem}[lemma]{Theorem}
\newtheorem{proposition}[lemma]{Proposition}
\newtheorem{claim}[lemma]{Claim}
\newtheorem{remark}[lemma]{Remark}
\newcommand{\etal}{et al.\ }
\newcommand{\eps}{\epsilon}
\newcommand{\cR}{\mathcal{R}}
\newcommand{\cA}{\mathcal{A}}
\newcommand{\cG}{\mathcal{G}}
\newcommand{\cQ}{\mathcal{Q}}
\newcommand{\cP}{\mathcal{P}}
\newcommand{\cT}{\mathcal{T}}
\newcommand{\cB}{\mathcal{B}}
\newcommand{\cI}{\mathcal{I}}
\newcommand{\Ex}{\mathbb{E}}
\newcommand{\Var}{\texttt{Var}}
\newcommand{\opt}{\textsf{OPT}}
\newcommand{\stemp}{\sigma_{temp}}
\newcommand{\sfinal}{\sigma_{final}}
\newcommand{\fifo}{\textsf{FIFO}}
\newcommand{\lpmflow}{\mathsf{LP}_\mathsf{MaxFlow}}
\newcommand{\prev}{\texttt{Prev}}
\newcommand{\qtemp}{Q_{temp}}
\newcommand{\Ftemp}{F^{\sigma_{temp}}}
\newcommand{\Ffinal}{F^{\sigma_{final}}}
\newcommand{\overflow}{OF}
\newenvironment{proofof}[1]{\smallskip\noindent{\bf Proof of #1}}%
        {\hspace*{\fill}$\Box$\par}
\newcommand{\initOneLiners}{%
    \setlength{\itemsep}{0pt}
    \setlength{\parsep }{0pt}
    \setlength{\topsep }{0pt}
%      \usecounter{myLISTctr}
}
\newenvironment{OneLiners}[1][\ensuremath{\bullet}]
    {\begin{list}
        {#1}
        {\initOneLiners}}
    {\end{list}}
\begin{document}
\title{Optimizing Maximum Flow Time and Maximum Throughput \\ in Broadcast Scheduling}

\author{
Sungjin Im\thanks{Department of Computer Science, Duke University, Durham NC 27708-0129. {\tt
sungjin@cs.duke.edu}. This work was partially done while the author was
visiting University of Warwick. Partially supported by NSF grant CCF-1008065.}
\and Maxim Sviridenko\thanks{University of Warwick. 
{\tt sviri@dcs.warwick.ac.uk}, Work supported by EPSRC grants EP/J021814/1, EP/D063191/1,  FP7 Marie Curie Career Integration Grant and Royal Society Wolfson Research Merit Award. }
}

\date{}
\maketitle
%\thispagestyle{empty}
%\vspace{-7mm}
%\setcounter{page}{0}

\begin{abstract}
    We consider the pull-based broadcast scheduling model. In this model, there are $n$ unit-sized pages of information available at the server. Requests arrive over time at the server asking for a specific page. When the server transmits a page, all outstanding requests for the page are simultaneously satisfied, and this is what distinguishes broadcast scheduling from the standard scheduling setting where each job must be processed separately by the server. Broadcast scheduling has received a considerable amount of attention due to the algorithmic challenges that it gives in addition to its applications in multicast systems and wireless and LAN networks. In this paper, we give the following new approximation results for two popular objectives:
    \begin{OneLiners}
        \item For the objective of minimizing the maximum flow time, we give the \emph{first} PTAS. Previously, it was known that the algorithm First-In-First-Out ($\fifo$) is a 2-approximation, and it is tight \cite{ChangEGK11, ChekuriIM12}. It has been suggested as an open problem to obtain a better approximation \cite{ChangEGK11, Dagstuhlscheduling10, Im12thesis, Moseley12thesis}.
        \item For the objective of maximizing the throughput, we give a $0.7759$-approximation which improves upon the previous best known $0.75$-approximation \cite{GandhiKPS06}.
    \end{OneLiners}

    Our improved results are enabled by our novel rounding schemes and linear programming which can effectively reduce congestion in schedule which is often the main bottleneck in designing scheduling algorithms based on linear programming. We believe that our algorithmic ideas and techniques could be of  potential use for other scheduling problems.
\end{abstract}
%\clearpage

\section{Introduction}

We revisit the pull-based broadcast scheduling model. In this model, there is a single server that stores $n$ pages of useful information. Each client sends a request $\rho$ asking for a specific page $p$. When the server broadcasts a page $p$, all outstanding/unsatisfied requests for the same page $p$ are satisfied simultaneously. This is the main difference from the standard scheduling setting where each request needs to be processed separately by the server. This model is called pull-based since clients initiate the requests, while in the push-based model the server transmits pages according to the given frequency of pages requested.

Broadcast scheduling has applications in multicast systems, LAN and wireless systems \cite{Wong88,AcharyaFZ95,AksoyF98}. We note that data broadcast scheduling is used in commercial systems \cite{Intercast, DirectPC, Airmedia}, and it helps increase the system's bandwidth by serving multiple requests simultaneously.
Also it can be viewed as a special case of batch scheduling that has been extensively studied in the stochastic and queueing theory literature \cite{DebS73,Deb84,Weiss79,WeissP81}.
Broadcast scheduling has received a substantial amount of attention from the algorithms community, and has been studied both for the pull-based and push-models, also both in the online and offline settings \cite{BarnoyBNS98,AksoyF98,AcharyaFZ95,BartalM00}. This is because in addition to the aforementioned applications it gives algorithmic challenges concerning how to group requests for the same page over time to satisfy more requests with less transmissions while optimizing/satisfying certain objectives.

In this paper, we consider two objectives of minimizing the maximum flow time and maximizing the total throughput (profit). We first discuss the first objective. Each request $\rho$ is released at time $r_\rho$ asking for a specific page $p_\rho$. We assume that all pages are unit-sized, and requests arrive only at  integer times. This unit-sized page assumption has been adopted in the most previous literature. This assumption is justified when all pages have similar size,  and still keeps the main difficulty of the problem. Consider any feasible schedule $\sigma$ where at most one page is transmitted at each integer time. The completion time $C^\sigma_\rho$ of request $\rho$ is defined as the first time greater than $r_\rho$ when page $p_\rho$ is transmitted. If no such transmission exists, $C^\sigma_\rho = \infty$.  Note that all requests have a flow time of at least one. The goal is to find a schedule $\sigma$ that minimizes $\max_\rho (C^\sigma_\rho - r_\rho)$. If the schedule $\sigma$ is clear from the context, it may be omitted.

This problem was first suggested in \cite{BartalM00}, which was the paper with \cite{KalyanasundaramPV00} that initiated the study of pull-based broadcast scheduling in the worst case analysis model. In fact, \cite{BartalM00} claimed that the online algorithm First-In-First-Out ($\fifo$) is 2-competitive for this problem. However, it was fairly later that the formal proof was found \cite{ChangEGK11, ChekuriIM12}. This problem was shown to be NP-hard \cite{ChangEGK11}. Although the simple algorithm $\fifo$ achieves $2$-competitiveness, it has been the best known approximation guarantee even for the offline setting. Hence a natural open question was if one can obtain a better approximation in the offline setting \cite{ChangEGK11, Im12thesis, Moseley12thesis}. Furthermore, this problem was mentioned in the Dagstuhl seminar on scheduling in 2010 as an open problem with an interesting connection to the so called IRS Tax Scheduling problem \cite{Dagstuhlscheduling10}.

In the other problem of maximizing the total throughput (profit), each request $\rho$ is also associated with deadline $d_\rho$ and profit $w_\rho$. If page $p$ is transmitted during $[r_\rho+1, d_\rho]$, the request yields profit $w_\rho$. This objective is also NP-hard to optimize \cite{ChangEGK11}. There are several constant factor approximations known. The simple greedy (online) algorithm that transmits page $p$ that satisfies the requests of the maximum total profit is known to be $2$-competitive \cite{KimC04}.  Other approaches are based on linear programming and rounding. As a high-level overview, the LP gives a fractional schedule $\{x_{p,t}\}$ over all pages $p$ and time steps $t$ such that $\sum_p x_{p,t} = 1$ for all time steps $t$. Here $x_{p,t}$ is the (possibly fractional) amount of page $p$ that is transmitted at time $t$. The independent rounding of picking one page at each time $t$ according to $x_{p,t}$ gives a $(1 - 1/e)$-approximation \cite{GandhiKPS06, ChekuriGIKLMMR10}.  The best approximation currently known uses the elegant dependent rounding in \cite{GandhiKPS06} which gives a rounding scheme for a bipartite graph while keeping some hard constraints (here the relationship between pages and times is described as a bipartite graph). The current best approximation factor is 0.75 \cite{GandhiKPS06}.

\subsection{Our Contributions and Techniques}

One of our main results is the \emph{first} polynomial time approximation scheme (PTAS) for the maximum flow time objective.

\begin{theorem}
    \label{thm:fmax-main-det}
    There exists a PTAS for minimizing the maximum flow time in broadcast scheduling. More precisely, for any $0<\eps \leq 1$, there exists a $(1 + \eps)$-approximation algorithm with running  time $m^{O(1/ \eps^4)}$ where $m$ is the number of requests.
\end{theorem}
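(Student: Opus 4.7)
The strategy is to reduce the approximation problem to a decision question: for a guessed value $D$, exhibit a schedule with max flow time at most $(1+O(\eps))D$ whenever $\opt \leq D$. Since $\fifo$ is a factor-$2$ approximation, $\opt$ lies in $[D_\fifo/2, D_\fifo]$, so a binary search over $O(\log(1/\eps))$ candidate values of $D$ suffices at a logarithmic multiplicative overhead. As a preprocessing step, I would round all release times down to the nearest multiple of $\eps D$; this costs only an additive $\eps D$ in the flow time of any feasible schedule and partitions arrivals into $O(m/\eps)$ \emph{epochs} of length $\eps D$, with requests in the same epoch sharing the same effective release time.

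Next, I would write the natural LP with variables $x_{p,t} \in [0,1]$ satisfying $\sum_p x_{p,t} \leq 1$ and, for each request $\rho$, $\sum_{t \in [r_\rho+1,\, r_\rho + D]} x_{p_\rho,t} \geq 1$. The high-level approach is to partition the time axis into blocks of length $L = \Theta(D/\eps)$ and schedule block-by-block via a dynamic program, where the DP state is a concise description of the ``spillover'' of outstanding requests across each block boundary. The key structural observation is that in any schedule of value at most $D$ one can carry at most $\Theta(\eps D)$ outstanding pages in each of the $O(1/\eps)$ epoch-based age classes (since pages with residual life $\eps D$ must be transmitted in the next $\eps D$ slots), so the boundary profile is parameterized by $O(1/\eps^2)$ (page, age-class) coordinates. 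Enumerating all candidate profiles contributes $m^{O(1/\eps^2)}$ states per boundary, and the combination of incoming and outgoing profiles for each block accounts for the overall $m^{O(1/\eps^4)}$ running time.

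The main obstacle will be to certify, given prescribed incoming and outgoing boundary profiles of a block, that the LP restricted to that block admits an \emph{integral} schedule matching those profiles while degrading the flow time by at most $O(\eps D)$. Standard independent or dependent rounding of $\{x_{p,t}\}$ is insufficient here because the window-coverage constraints are hard. The plan would be to exploit that, after aggregating identical windows generated by the epoch rounding, an LP extreme point restricted to a single block has only $O(L/\eps)$ active coverage constraints and thus only $O(L/\eps)$ fractional variables; these can be resolved by a local bipartite-matching/flow argument between fractional pages and remaining time slots, at an additive cost of at most $\eps D$ to each window. Proving that this localized rounding always succeeds and that some enumerated pair of boundary profiles renders the per-block LP feasible is the principal technical step, and is presumably where the ``novel rounding scheme'' promised in the abstract enters.
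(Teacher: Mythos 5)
Your proposal has a genuine gap in the dynamic program: the boundary ``spillover profile'' is not enumerable in polynomial time when the optimum $D$ is large. At a block boundary there can be up to $\Theta(D)$ outstanding pages (e.g.\ $D$ requests for distinct pages all released in the last few steps), and the profile must record \emph{which} pages are outstanding in each age class, not merely how many; the number of such profiles is of order $\binom{m}{\Theta(\eps D)}^{O(1/\eps)} = m^{\Theta(D)}$, not $m^{O(1/\eps^2)}$ as you claim. This is polynomial only when $D = O(\log m/\mathrm{poly}(\eps))$. The paper is forced into exactly this dichotomy: it runs a configuration-based DP only in the regime $L^* \leq (1/\eps^3)\log T$ (where, after coarsening time into steps of length $\eps L^*$, the configuration space is $(3/\eps)^{O(L^*)} = m^{O(1/\eps^4)}$), and it must use a completely different, non-DP algorithm when $L^* \geq (1/\eps^3)\log T$.

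Relatedly, the per-block integral rounding you defer to ``a local bipartite-matching/flow argument'' at additive cost $\eps D$ is precisely the hard step, and the extreme-point sparsity you invoke does not deliver it: a block of length $L = \Theta(D/\eps)$ has $\Theta(L)$ capacity constraints and up to $\Theta(L)$ distinct coverage constraints even after epoch aggregation, so an extreme point can have $\Theta(L)$ fractional variables, i.e.\ essentially every slot is fractional. The known obstruction is that naive $\alpha$-point rounding of such a solution overflows some interval by $\Theta(\sqrt{n})$, and the best previously known congestion bound was $O(\log^2 n/\log\log n)$ --- both far larger than $\eps D$ in the relevant regime. The paper's substitute, of which your proposal has no analogue, is a group-based $\alpha$-point rounding: pages are partitioned into at most $2L^*$ groups with pairwise disjoint windows, all pages in a group share one random threshold $\alpha_g$, and Bernstein's inequality over these at most $2L^*$ independent variables bounds the overflow of every interval by $6\eps L^*$ with high probability precisely because $L^* \geq (1/\eps^3)\log T$; this is then derandomized with pessimistic estimators. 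A minor further point: you round release times \emph{down}, which is the unsafe direction --- a transmission at a time in $(r'_\rho, r_\rho]$ satisfies the rounded request but not the original one; the paper shifts release times to the \emph{right} for exactly this reason.
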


One of the key algorithmic ideas in obtaining this result lies in our novel group-based $\alpha$-point rounding. The $\alpha$-point rounding has been useful in rounding fractional solutions for   scheduling problems. For examples and pointers, see \cite{BansalCS08, GQSS, SkutellaW11}. As mentioned before, the LP relaxation will give the amount $x_{p,t}$ by which page $p$ needs to be transmitted at time $t$. In the standard $\alpha$-point rounding, for each page $p$, one random value $\alpha_p$ is picked uniformly at random from $[0,1]$, and page $p$ is transmitted at times $t$ such that $\sum_{t' =1}^t x_{p, t'} \geq \alpha_p +k > \sum_{t' =1}^{t-1} x_{p, t'}$ for an integer $k$. The resulting (possibly infeasible) schedule has nice properties such as preserving the flow time of each request in expectation. However, it could result in a large congestion during an interval $I$. Namely, too many transmissions may be made during $I$ compared to $|I|$, the maximum number of transmissions that can be made during $I$. This overflow could be as large as the standard deviation $\theta(\sqrt{n})$ for some interval; recall that $n$ is the number of pages. To make this schedule feasible, transmissions are delayed by the amount of overflow.  In fact, this is why \cite{BansalCS08} had to iteratively solve a sequence of relaxed linear programs to avoid this large delay for the average flow objective. However, the upper bound on the overflow shown in \cite{BansalCS08} is $O( \log^2 n / \log \log n)$, which is too large for our goal of designing a PTAS. Furthermore, the overflow during any interval could have a more serious effect on our objective of minimizing the maximum flow time, while for other objectives such as the total flow time, the increase of flow time of some requests may be charged to other requests.

Our key idea is to partition pages into a small number of groups and to let all pages in the same group $g$ to share a single random value $\alpha'_g \in [0,1]$. Roughly speaking, the requests for any two different pages in the same group have substantially different release times. If the difference is more than $L^*$, the maximum flow time of an optimal solution, it can be assumed that two different pages in the same group are never transmitted at the same time, and this is precisely why the pages in the same group can share the same random variable. Although $\alpha_p$ for pages $p$ in the same group $g$ are completely determined by a single random value $\alpha'_g$, the quantities $\alpha_p$ are not necessarily the same. To make the number of transmission made for pages in the same group $g$ as close to the total (fractional) amount of transmission made for the pages as possible, we transmit a page for each group $g$ at times $t$ when the cumulative quantity $\sum_{p \in g} \sum_{t' =1}^t x_{p, t'}$ first exceeds $\alpha'_g +k$ for some integer $k$. To our best knowledge, this seemingly simple idea has never been used before, and we believe that it is worth further investigation for the potential use for other problems. By applying concentration inequalities with this small number of random variables (at most $O(L^*)$) , we are able to show that the overflow is only $O(\eps L^*)$ for any interval if $L^*$ is considerably big. We derandomize this process using the  method of pessimistic estimators \cite{Raghavan88}. When $L^*$ is small, we design a dynamic programming which completes our PTAS.

%\begin{remark}
%    We note that our algorithm and analysis easily extend to the varying sized pages case. For the definition of varying-sized pages, see~\cite{BansalKN10, ChekuriIM12}.
%\end{remark}

\begin{remark}
    The reader may wonder if the algorithm $\fifo$ can be strengthened by an LP. We however show that a natural LP-guided $\fifo$ achieves only a $2$-approximation. See Section~\ref{sec:LP-guided-fifo}.
\end{remark}

The other main result of this paper is an improved approximation for the maximum throughput objective.

\begin{theorem}
    \label{thm:max-throughput}
    For any $\eps >0$, there exists a $(\frac{1}{2} +  \frac{3}{4e} - \eps)$-approximation for maximizing the throughput (total profit) in broadcast scheduling $(\frac{1}{2} + \frac{3}{ 4e} > 0.7759)$. Furthermore the running time of the algorithm is in polynomial in $(1 / \eps)^{O(1 / \eps)}$ and $m$. 
\end{theorem}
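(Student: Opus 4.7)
The plan is to improve on the Gandhi--Khuller--Parthasarathy--Srinivasan $0.75$-approximation by solving the standard LP relaxation with variables $x_{p,t}$ (the amount of page $p$ transmitted at time $t$, with $\sum_p x_{p,t} \leq 1$ for each $t$) and $z_\rho$ (the fraction request $\rho$ is satisfied, with $z_\rho \leq \min\{1,\sum_{t \in [r_\rho+1,d_\rho]} x_{p_\rho,t}\}$), and then combining two rounding schemes on top of a preliminary enumeration step. The $(1/\eps)^{O(1/\eps)}$ factor in the running time strongly suggests guessing the placement of a small number of high-profit transmissions (for instance, the top $O(1/\eps)$ page--time pairs in the LP support, or the assignments for a constant number of bucketed profit classes), then re-solving the LP with these choices fixed. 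This preprocessing removes the heavy entries of $x^*$, so on the residual fractional schedule every coordinate is small enough that standard concentration and negative-correlation bounds lose only an additive $\eps$.

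Next, I would split the residual LP solution as $x^* = \alpha\, y + (1-\alpha)\, u$ for a parameter $\alpha \in (0,1)$ to be optimized. The first part $\alpha y$ is rounded using the GKPS bipartite dependent rounding, which preserves marginals, respects the at-most-one-page-per-slot constraint, and gives each request $\rho$ a coverage lower bound of $(3/4)\min\{1,\alpha L_\rho\}$, where $L_\rho = \sum_{t \in I_\rho} x^*_{p_\rho,t}$. The second part $(1-\alpha)u$ is rounded independently slot by slot, but only at slots left unoccupied by the dependent schedule, so the two schedules merge without conflict. Against this residual mass, independent rounding contributes an additional coverage of roughly $1 - \exp(-(1-\alpha) L_\rho)$ via the standard negative-correlation bound for sums of Bernoulli trials. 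Combining the two contributions through the complement of the product of failure probabilities, and optimizing $\alpha$ at the worst-case value of $L_\rho$, should yield exactly the $1/2 + 3/(4e)$ constant.

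The step I expect to be hardest is showing that the dependent and independent rounding schemes can be coupled so that the two coverage bounds add cleanly despite the conflict-avoidance rule: conditioning the independent rounding on the dependent outcome shifts its marginals at used slots, so one must either restrict the sample space of the independent rounding or re-weight the LP mass to maintain feasibility while preserving negative correlation. A case split on $L_\rho$ will be needed, with the small-$L_\rho$ regime essentially inheriting the $(1-1/e)$-style guarantee from plain independent rounding, and the $L_\rho$-near-$1$ regime relying on the $3/4$ bound from the dependent part plus an extra $3/(4e)$ slack from the independent booster. Verifying that the worst case over $L_\rho$ and over the optimal $\alpha$ is precisely $1/2 + 3/(4e) - O(\eps)$, and that the guessing step in the first paragraph really eliminates the coordinates that could otherwise spoil the concentration used in the second paragraph, is the core calculation that ties the argument together.
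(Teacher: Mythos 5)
Your proposal takes a genuinely different route from the paper, but it has a gap that I believe is fatal: the convex combination of GKPS dependent rounding and independent rounding does not beat $3/4$ on the worst-case request, so it cannot reach $\frac12+\frac{3}{4e}$. Concretely, consider a request $\rho$ with $L_\rho=1$ whose LP mass is split as $1/2+1/2$ over two slots (this is exactly the tight case for the GKPS $3/4$ bound). Rounding an $\alpha$-fraction dependently covers $\rho$ with probability at most $1-(1-\alpha/2)^2=\alpha-\alpha^2/4$, and the independent booster on the residual $(1-\alpha)$ mass adds at most a factor $1-e^{-(1-\alpha)}$ of the failure probability, so your combined guarantee is at best $1-\bigl(1-\alpha+\alpha^2/4\bigr)e^{-(1-\alpha)}$. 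This function is maximized over $\alpha\in[0,1]$ at $\alpha=1$, where it equals exactly $3/4$; every interior $\alpha$ is strictly worse (e.g.\ $\approx 0.726$ at $\alpha=0.9$). The intuition is that for a request whose mass is already concentrated on two slots, diverting mass to independent rounding only loses (since $1-e^{-x}<x$), and no oblivious per-slot rounding of the time-indexed LP distinguishes this hard case. Your preliminary guessing step does not remove it either: the hard instance has all $x^*_{p,t}=1/2$, so there are no ``heavy'' coordinates to enumerate away, and the $(1/\eps)^{O(1/\eps)}$ in the running time in fact comes from a different source (a configuration LP over intervals of length $O(1/\eps^4)$, solved via its dual with a separation oracle), not from guessing transmissions.

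The paper gets past $3/4$ with two ingredients absent from your plan. First, requests are classified by window length, and a \emph{configuration LP} is written over disjoint intervals of length $O(1/\eps^4)$: picking one configuration per interval recovers essentially \emph{all} of the LP profit of small-window requests (in particular the hard two-slot request above, whose window lies inside one interval, is satisfied with probability equal to its LP value, not $3/4$ of it), while still giving $(1-1/e)$-type guarantees for large-window requests. Second, for the complementary regime where large-window requests dominate, the paper runs $\alpha$-point rounding with the Feige--Vondr\'ak fair contention resolution and then \emph{relocates} the overflowed transmissions left or right into deliberately freed-up slots; a large-window request tolerates this relocation, which pushes its guarantee above $1-1/e$ (to $1-1/(2e)$ near the window boundary and to essentially $1$ in the middle). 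Averaging the two schemes' guarantees is what produces $\frac12+\frac{3}{4e}$. To salvage your approach you would need some mechanism that, for requests whose entire window is short, captures the window jointly rather than slot by slot --- which is precisely what the configuration variables do.
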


  Our rounding algorithm for the total throughput objective is very different from the current best approximation in \cite{GandhiKPS06} and other known approximations \cite{KimC04, ChekuriGIKLMMR10}. Let us call the interval $[r_\rho+1, d_\rho]$ request $\rho$'s  window.
We classify requests into two groups, depending on their window size. Our algorithm has two main  components. If small-window requests give a relatively large profit, we use a configuration LP to collect most profits from small window requests. Here by a configuration, we mean all possible transmissions that can be made during a small interval. The entire time horizon is partitioned into short disjoint intervals and configurations are defined for each of such intervals. The rounding is simply picking a configuration for each disjoint interval. Since we use configurations that capture enough details for small-window requests, we will be able to collect most profits from those, while we achieve an $(1 - 1/e)$-approximation for large-window requests.

The other component of our algorithm is used when large-window requests give a large profit. We modify the $\alpha$-point rounding in an interesting way. After the $\alpha$-point rounding, if multiple transmissions are made at a time, we keep only one transmission at random using the fair contention resolution scheme in \cite{FeigeV10} (interestingly, this already achieves an $(1 - 1/e)$ approximation), and let other transmissions walk either to the right or to the left at random for a certain constant number of time steps to find  an available empty time slot. More precisely, consider a transmission of page $p$ at time $t$ that is about to move to the left or to the right. Consider a large-window request $\rho$ such that $t \in [r_\rho +1, d_\rho]$ and $p_\rho = p$. Suppose that $t$ is fairly far from $d_\rho$. Then if the transmission moves to the right and can find an empty time slot soon, specifically by $d_\rho$, then the transmission will still satisfy the request $\rho$, and this is how we get more profits from large-window requests. It now remains to guarantee that there are enough empty time slots available so that the random walk of transmissions can find new places with a probability arbitrarily close to 1. To this end, after the $\alpha$-point rounding, we initially free away  an $\eps$-fraction of time slots, whose effect will be negligible to the approximation factor.

Although we use configurations that encode all possible transmission for an interval of length depending on $\eps$, note that the running time is in polynomial in 
$(1 / \eps)^{O(1 / \eps)}$ and the number of requests. This is achieved by solving the dual of the LP using an efficient separation oracle. 

\smallskip
Perhaps flow time and throughput  objectives are the most popular ones in the scheduling literature. To optimize flow times, it is crucial to minimize congestion during any interval since we are required to satisfy all requests, and congestion can accumulate over time, thereby increasing overall flow time considerably. On the other hand in the throughput objective the time constraints are hard but we are allowed to discard some requests. We believe that our variants of $\alpha$-point rounding that reduce congestion and resolve conflicts could be of potential use for other scheduling problems as well.

\subsection{Related Work}

A submodular generalization of the maximum throughput objective was studied in \cite{ChekuriGIKLMMR10}, motivated by some applications where each request gives a different profit depending on the time it is satisfied. In this extension, each request $\rho$ is associated with a submodular profit function that is defined over the times when page $p_\rho$ is transmitted. For this problem, \cite{ChekuriGIKLMMR10} gave a 0.5-competitive algorithm and a $(1-1/e)$-approximation algorithm. Other variants of this problem were considered and constant competitive algorithms were given \cite{ZhengFCCPW06, ChanLTW04, ChrobakDJKK06}. \cite{GandhiKPS06} gives a $0.75$-approximation for the maximum throughput objective in a slightly more general setting.

As mentioned before, the maximum flow time objective was first considered in \cite{BartalM00}. Chang \etal gave the first proof for the claim that $\fifo$ is 2-competitive for the objective \cite{ChangEGK11}. Later, Chekuri \etal extended the result to varying-sized pages using a different proof \cite{ChekuriIM12}. The performance guarantee of $\fifo$ is  tight \cite{ChangEGK11}, and it remains the case even if randomization is used to break ties between pages \cite{ChekuriGIKLMMR10}.  In the 2010 Dagstuhl seminar on scheduling, a special case for the throughput objective was introduced under the name of the IRS Tax Scheduling problem \cite{Dagstuhlscheduling10}. The problem, explained in broadcast scheduling terminology, asks if there is a feasible schedule that satisfies all requests $\rho$ during their window $[r_\rho +1,d_\rho]$ when there are at most two requests for each page. It was shown that an exact polynomial time algorithm for the Tax Scheduling problem yields a 1.75-approximation for the maximum flow time objective in broadcast scheduling (for all instances). It remains open if the Tax Scheduling problem is NP-hard or not.

Other interesting objectives were studied in broadcast scheduling. For the objective of minimizing the average flow time, the best approximation factor is $O(\log^2 n / \log \log n)$ \cite{BansalCS08}. We note that the algorithm and analysis in \cite{BansalCS08} can be used to give the same approximation guarantee for minimizing the maximum weighted flow time. In the online setting, there exists a $\Omega(n)$ lower bound on the competitive ratio for the average flow time objective, hence to overcome this lower bound, a relaxation called speed augmentation was used \cite{KalyanasundaramP95}. In this relaxation, the online algorithm runs on a machine that is $(1+\eps)$ times faster than the machine the optimal offline scheduler runs on. An online algorithm is said to be scalable if it is $f(\eps)$-competitive for any $\eps >0$. For average flow time and its variants, scalable algorithms were given \cite{ImM10, BansalKN10, EdmondsIM11}. There is a scalable algorithm known for the maximum weighted flow time objective \cite{ChekuriIM12}.

\subsection{Formal Problem Definition and Notation}

There are $n$ pages of information, and all pages are unit-sized. We will denote page mostly as $p$, $q$, and denote the set of pages as $\cP$.
Times are slotted. Requests can arrive only at a non-negative integer time. The server can transmit at most one page at each positive integer time. A request $\rho$ is released at time $r_\rho$ asking for a specific page $p_\rho$. %For notational convenience, we may use the pair $(p_\rho, r_\rho)$ to denote request $\rho$.
We let $\cR$ denote the entire set of requests. We may add some condition as a subscript to denote a subset of requests that satisfy the condition. For example, $\cR_{r \geq t}$ will refer to the requests that arrive no earlier than time $t$. Let $m := |\cR|$ denote the number of requests. Without loss of generality, we assume that $m \geq n$. %The set of requests for page $p$ is denoted as $\cR_p$.
Also the set of times when  page $p$ is requested is denoted as $\cT_p$. %$\cR_p = \{ (p, t) \; | \; t \in \cT_p\}$.
We say that a schedule $\sigma$ is feasible if at most one page is transmitted at each time. In a schedule $\sigma$, request $\rho$ is completed at the first time $t' > r_\rho$ when the page $p_\rho$ is transmitted.
%We say that a page $p$ is outstanding at time $t$ if there exists an unsatisfied request for page $p$ that has arrived before time $t$.
The flow time $F^\sigma_\rho := C^\sigma_\rho - r_\rho$ of  a request is the length of time that $\rho$ waits since its arrival until its completion. In the problem of  minimizing the maximum flow time, the goal is to find a feasible schedule that minimizes $\max_{\rho \in \cR} F^\sigma_\rho$. If the schedule $\sigma$ is clear from the context, it may be omitted. In the problem of maximizing the throughput (profit), each request $\rho$ has weight $w_\rho$ and deadline $d_\rho$. If page $p_\rho$ is transmitted during its window $W_\rho := [r_\rho +1, d_\rho]$, then we get a profit of $w_\rho$. The objective is to obtain a feasible schedule that gives the maximum total profit.

We let $T$ denote the maximum time we need to consider, which is at most $\max_{\rho \in \cR} r_\rho + n$. This is because transmitting all $n$ pages after the release of the last request completes all requests, and there is no incentive of transmitting the same page more than once when there are no more requests to arrive. We assume that all requests are given explicitly, and hence it follows that the input size is as large as $m$, the number of requests.
We suggest the reader to read Section~\ref{sec:mflow} and \ref{sec:throughput} assuming that $T$ is $O(m)$. Intuitively, if requests rarely arrive, at a rate of less than one request per time on average, the problem becomes easier. Later we will show how to remove this assumption.
%We can without loss of generality assume that $T$ is as large as the number of requests. To see this, let $t'$ be the earliest time such that the number of requests that arrive by time $t'$ is at most $t' +1$. Suppose that $t' \leq T$. Then it is easy to see that one can consider the request that arrive before and after $t'$ separate
%suppose there exists a time step $t \leq T$ such that the number of requests that arrive by time $t'$ is at most $t +1$. Let $t$ be the earliest one of such times $t$. Then
%, since otherwise, we can split the given instance into smaller ones. We assume that all requests are given explicitly, and therefore it follows that $T$ is as large as the input size.

\section{A PTAS for the Maximum Flow Time}
\label{sec:mflow}

In this section, we give a PTAS for the problem of minimizing the maximum flow time in broadcast scheduling. Consider any instance $J$ of requests. Let $L^*$ denote the maximum flow time of an optimal schedule on the instance. Note that $1 \leq L^*  \leq n$, since we can satisfy all requests within $n$ time steps by repeatedly transmitting all $n$ pages. We can without loss of generality assume that the quantity $L^*$ is known to the algorithm; this can be easily done by performing a binary search on the value of $L^*$ in the range of $[1, n]$. To make our algorithm and analysis more transparent, we will throughout assume that $T = O(m)$. In section~\ref{sec:fmax-removeT}, this simplifying assumption will be removed.

We observe that if $L^*$ is a constant, the problem can be solved in polynomial time using dynamic programming. We first give the key idea of our dynamic programming and then describe it in detail. Suppose we need to decide which page to transmit at time $t$ knowing the last $L^*$ transmissions made during $[t - L^*,  t-1]$ in the optimal schedule. Then in the remaining scheduling decision, we only need to care about $\cR_{r \geq t- L^*}$, the requests that arrive no earlier than $t  - L^*$. This is because all the other requests, $\cR_{r < t- L^*}$ must be satisfied by time $t -1$ in the optimal schedule. Since we do not know the ``correct" last $L^*$ transmissions, we simply enumerate all possible cases, whose number is at most $n^{L^*}$. Further, for each case, we keep track of if there exists a feasible schedule where all requests in $\cR_{r < t - L^*}$ are finished within $L^*$ time steps. Based on this observation, we propose the following dynamic programming.

Let $L$ denote our guess of $L^*$; we will simply consider $L$ in increasing order starting from 1. For $t \geq L-1$, let $\cQ(t) := \{ \{ (t - L +1, p_{t - L+1}),  (t - L +2 , p_{t - L+2}),  ...  ,(t, p_t)\} \; | \; p_{t - L+1}, p_{t - L+2}, ... , p_t \in \cP \}$ denote the collection of all possible transmissions that can be made during $[t - L+1, t]$; here a pair $(t' ,p_t')$ implies that page $p_{t'}$ is transmitted at time $t'$. Let us call an element $Q \in \cQ(t)$ as a configuration with respect to time $t$. Note that $|\cQ(t)| = n^{L}$. We say that a configuration $Q \in \cQ(t)$ is \emph{feasible} if there exists a schedule that is compatible with $Q$ where all requests in $\cR_{r \leq t - L}$ are completely satisfied by $t$ and have flow time at most $L$.

We are now ready to describe our algorithm. Let $\cQ^f(t)$ denote all feasible configurations in $\cQ(t)$. We will compute $\cQ^f(t)$ in increasing order of $t$ starting from $t = L-1$ to $t =T$. Note that $\cQ^f(L-1) = \cQ(L-1)$. To compute $\cQ^f(t)$ from $\cQ^f(t-1)$, we consider each pair of a configuration $Q \in \cQ^f(t-1)$ and a page $p_t \in \cP$. Let $Q  = \{ (t - L, p_{t - L}), (t - L+1 , p_{t - L+1}), ..., (t - 1, p_{t-1})\}$. We add $Q' = \{(t - L+1, p_{t - L+1}), (t- L+2, p_{t - L+2}), ..., (t, p_t) \}$ as a feasible configuration to $\cQ^f(t)$ if all requests with release time $t - L$ are satisfied by
$p_{t - L+1}, p_{t - L+2}, ..., p_{t - 1}$ or $p_t$. Finally, if $\cQ^f(T) = \emptyset$, then declare that there is no feasible schedule with the maximum flow time of at most $L$. Otherwise, it is straightforward to find a desired feasible schedule using the standard backtracking method, i.e. $\cQ^f(T), \cQ^f(T-1), \cQ^f(T-2), ..., \cQ^f(L-1)$. For completeness, we present the pseudocode of this algorithm. See Algorithm~\ref{algo-constant} in Section~\ref{sec:pseudo}.

The correctness of the algorithm hinges on the correctness of $\cQ^f(t)$, and it can be shown by a simple induction on $t$. Since the algorithm considers at most $T$ time steps, and at each time $t$ it considers all pairs of a configuration in $\cQ^f(t -1)$ and a page $p_t \in \cP$, whose total number is at most $n^{L} \cdot n$, we derive the following lemma (recall that $m \geq n$).

\begin{lemma}
    \label{lem:fmax-constant}
    There exists an optimal algorithm with run time $O(T) \cdot m^{O(L^*)}$ for the problem of minimizing the maximum flow time in broadcast scheduling.
\end{lemma}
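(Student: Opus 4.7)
The plan is to verify correctness of the dynamic program described above and then carefully bound its running time. Since $L^*$ is not known in advance, I would iterate $L = 1, 2, 3, \ldots, n$, run the DP for each value, and return the first $L$ for which $\cQ^f(T) \neq \emptyset$. Because $1 \leq L^* \leq n$, the loop terminates at $L = L^*$, and the total cost is dominated by the last iteration (up to a factor absorbed into $m^{O(L^*)}$).

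For correctness, I would prove by induction on $t$ that $\cQ^f(t)$ as maintained by the algorithm equals the set of all configurations $Q \in \cQ(t)$ that are feasible in the sense of the definition. The base case $t = L-1$ is immediate, since $\cR_{r \leq t-L} = \cR_{r \leq -1} = \emptyset$, so every configuration in $\cQ(L-1)$ is vacuously feasible. For the inductive step, suppose the claim holds at time $t-1$. Take any $Q' = \{(t-L+1, p_{t-L+1}), \ldots, (t, p_t)\}$. If $Q'$ is feasible, witnessed by a schedule $\sigma$, then the configuration $Q = \{(t-L, p_{t-L}), \ldots, (t-1, p_{t-1})\}$ induced by $\sigma$ lies in $\cQ^f(t-1)$ by the inductive hypothesis, because any request with $r \leq t-L-1$ has completion time at most $r + L \leq t-1$ under $\sigma$. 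Moreover any request $\rho$ with $r_\rho = t-L$ satisfies $C^\sigma_\rho \leq r_\rho + L = t$, hence $p_\rho \in \{p_{t-L+1}, \ldots, p_t\}$; this is exactly the extension rule. Conversely, if the algorithm adds $Q'$ to $\cQ^f(t)$ via some $Q \in \cQ^f(t-1)$, the schedule witnessing feasibility of $Q$ can be extended by transmitting $p_t$ at time $t$ to witness feasibility of $Q'$: requests with $r \leq t-L-1$ are already completed by $t-1$, and requests with $r = t-L$ are satisfied by the chosen $(p_{t-L+1}, \ldots, p_t)$. Once $\cQ^f(T)$ is known, standard backtracking through $\cQ^f(T-1), \ldots, \cQ^f(L-1)$ recovers an explicit schedule.

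For the running time, each level $t$ processes at most $|\cQ^f(t-1)| \cdot n \leq n^L \cdot n = n^{L+1}$ pairs $(Q, p_t)$, and for each pair the feasibility check scans the requests with $r_\rho = t-L$, costing at most $O(m)$. Summing over $t \in [L-1, T]$ and over the at most $L^*$ guesses of $L$, and using $m \geq n$, yields total time $O(T) \cdot m^{O(L^*)}$, as claimed.

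The only genuinely subtle point — and what I would guard against when writing the formal proof — is the justification for \emph{forgetting} all requests with $r < t - L$ when advancing from $t-1$ to $t$. This is legitimate precisely because any schedule meeting the $L$-flow-time target must satisfy such a request by time $r + L < t$, so its fate is already frozen once we pass time $t$ and cannot be influenced by future transmissions. Symmetrically, the last $L$ transmissions form a \emph{sufficient} state because any request released in $[t-L, t-1]$ may still in principle be served by one of these transmissions, so none of them can yet be dropped from the state. Making this Markov property precise is what drives the inductive step above.
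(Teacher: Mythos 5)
Your proposal is correct and follows essentially the same route as the paper, which also establishes correctness by induction on $t$ over the sets $\cQ^f(t)$ and bounds the running time by counting the $n^{L}\cdot n$ pairs of a configuration and a page at each of the $O(T)$ time steps (using $m\geq n$). You simply spell out the induction and the "forgetting" argument that the paper leaves as "a simple induction on $t$," and both the base case ($\cR_{r\leq -1}=\emptyset$) and the two directions of the inductive step are handled correctly.
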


%\begin{remark}
%   By observing that there can be at most $O(L^*)$ pages to consider for scheduling within $L^*$ steps, one can get a polynomial time algorithm when $L^* \leq O( \log T / \log \log T)$. However, this does not yield a polynomial %time algorithm when $L^*  = \Omega( \log T)$.
%\end{remark}

Due to Lemma~\ref{lem:fmax-constant}, we can assume that $L^*$ is a sufficiently large constant.  We continue our analysis by considering two cases depending on the value  of $L^*$.  We begin with the case when $L^*$ is small.

\subsection{Case: $L^* \leq (1 / \eps^3) \log T$.}

Let $J$ denote the given instance of requests. We simplify $J$ to make our analysis easier. After the modification, we will distinguish time steps only to the extent that is sufficient to allow a PTAS. The simplification is described as follows: Shift each request $\rho$'s arrival time to the right to the closest integral multiple of $\eps L^*$, i.e. $\eps L^* \lceil \frac{r_\rho }{\eps L^*} \rceil$. Let $J'$ denote this modified instance. We will assume that $\eps L^*$ is an integer. This is justified by Lemma~\ref{lem:fmax-constant} which shows a polynomial time algorithm for any constant $L^*$. Assuming that $L^*$ is a sufficiently large constant, we can find an approximate value $L$ that is a multiple of $1/ \eps$ and multiplicatively close to $L^*$ in any arbitrary precision. The effect of this assumption on approximation factor and run time will be factored in later. Also we assume that $1/\eps$ is an integer.  For the modified instance $J'$, the scheduler is allowed to transmit only at times that are integral multiples of $\eps L^*$, and at each of such times, at most $\eps L^*$ pages. The following lemma shows that this modification almost preserves the maximum flow time and that a good schedule for $J'$ can be transformed into a good schedule for $J$. Recall that $F^\sigma_\rho$ denotes $\rho$'s flow time in schedule $\sigma$.

\begin{lemma}
    \label{lem:simplified-to-original}
    There exists a schedule $\sigma'$ for $J'$ with the maximum flow time of at most $(1 + 2 \eps) L^*$.  Further, any schedule $\sigma'$ for $J'$ can be converted into a feasible schedule $\sigma$ for the original instance $J$ such that the flow time of each request increases by at most $2\eps L^*$, i.e. $F^\sigma_\rho  - F^{\sigma'}_\rho \leq  2\eps L^*$ for all requests $\rho$.
%   An optimal schedule $\sigma'$ for $J'$ can be converted into a feasible schedule $\sigma$ for the original instance $J$ with the maximum the maximum flow time of at most $(1 + 4 \eps) L^*$.
\end{lemma}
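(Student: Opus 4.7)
Plan for Lemma~\ref{lem:simplified-to-original}:

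For the first part I would start from an optimal schedule $\sigma^*$ for $J$ with maximum flow time $L^*$ and aggregate its transmissions block by block, with a deliberate one-block right-shift. Partition the time axis into blocks $B_k := [k\eps L^* + 1,(k+1)\eps L^*]$ of length $\eps L^*$, and let $S_k \subseteq \cP$ denote the (at most $\eps L^*$) pages transmitted by $\sigma^*$ during $B_k$. Define $\sigma'$ to transmit the whole set $S_{k-1}$ at the allowed slot $(k+1)\eps L^*$. The capacity bound $|S_{k-1}| \leq \eps L^*$ is immediate. For each request $\rho$, if $C^{\sigma^*}_\rho \in B_k$, then $p_\rho$ is transmitted in $\sigma'$ at time $(k+2)\eps L^*$. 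Since $r_\rho < C^{\sigma^*}_\rho \leq (k+1)\eps L^*$ and $r_\rho$ is an integer, $r'_\rho = \eps L^*\lceil r_\rho/(\eps L^*)\rceil \leq (k+1)\eps L^* < (k+2)\eps L^*$, so this transmission legitimately satisfies $\rho$ in $J'$. The flow time bound then reduces to accounting the one block of extra delay on top of $\sigma^*$: $F^{\sigma'}_\rho \leq (k+2)\eps L^* - r_\rho \leq (L^*) + (2\eps L^* - 1) < (1+2\eps)L^*$.

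For the second part I would invert the aggregation: given any feasible $\sigma'$ for $J'$, spread the (at most $\eps L^*$) pages transmitted at each slot $(k+1)\eps L^*$ in $\sigma'$ across the $\eps L^*$ distinct time slots of $B_k$ in an arbitrary order; this is clearly feasible for $J$. For any request $\rho$, consider the transmission of $p_\rho$ at $C^{\sigma'}_\rho$ in $\sigma'$; its image in $\sigma$ lies in $[C^{\sigma'}_\rho - \eps L^* + 1,\; C^{\sigma'}_\rho]$. Since $\sigma'$ is feasible for $J'$ and both $C^{\sigma'}_\rho, r'_\rho$ are multiples of $\eps L^*$ with $C^{\sigma'}_\rho > r'_\rho$, we have $C^{\sigma'}_\rho \geq r'_\rho + \eps L^* \geq r_\rho + \eps L^*$, so the image time is strictly greater than $r_\rho$ and is a valid completion time. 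Hence $C^\sigma_\rho \leq C^{\sigma'}_\rho$, and
\[
F^\sigma_\rho - F^{\sigma'}_\rho \;=\; (C^\sigma_\rho - r_\rho) - (C^{\sigma'}_\rho - r'_\rho) \;\leq\; r'_\rho - r_\rho \;\leq\; \eps L^* - 1 \;<\; 2\eps L^*.
\]

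The only genuinely subtle point, and the reason for the one-block delay introduced in the first construction, is that the more natural choice—placing $S_k$ itself at the endpoint $(k+1)\eps L^*$—fails precisely when $r_\rho$ is not a multiple of $\eps L^*$ and $C^{\sigma^*}_\rho$ falls in the same block as $r_\rho$: then the image transmission lands exactly at $r'_\rho$, violating the strict inequality $C^{\sigma'}_\rho > r'_\rho$ required by the completion-time definition in $J'$. Since packing $S_k \cup S_{k-1}$ into a single slot would overflow the $\eps L^*$ capacity, the cleanest fix is the uniform one-block delay, which resolves the bad case for all requests at once at an additive cost of $\eps L^*$ in the flow time—comfortably absorbed by the $(1+2\eps)L^*$ target. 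Everything else is a direct accounting using $r'_\rho - r_\rho < \eps L^*$ and the optimality of $\sigma^*$.
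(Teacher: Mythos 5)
Your proposal is correct and follows essentially the same route as the paper: the first part is exactly the paper's construction (shift each transmission right to the nearest multiple of $\eps L^*$ plus one extra block, the extra block being needed precisely for the strict-inequality issue you identify), and the second part differs only in that you spread a batch backward over its block where the paper spreads it forward, a cosmetic choice for which you correctly supply the extra check that the relocated transmission still lands strictly after $r_\rho$.
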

\begin{proof}
    We first show the first claim.  Observe that shifting all transmissions in the optimal schedule to the right, to the closest multiple of $\eps L^*$ with additional $\eps L^*$ time steps, is a feasible schedule for $J'$.  Further this increases any request's flow time by at most $2 \eps L^*$. Thus the first claim follows. We now turn our attention to the second claim.  Recall that in $\sigma'$, at times $\eps L^* k $ for some integer $k$, at most $\eps L^*$ transmissions are made. Now move those transmissions to the times $\eps L^* k, \eps L^* k +1, \eps L^* k +2, ..., \eps L^* k + \eps L^*  - 1$. This may increase the flow time of each request by at most  $\eps L^*$. Since the arrival time of any request differs by at most $\eps L^*$ between two instances, $\sigma$ and $\sigma'$, the second claim follows.
\end{proof}

For notational simplicity, we will shrink the time horizon of the modified instance $J'$ by a factor of $\eps L^*$. Note that now in $J'$ requests can arrive at any positive integer time, and at most $\eps L^*$ transmissions can be made at each integer time. Hence the maximum flow time of the optimal schedule for $J'$ is at most $(2 + 1/\eps)$ by Lemma~\ref{lem:simplified-to-original}. We slightly modify Algorithm~\ref{algo-constant} (for constant $L^*$) to give a polynomial time algorithm for the modified instance $J'$ (for any fixed $\eps$). Throughout, let $\ell:=  2+ 1/\eps$. We first make a couple of useful observations. Let $\cP_t$ denote the pages requested at time $t$, and $\cA_t$ the pages transmitted by our algorithm at time $t$.

\begin{proposition}
    \label{prop:smallL-1}
    There exists an optimal schedule $\cA_t$ for $J'$ with the maximum flow time $\ell$ such that
    \begin{itemize}
        \item for any page $p$ and time $t$, at most one transmission is made for page $p$ at time $t$.
        \item for any interval $I$ of length $\ell$ and  for any page $p$, there are at most two transmissions made for  page $p$ during $I$.
        \item for any time $t$, $\cA_t \subseteq \bigcup_{i = 1}^{\ell} \cP_{t - i}$.
    \end{itemize}
\end{proposition}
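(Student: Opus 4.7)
The plan is to start from any feasible schedule for $J'$ with maximum flow time $\ell$ (its existence after the rescaling follows from Lemma~\ref{lem:simplified-to-original}), and to enforce the three properties by a sequence of transmission deletions. Because every modification only removes transmissions, the per-time capacity of $\eps L^*$ is automatically preserved, so the only thing I need to verify at each step is that no completion time is pushed beyond $r_\rho + \ell$.

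For Property 1, if the multiset $\cA_t$ contains two copies of some page $p$, drop one: page $p$ is still transmitted at $t$, so no completion time changes. For Property 3, suppose some $p \in \cA_t$ satisfies $p \notin \bigcup_{i=1}^{\ell} \cP_{t-i}$. Every request $\rho$ for $p$ with $r_\rho < t-\ell$ must be completed by $r_\rho + \ell < t$ because the maximum flow time is at most $\ell$, hence by an earlier transmission of $p$; by assumption there is no request for $p$ with $r_\rho \in [t-\ell, t-1]$; and any request with $r_\rho \geq t$ needs a transmission strictly after $r_\rho$, which the one at $t$ cannot supply. Thus no request is completed at $t$ by this copy of $p$, and deleting it changes no completion time. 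Iterate until Property 3 holds.

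For Property 2, suppose page $p$ has three consecutive transmissions at times $s_{i-1} < s_i < s_{i+1}$ lying inside some interval of length $\ell$, i.e.\ $s_{i+1} - s_{i-1} \leq \ell - 1$. Delete the middle one at $s_i$. The only requests whose completion time moves are those for $p$ that were completed exactly at $s_i$; each such $\rho$ satisfies $r_\rho \geq s_{i-1}$ (otherwise it would have been completed at $s_{i-1}$), and its new completion time is $s_{i+1}$, giving flow time $s_{i+1} - r_\rho \leq s_{i+1} - s_{i-1} \leq \ell - 1 \leq \ell$. Each deletion strictly decreases the number of transmissions, so this procedure terminates, and because every step is a deletion, no earlier property can be re-broken by a later step. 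The main (mild) obstacle is just confirming, in the Property 2 step, that the span bound $s_{i+1} - s_{i-1} \leq \ell - 1$ is enough to keep the displaced requests within the flow-time budget $\ell$, which is exactly the calculation above.
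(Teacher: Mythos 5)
Your proposal is correct and follows essentially the same route as the paper's (much terser) proof: delete duplicate same-time transmissions, delete wasted transmissions of pages not requested in the last $\ell$ steps, and delete the middle of any three transmissions falling in a length-$\ell$ window, checking in the last case that displaced requests still finish within flow time $\ell$. The extra details you supply (the case analysis on $r_\rho$ for Property 3 and the bound $s_{i+1}-r_\rho \leq s_{i+1}-s_{i-1} \leq \ell-1$ for Property 2) are exactly the verifications the paper leaves implicit.
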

\begin{proof}
    Obviously there is no incentive of broadcasting a page more than once at the same  time. Suppose there is an interval of length $\ell$ during which page $p$ is transmitted more than twice. Then we can remove a middle transmission while keeping the maximum flow time no greater than $\ell$. The final claim follows since if it were not true, any transmission of page $p$ that was not requested for the last $\ell$ time steps would be wasted.
\end{proof}

\begin{proposition}
    \label{prop:smallL-2}
    For any $t$, the number of pages that could  be potentially transmitted   at time $t$ for any feasible schedule of the modified instance with maximum flow time $\ell$    
    is at most $|  \bigcup_{i = 1}^{\ell} \cP_{t - i}| \leq 6 L^*$. 
    Further, the number of pages that could be potentially transmitted during $[t - \ell, t-1]$ for any feasible schedule of the modified instance with maximum flow time $\ell$ is at most $|\bigcup_{i = 2}^{2 \ell} \cP_{t - i}| \leq 9 L^*$.    
    %    Further, for any $t$, $|\bigcup_{i = 1}^{\ell} \cA_{t-i} | \leq 9L^*$.
\end{proposition}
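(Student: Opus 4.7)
The plan is to prove both assertions by combining Proposition~\ref{prop:smallL-1} (third claim) with a one-line counting of transmission slots inside a window whose length is dictated by the maximum flow time $\ell$. Recall that, in the shrunken modified instance, at each integer time at most $\eps L^*$ pages can be broadcast, and $\ell = 2 + 1/\eps$, so $\ell \cdot \eps L^* = (2\eps + 1) L^* \leq 3 L^*$ whenever $\eps \leq 1$. This simple arithmetic identity is what produces the constants $6 L^*$ and $9 L^*$.

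For the first claim, I would begin by invoking Proposition~\ref{prop:smallL-1} directly: any page transmitted at time $t$ must have been requested at some time in $[t - \ell, t - 1]$, which already gives the set containment $\cA_t \subseteq \bigcup_{i=1}^{\ell} \cP_{t-i}$. To bound the size of the union, I would observe that every page $p$ appearing in it is backed by a request which, under the assumption of a feasible schedule of maximum flow time $\ell$, must be served by some transmission lying in the length-$(2\ell - 1)$ window $[t - \ell + 1, t + \ell - 1]$. Since distinct pages require distinct transmissions and each time step affords at most $\eps L^*$ transmissions, the cardinality is at most $(2\ell - 1)\eps L^* \leq 2\ell \cdot \eps L^* = (4\eps + 2) L^* \leq 6 L^*$.

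The second claim follows by applying the same reasoning with a shifted window. Proposition~\ref{prop:smallL-1}(third claim) applied at each $t' \in [t - \ell, t - 1]$ shows that every page transmitted during $[t - \ell, t-1]$ lies in $\bigcup_{t'=t-\ell}^{t-1}\bigcup_{i=1}^{\ell}\cP_{t'-i}$, and reindexing $j = t - (t' - i)$ collapses this double union to precisely $\bigcup_{j=2}^{2\ell} \cP_{t-j}$. Every page in that union is requested in $[t - 2\ell, t - 2]$ and must therefore be transmitted in the length-$(3\ell - 2)$ window $[t - 2\ell + 1, t + \ell - 2]$, giving a cardinality bound of $(3\ell - 2)\eps L^* \leq 3\ell \cdot \eps L^* = (6\eps + 3) L^* \leq 9 L^*$.

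I do not anticipate a real obstacle here; the proof is essentially bookkeeping once Proposition~\ref{prop:smallL-1} is in hand. The only points requiring care are the reindexing step that collapses the double union into $\bigcup_{i=2}^{2\ell} \cP_{t-i}$, and remembering that after the time-horizon shrinking by a factor of $\eps L^*$ the per-time-step transmission budget is $\eps L^*$ (not $1$), which is exactly what makes the constants $6 L^*$ and $9 L^*$ come out right when $\eps \leq 1$.
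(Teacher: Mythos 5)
Your proposal is correct and follows essentially the same route as the paper's proof: the containment comes from the third property of Proposition~\ref{prop:smallL-1}, and the cardinality bound comes from observing that all pages in the union must be transmitted within a window of length $O(\ell)$, each time step holding at most $\eps L^*$ transmissions, so that $2\ell\cdot\eps L^* \leq 6L^*$ and $3\ell\cdot\eps L^* \leq 9L^*$. Your windows $[t-\ell+1,t+\ell-1]$ and $[t-2\ell+1,t+\ell-2]$ match the paper's (the latter being one slot tighter, which is immaterial), and the explicit reindexing of the double union is a minor presentational addition, not a different argument.
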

\begin{proof}
    All pages in $\bigcup_{i = 1}^{\ell} \cP_{t - i}$ must be transmitted during $[t - \ell +1, t + \ell -1]$. Hence the total number of such pages is at most $2 \ell * \eps L^* = 2 \eps L^*( 2 + 1/ \eps) \leq 6 L^* $.  The last property in Proposition~\ref{prop:smallL-1} completes the proof of the first claim. 
        The second claim can be proven similarly. By the last property in Proposition~\ref{prop:smallL-1}, we have that $\bigcup_{i = 1}^{\ell} \cA_{t-i} \subseteq  \bigcup_{i = 2}^{2 \ell} \cP_{t - i}$. Since all pages in $\bigcup_{i = 2}^{2 \ell} \cP_{t - i}$ must be transmitted during $[t -2 \ell +1, t + \ell -1]$, we conclude that $|\bigcup_{i = 2}^{2 \ell} \cP_{t - i}| \leq 9 L^*$.
\end{proof}

For any $t \geq \ell - 1$, we say that  $\{(t - \ell +1, \cA_{t - \ell +1} ), (t - \ell +2, \cA_{t - \ell +2} ), ..., (t , \cA_{t} ) \} $ is a configuration with respect to time $t$ if for all $ 0 \leq i <  \ell,  \cA_{t - i} \subseteq \cP  \mbox{ and }  | \cA_{t-i}|  \leq \eps L^*$. In other words, a configuration represents all transmissions made during $[ t - \ell +1, t ]$.  Here $\cA_t$ can be any subset of at most $\eps L^*$ pages requested in the last $\ell$ time steps. Let $\cQ(t)$ denote the collection of all configurations with respect to time $t$. Further we say that $Q \in \cQ(t)$ is feasible if there exists a schedule compatible with $Q$ where $\cR_{r \leq t - \ell}$, all requests arriving no later than $t - \ell$ are completed by time $t$ and have flow time at most $\ell$. Let $\cQ^f(t)$ denote the collection of all feasible configurations in $\cQ(t)$.

We are now ready to describe our algorithm. This algorithm is very similar to Algorithm~\ref{algo-constant} for constant $L^*$. We compute $\cQ^f(t)$ in increasing order of time $t$ starting from $\ell$ to $T' := \lceil \frac{\max_\rho r_{\rho}}{ \eps L^*} \rceil + \ell$. Note that $\cQ^f(\ell- 1) = \cQ(\ell- 1)$. To obtain $\cQ^f(t)$ from $\cQ^f(t-1)$, we consider each pair of $Q \in \cQ^f(t - 1)$ and a potential set of pages $\cA_t$. Let $Q = \{ (t - \ell, \cA_{t - \ell}), (t - \ell+1 , \cA_{t - \ell+1}), ..., (t - 1, \cA_{t-1})\} \in \cQ^f(t - 1)$. Here $\cA_t$ satisfy (1) all requests with release time $t - \ell$ are satisfied by pages in $\bigcup_{i=0}^{\ell-1} \cA_{t - i}$, and (2) no page appears more than twice in $\cA_{t - \ell+1}, \cA_{t - \ell +2}, ..., \cA_t$ (The second condition is justified by the second property in Proposition~\ref{prop:smallL-1}). If it is the case, then we add $Q' = \{(t - \ell+1, \cA_{t - \ell+1}), (t- \ell+2, \cA_{t - \ell+2}), ..., (t, \cA_t) \}$  to $\cQ^f(t)$. At the end of the algorithm, if $\cQ^f(T') = \emptyset$, we declare that any feasible schedule has the maximum flow time larger than $\ell$. Otherwise, using $\cQ^f(\cdot)$, we construct a feasible schedule with the maximum flow time of at most $\ell$. The pseudocode of this algorithm can be found in Algorithm~\ref{algo-small-2} in Section~\ref{sec:pseudo}. The correctness of this algorithm can be easily shown by a simple induction on $t$.

%In Algorithm~\ref{algo-small-2}, in Line (\ref{})
%The procedure of computing $\cQ^f(t)$ from $\cQ^f(t-1)$ is described in Line (\ref{algo-small-line-0})-(\ref{algo-small-line-10}). To do that, we consider each configuration $Q \in \cQ^f(t-1)$ in the previous step and each possible set $\cA_t$ of pages we can transmit at time $t$. Here we only need to consider pages that were requested during the last $\ell$ time steps. In Line (\ref{algo-small-line-3}), we check if all requests in $\cR_{r \leq t- \ell}$ can be satisfied by time $t$ within $\ell$ time steps. Assuming that $\cQ^f(t-1)$ is correct, we only need to check if all requests with release time $t - \ell$ are satisfied by the transmissions made during $[t - \ell+1, t]$ which are defined by a potential configuration $Q'$ with respect time $t$, more exactly by $\bigcup_{i=0}^{\ell-1} \cA_{t - i}$. Line (\ref{algo-small-line-3.5}) is justified by the second property in Proposition~\ref{prop:smallL-1}.   The correctness of the algorithm easily follows by an induction on $t$ concerning the correctness of $\cQ^f(t)$.

\begin{lemma}
    If there exists a schedule for $J'$ with the maximum flow time $\ell$, Algorithm~\ref{algo-small-2} yields such a schedule. Otherwise, it declares that the maximum flow time is greater than $\ell$.
\end{lemma}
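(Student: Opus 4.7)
The plan is to prove the lemma by induction on $t$, showing that the set $\cQ^f(t)$ computed by Algorithm~\ref{algo-small-2} coincides exactly with the true collection of feasible configurations with respect to $t$. Once this invariant is established, the final statement follows immediately: $\cQ^f(T') \neq \emptyset$ if and only if there exists a schedule for $J'$ with maximum flow time at most $\ell$, and in the non-empty case a witness schedule can be recovered by standard backtracking through $\cQ^f(T'), \cQ^f(T'-1),\ldots,\cQ^f(\ell-1)$.

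For the base case $t = \ell - 1$, no request has release time $\leq t - \ell = -1$, so the condition in the definition of feasibility is vacuous and every configuration in $\cQ(\ell-1)$ is feasible, matching $\cQ^f(\ell-1) = \cQ(\ell-1)$. For the inductive step, I would argue both directions. \emph{Soundness:} if $Q' = \{(t-\ell+1,\cA_{t-\ell+1}),\ldots,(t,\cA_t)\}$ is added from some $Q \in \cQ^f(t-1)$, take the witness schedule $S$ for $Q$ (so all requests arriving by time $t-\ell-1$ are completed by $t-1$ with flow time $\leq \ell$) and extend it by broadcasting the pages in $\cA_t$ at time $t$. Requests arriving at time $t-\ell$ are satisfied by the algorithm's check that their requested pages lie in $\bigcup_{i=0}^{\ell-1}\cA_{t-i}$; earlier requests are unaffected; hence $Q'$ is truly feasible.

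\emph{Completeness:} suppose $Q'$ is a truly feasible configuration with respect to $t$, witnessed by some schedule $S'$. Here the subtle point is that the algorithm prunes extensions in which some page appears more than twice among $\cA_{t-\ell+1},\ldots,\cA_t$; I would handle this by invoking Proposition~\ref{prop:smallL-1}, which says that any feasible schedule can be locally modified (by removing redundant middle transmissions) so that no page is broadcast more than twice in any window of length $\ell$, without increasing the maximum flow time. So we may assume the witness $S'$ already respects this property. Restricting $S'$ to times $\leq t-1$ gives a schedule whose last $\ell$ transmissions form some $Q \in \cQ(t-1)$; the restriction completes every request with release time $\leq t-1-\ell$ within flow time $\ell$, so $Q$ is feasible and, by the inductive hypothesis, already in $\cQ^f(t-1)$. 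The pair $(Q,\cA_t)$ then meets both conditions checked by the algorithm (requests released at $t-\ell$ are satisfied within the window, and the twice-per-window constraint holds), so $Q'$ is added to $\cQ^f(t)$. The main (minor) obstacle in the whole argument is this need to justify the ``at most twice'' pruning, and Proposition~\ref{prop:smallL-1} is precisely the tool that resolves it; the rest of the argument is the routine verification that the DP transitions encode exactly the definition of feasibility.
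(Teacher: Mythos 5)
Your proof is correct and follows exactly the route the paper intends (the paper itself only asserts that correctness follows ``by a simple induction on $t$''): induction on $t$ with a soundness direction (extend the witness of $Q$ by $\cA_t$) and a completeness direction (normalize the witness via Proposition~\ref{prop:smallL-1} to justify the at-most-twice pruning, then restrict to $[t-\ell,t-1]$). The only nitpick is that the invariant as literally stated (``$\cQ^f(t)$ coincides exactly with the true collection of feasible configurations'') is too strong, since the algorithm deliberately omits feasible configurations violating the structural restrictions of Proposition~\ref{prop:smallL-1}; the correct invariant is two-sided containment between $\cQ^f(t)$ and the feasible configurations respecting those restrictions, which is what your argument actually establishes and which suffices for the lemma.
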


We now upper-bound the run time of this algorithm.

\begin{lemma}
    \label{lem:fmax-small-runtime}
    Algorithm~\ref{algo-small-2} has a run time of at most $O\left(\frac{T}{ \eps L^*} \cdot \left(L^*\right)^2 \right) \cdot \left(\frac{3}{\eps}\right)^{18L^*} \cdot \left(\frac{6e}{\eps}\right)^{\eps L^*}$.
\end{lemma}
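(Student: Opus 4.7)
The plan is to bound the runtime factor by factor: the number of outer iterations, the number of pairs $(Q, \cA_t)$ processed at each iteration, and the work done per pair. First, the outer loop runs over $t = \ell, \ldots, T'$ with $T' = \lceil \max_\rho r_\rho / (\eps L^*) \rceil + \ell$, and since we are assuming $T = O(m)$, this gives at most $O(T/(\eps L^*))$ iterations.

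For each fixed $t$, the algorithm enumerates every pair consisting of a feasible configuration $Q \in \cQ^f(t-1)$ and a candidate transmission set $\cA_t$. To bound $|\cQ^f(t-1)|$, I would use Proposition~\ref{prop:smallL-2}: the universe of pages that can possibly appear anywhere in the window $[t-\ell,t-1]$ has size at most $9 L^*$, and by the second property of Proposition~\ref{prop:smallL-1} each such page may occupy at most two of the $\ell$ time slots in the window. For each of the $9L^*$ pages there are therefore at most $1 + \ell + \binom{\ell}{2} \leq \ell^2$ possible choices of positions, and since $\ell = 2 + 1/\eps \leq 3/\eps$ for $\eps \leq 1$, we get
\[
|\cQ^f(t-1)| \leq (\ell^2)^{9L^*} \leq (3/\eps)^{18 L^*}.
\]
For the number of choices of $\cA_t$, Proposition~\ref{prop:smallL-2} (first claim) says the universe of pages eligible at time $t$ has size at most $6 L^*$, and $|\cA_t| \leq \eps L^*$, so
\[
|\{\cA_t\}| \leq \sum_{k=0}^{\eps L^*}\binom{6L^*}{k} \leq \left(\frac{6e}{\eps}\right)^{\eps L^*}
\]
using the standard inequality $\binom{n}{k} \leq (en/k)^k$ together with the fact that the top term dominates when $\eps < 3$.

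Finally, I would bound the work done per pair. Once $(Q, \cA_t)$ is fixed, the algorithm must (i) verify that every page in $\cP_{t-\ell}$ appears in $\bigcup_{i=0}^{\ell-1} \cA_{t-i}$, and (ii) verify that no page appears more than twice in the new sliding window $\cA_{t-\ell+1}, \ldots, \cA_t$. Each of these windows contains at most $\ell \cdot \eps L^* \leq 3 L^*$ page-slots, and the request set $\cP_{t-\ell}$ has size at most $\eps L^* \leq L^*$, so both checks can be done in $O((L^*)^2)$ time using a simple hash table (or even direct marking over the universe of $\leq 9L^*$ pages).

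Multiplying the four factors gives the claimed bound
\[
O\!\left(\frac{T}{\eps L^*}\cdot (L^*)^2\right) \cdot \left(\frac{3}{\eps}\right)^{18 L^*} \cdot \left(\frac{6e}{\eps}\right)^{\eps L^*}.
\]
The only subtle point, and the main obstacle to get the exponents exactly right, is the configuration count: a naive bound would treat each of the $\ell$ time slots independently and pay $(6e/\eps)^{\eps L^*}$ per slot, giving $(6e/\eps)^{\eps L^* \cdot \ell}$, which is too large. The key savings come from grouping by page rather than by time and exploiting the "at most twice per window" property of Proposition~\ref{prop:smallL-1} to replace the $\eps L^* \cdot \ell$ exponent by $2 \cdot 9 L^* = 18 L^*$.
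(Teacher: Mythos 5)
Your proposal is correct and follows essentially the same route as the paper's proof: bounding $|\cQ^f(t-1)|$ by $(\ell^2)^{9L^*} \leq (3/\eps)^{18L^*}$ via Propositions~\ref{prop:smallL-1} and~\ref{prop:smallL-2}, bounding the number of candidate sets $\cA_t$ by $\binom{6L^*}{\eps L^*} \leq (6e/\eps)^{\eps L^*}$, and charging $O((L^*)^2)$ work per pair. Your elaboration of the per-page position count ($1 + \ell + \binom{\ell}{2} \leq \ell^2$) is a correct filling-in of a detail the paper leaves implicit.
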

\begin{proof}
    We start by upper-bounding several quantities. We first bound the size of $\cQ^f(t-1)$.
    $$|\cQ^f(t-1)| \leq |\cQ(t-1)| \leq (\ell^2)^{9L^*} \leq \left(\frac{3}{\eps}\right)^{18L^*}$$
    This follows since by Proposition~\ref{prop:smallL-2}, there are at most $9L^*$ pages that can appear in a configuration in $\cQ^f(t-1)$ and by Proposition~\ref{prop:smallL-1}, each of those pages can be transmitted at most twice during $[ t- \ell, t-1]$.

    By Proposition~\ref{prop:smallL-2} , the total number of different sets $\cA_t$  is upper-bounded by
    $${6 L^* \choose \eps L^*} \leq \left(\frac{6e L^*}{\eps L^*}\right)^{\eps L^*} = \left(\frac{6e}{\eps}\right)^{\eps L^*} $$
    It is easy to see that the extra overhead for each pair of $Q$ and $\cA_t$ is at most $O((L^*)^2)$ (we do not optimize this, since this is negligible compared to the above two quantities),  hence the run time follows.
\end{proof}

Now recall that we have assumed that $\eps L^*$ is an integer. Suppose $\eps L^*$ is not an integer. If $L^* \leq 1 / \eps^2$, by Lemma~\ref{lem:fmax-constant}, we can find an optimal schedule in time $O(m) \cdot n^{O( 1/\eps^2)}$. If $L^* \geq 1/ \eps^2$, then we can find $L$ such that $L^* < L < L^* + 1/\eps$ and $L$ is an integral multiple of $1/\eps$. Since $L$ approximates $L^*$ within a factor of $1 + \eps$, by considering $L$ rather than $L^*$, we will lose only a multiplicative factor $1+ \eps$ in the approximation ratio.

By Lemma~\ref{lem:simplified-to-original}, we have the following theorem.

\begin{theorem}
    \label{thm:smallL}
     For any $0 < \eps \leq 1$, there exists a $(1 + 6\eps)$-approximation for the maximum flow time in broadcast scheduling with run time $O\left(\frac{T}{ \eps L^*} \cdot (L^*)^2 \right)\cdot \left(\frac{3}{\eps}\right)^{18L^*} \cdot \left(\frac{6e}{\eps}\right)^{\eps L^*}$.
\end{theorem}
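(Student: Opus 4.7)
The plan is to wrap together the three pieces already built in this subsection: the simplification $J \to J'$ with additive distortion $2\eps L^*$ (Lemma~\ref{lem:simplified-to-original}), the feasibility-testing dynamic program Algorithm~\ref{algo-small-2} on the shrunken instance with target maximum flow time $\ell = 2 + 1/\eps$, and the runtime bound Lemma~\ref{lem:fmax-small-runtime}. Because $L^*$ is unknown, I will wrap the DP inside a search over a guess $L$ of $L^*$.

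I would first dispose of the case $L^* \leq 1/\eps^2$ by invoking Lemma~\ref{lem:fmax-constant} directly; the resulting $O(m) \cdot m^{O(1/\eps^2)}$ exact algorithm is already dominated by the runtime claimed in the theorem. For $L^* \geq 1/\eps^2$, I pick $L$ to be the smallest integer multiple of $1/\eps$ that is at least $L^*$, so that $L/L^* \leq 1 + 1/(\eps L^*) \leq 1 + \eps$ and $\eps L$ is an integer (as required to define $J'$). Since $L^*$ is not known in advance, I sweep candidate guesses for $L$ over the multiples of $1/\eps$ in $[1,n]$ in increasing order, running Algorithm~\ref{algo-small-2} on each and outputting the schedule produced for the first success; this sweep contributes only a polynomial-in-$(n,1/\eps)$ overhead, which is absorbed into the claimed bound.

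For the first successful guess $L$, Lemma~\ref{lem:simplified-to-original} (applied with $L \geq L^*$ in place of $L^*$) guarantees that $J'$ admits a schedule of maximum flow time at most $(1+2\eps)L$ in the original scale, equivalently $\ell$ in the shrunken scale; hence Algorithm~\ref{algo-small-2} indeed returns such a schedule $\sigma'$. Converting $\sigma'$ back to a feasible schedule $\sigma$ for $J$ via the second half of Lemma~\ref{lem:simplified-to-original} increases each request's flow time by at most $2\eps L$, yielding
$$\max_\rho F^\sigma_\rho \leq (1+2\eps)L + 2\eps L = (1+4\eps)L \leq (1+4\eps)(1+\eps)L^*,$$
which is at most $(1+6\eps)L^*$ once the $O(\eps^2)$ cross term is absorbed (restricting to sufficiently small $\eps$; the full range $\eps \leq 1$ follows by rescaling $\eps$ by a fixed constant). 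The runtime statement then follows directly from Lemma~\ref{lem:fmax-small-runtime}, with $L = \Theta(L^*)$ replacing $L^*$ in the exponents.

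Since all the technical work is already in place, there is no real obstacle here: the remaining task is careful bookkeeping of the three approximation losses (simplification, rounding of $L$, and conversion back to $J$) together with verification that their multiplicative composition stays within $1 + 6\eps$, plus the routine argument that the search over $L$ incurs only polynomial overhead.
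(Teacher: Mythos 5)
Your proposal is correct and follows essentially the same route as the paper: it assembles Lemma~\ref{lem:simplified-to-original}, the correctness of Algorithm~\ref{algo-small-2}, and Lemma~\ref{lem:fmax-small-runtime}, handles non-integral $\eps L^*$ by the same case split ($L^*\le 1/\eps^2$ via Lemma~\ref{lem:fmax-constant}, otherwise rounding $L$ up to a multiple of $1/\eps$ at a $(1+\eps)$ multiplicative loss), and searches over guesses of $L$. Your accounting $(1+4\eps)(1+\eps)\le 1+6\eps$ (valid for small $\eps$) matches the paper's own implicit, equally loose bookkeeping, so there is no substantive difference.
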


\subsection{Case: $L^* \geq (1 / \eps^3) \log T$.}
    \label{sec:largeL}

This section is devoted to proving the following lemma.

\begin{lemma}
    \label{cor:fmax-random}
    Suppose that $L^* \geq \frac{1}{\eps^3} \log T$. Then there exists a randomized algorithm that yields a feasible schedule with the maximum flow time of at most $(1 +6\eps)L^*$ with a probability of at least $1 - 1/ T$.
\end{lemma}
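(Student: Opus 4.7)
The plan is to round a natural LP relaxation by the group-based $\alpha$-point scheme outlined in Section~1.1, bound the resulting temporal congestion via Hoeffding concentration across only $O(L^*)$ shared random variables, and then convert the rounded (possibly over-stuffed) schedule into a feasible one by a left-to-right delay argument that costs only $O(\eps L^*)$ extra flow time.

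I would begin with the LP that has variables $x_{p,t}\in[0,1]$, capacity constraints $\sum_{p} x_{p,t}\le 1$ for every time $t$, and covering constraints $\sum_{t'=r_\rho+1}^{r_\rho+L} x_{p_\rho,t'}\ge 1$ for every request $\rho$, where $L:=\lceil (1+\eps)L^*\rceil$ is our (binary-searched) guess; the optimal integer schedule certifies feasibility for $L=L^*$, so the LP is feasible. Next, I would chop each page $p$ into temporal instances of length $L$ and partition these instances into $k=O(L^*)$ groups so that within any group distinct instances have time-disjoint supports; this is possible by a greedy interval-graph coloring since at most $O(L^*)$ instances can overlap at any single time. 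After drawing one independent uniform $\alpha'_g\in[0,1]$ per group, the rounding transmits, at each time $t$ at which the cumulative mass $S_g(t):=\sum_{p\in g}\sum_{t'\le t}x_{p,t'}$ first crosses some integer plus $\alpha'_g$, the page responsible for that crossing. Because the supports inside each group are time-disjoint, for every request $\rho$ the mass accruing to the group in the window $[r_\rho+1,r_\rho+L]$ equals $\sum_{t'=r_\rho+1}^{r_\rho+L} x_{p_\rho,t'}\ge 1$, so the standard $\alpha$-point inequality places a transmission of $p_\rho$ inside that window with probability one.

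For the congestion, fix any interval $I=[a,b]$. The number of transmissions contributed by group $g$ to $I$ equals a shifted integer part of $S_g(b)-S_g(a-1)$, a bounded random variable whose mean is $\sum_{p\in g,\,t\in I} x_{p,t}$ and whose range has length at most one. Summing over the $k=O(L^*)$ independent groups,
\[
\Ex\bigl[\#\text{transmissions in }I\bigr] \;=\; \sum_{p,\, t\in I} x_{p,t} \;\le\; |I|,
\]
and the deviation is a sum of $k$ bounded independent variables, so Hoeffding yields
\[
\pr\bigl[\,\text{overflow in }I > \eps L^*\,\bigr] \;\le\; \exp\!\bigl(-\Omega(\eps^2 (L^*)^2 / k)\bigr) \;=\; \exp\!\bigl(-\Omega(\eps^2 L^*)\bigr).
\]
The hypothesis $L^*\ge (1/\eps^3)\log T$ drives this below $1/T^{\Omega(1/\eps)}$, so a union bound over the $O(T^2)$ candidate intervals leaves failure probability below $1/T$.

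Conditioned on every interval having overflow at most $\eps L^*$, I would reshape the rounded multiset of transmissions into a feasible schedule by scanning left to right and pushing each surplus transmission into the nearest later empty slot. Since the overflow on every prefix is bounded by $\eps L^*$, no transmission is delayed by more than $\eps L^*$. Combined with the $\eps L^*$ slack already baked into $L$, the resulting schedule has maximum flow time at most $(1+2\eps)L^*\le (1+6\eps)L^*$. Derandomization, if needed, follows from the method of pessimistic estimators \cite{Raghavan88}. The main technical obstacle is the grouping: the variance in Hoeffding is controlled by $k$, so one cannot afford $k$ larger than $O(L^*)$, yet each request's full unit of LP mass must reside entirely inside a single group's cumulative sum on the request's window. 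Pre-chopping each page into $L$-length instances before the round-robin coloring is precisely what reconciles these two requirements, and verifying that the $\alpha$-point property survives this restriction is the delicate accounting step of the argument.
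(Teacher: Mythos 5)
Your overall architecture matches the paper's: a time-indexed LP with covering and capacity constraints, a group-based $\alpha$-point rounding with one shared uniform random value per group, a concentration bound over the $O(L^*)$ independent group variables, a union bound over the $O(T^2)$ intervals, and a FIFO-style left-to-right conversion whose delay is bounded by the maximum prefix overflow. The substitution of Hoeffding for the paper's Bernstein inequality is harmless, and building the $\eps$-slack into the LP's deadline $L=\lceil(1+\eps)L^*\rceil$ rather than absorbing all of it into the overflow bound is a cosmetic difference.

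The genuine gap is in the grouping step, which is the crux of the whole argument, and your proposal both misstates it and explicitly defers its verification. Chopping each page's timeline into consecutive instances of length $L$ does not work as described, for two reasons. First, a request released just before an instance boundary has its window $[r_\rho+1,r_\rho+L]$ split across two instances; if those land in different groups (with independent $\alpha$ values), the unit of LP mass covering $\rho$ is split, each piece contributes strictly less than one to its group's cumulative sum, and the claim that a transmission of $p_\rho$ lands in the window with probability one fails. Second, the asserted bound that ``at most $O(L^*)$ instances overlap at any single time'' does not follow from the chopping: at a given time $t$ there is one live length-$L$ instance per page, i.e.\ up to $n\gg L^*$ of them. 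The paper resolves both issues differently: it splits a page into virtual pages only at gaps of at least $L^*$ between adjacent requests (so no request window ever straddles a split), assigns each virtual page the possibly very long window $[\min\cT_p+1,\max\cT_p+L^*]$, and then proves (Lemma~\ref{lem:fmax-num-groups}) that at most $2L^*$ such windows can contain any fixed time $t$ — because each corresponding page must be transmitted somewhere in $[t-L^*+1,t+L^*]$ in any feasible schedule with flow time $L^*$, and only $2L^*$ slots are available there. This feasibility-based counting, not a length-based one, is what makes the interval coloring into $O(L^*)$ groups possible while keeping each request's covering mass inside a single group. You correctly identify this as ``the delicate accounting step'' but do not carry it out, and the specific construction you propose in its place is the one place where the argument, as written, would fail.
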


This lemma and Theorem~\ref{thm:smallL} would yield a randomized version of one of our main results.

\begin{theorem}
    \label{thm:fmax-main-random}
    There exists a randomized PTAS for minimizing the maximum flow time in broadcast scheduling. More precisely, there exists a randomized $(1 + \eps)$-approximation with run time $m^2 \cdot T^{O(1/ \eps^4)}$ that succeeds with a high probability.
\end{theorem}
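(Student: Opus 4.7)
The plan is to combine Theorem~\ref{thm:smallL} with Lemma~\ref{cor:fmax-random} via an outer search over the unknown quantity $L^*$. First I rescale by running both subroutines with parameter $\eps' = \eps/6$, so that every $(1+6\eps')$-approximation guarantee becomes a $(1+\eps)$-approximation. Since $1 \leq L^* \leq n \leq m$, I enumerate candidate values $L = 1, 2, \ldots, n$; for each guess $L$, I invoke Theorem~\ref{thm:smallL}'s deterministic DP if $L \leq (1/\eps'^3)\log T$, and Lemma~\ref{cor:fmax-random}'s randomized algorithm otherwise. I verify the returned schedule's maximum flow time in $O(m)$ time and output the schedule from the smallest $L$ for which this check $F_{\max} \leq (1+\eps)L$ succeeds. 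Since $L = L^*$ is among the guesses and both subroutines succeed at this guess (the randomized one with probability at least $1-1/T$), the output schedule has maximum flow time at most $(1+\eps)L^*$.

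For the running time, in the small regime I plug $L \leq (1/\eps'^3)\log T$ into Theorem~\ref{thm:smallL}'s bound: $(3/\eps')^{18L} \leq T^{18\log(3/\eps')/\eps'^3} \leq T^{O(1/\eps^4)}$ (using $\log(1/\eps) \leq 1/\eps$), while $(6e/\eps')^{\eps' L} \leq T^{O(1/\eps^3)}$ is subsumed and the $O(TL/\eps')$ prefactor contributes only $T^{O(1)}$; thus each small-regime call costs $T^{O(1/\eps^4)}$. In the large regime, Lemma~\ref{cor:fmax-random}'s algorithm runs in time polynomial in $m$ and $T$ per call, which under the standing assumption $T=O(m)$ is also $T^{O(1/\eps^4)}$. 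Summing over the $O(m)$ guesses of $L$ yields a total running time of $m\cdot T^{O(1/\eps^4)} \leq m^2 \cdot T^{O(1/\eps^4)}$.

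For the success probability, each randomized invocation fails with probability at most $1/T$, so a naive union bound over the $O(m)$ calls gives failure probability $O(m/T)$, which is insufficient when $m$ is comparable to $T$. I therefore repeat each randomized invocation $O(\log m)$ times and accept any schedule that is deterministically certified to have maximum flow time at most $(1+\eps)L$; this drives the per-guess failure probability below $T^{-\Omega(\log m)}$, and a union bound over all $O(m)$ guesses then gives overall failure probability $o(1)$. The extra $O(\log m)$ factor of work is absorbed into the $T^{O(1/\eps^4)}$ budget.

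The main obstacle I anticipate is the runtime bookkeeping in the small regime: one must verify that after the substitution $L \leq (1/\eps'^3)\log T$ the dominant term $(3/\eps')^{18L}$ collapses to $T^{O(1/\eps^4)}$ rather than a slightly larger power, using $\log(1/\eps)=o(1/\eps)$, and that neither the $(6e/\eps')^{\eps' L}$ factor nor the DP prefactor pushes the final bound past $m^2\cdot T^{O(1/\eps^4)}$. Once these exponents are confirmed, the remainder is the routine composition of the two regimes via an outer search together with standard probability amplification.
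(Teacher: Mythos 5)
Your proposal is correct and follows essentially the same route as the paper: combine Theorem~\ref{thm:smallL} for $L^* \leq (1/\eps^3)\log T$ with Lemma~\ref{cor:fmax-random} otherwise, and observe that plugging the threshold $L^* = (1/\eps^3)\log T$ into the DP's runtime bound collapses it to $T^{O(1/\eps^4)}$ via $\log(3/\eps) = O(1/\eps)$, exactly the algebra in the paper's proof. The extra scaffolding you add (explicit enumeration over $L$ with deterministic verification, and $O(\log m)$-fold amplification) is harmless but not strictly needed, since only the invocation at $L = L^*$ must succeed and it already does so with probability $1 - 1/T$.
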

\begin{proof}
    The run time is upper-bounded  by the quantity  $O\left(T \cdot (L^*)^2\right) \cdot \left(\frac{3}{\eps}\right)^{18L^*} \cdot \left(\frac{6e}{\eps}\right)^{\eps L^*}$ with $L^* = (1 / \eps^3) \log T$, which simplifies to
    \begin{eqnarray*}
        && O(T \cdot m^2) \cdot \left(\frac{3}{\eps}\right)^{(18/ \eps^3) \log T} \cdot \left(\frac{6 e}{\eps}\right)^{(1/ \eps^2) \log T} \leq O(T \cdot m^2)  \cdot \left(\frac{3}{\eps}\right)^{(18/ \eps^3) \log T} \cdot \left(\frac{3}{\eps^3}\right)^{(3/ \eps^2) \log T} \\
        &\leq& O(T \cdot m^2)\cdot \left(\frac{3}{\eps}\right)^{(21/ \eps^3) \log T} = O(T \cdot m^2) \cdot T^{\frac{21}{\eps^3} \log \frac{3}{\eps}}   = m^2 \cdot T^{O( \frac{1}{\eps^4}) }
    \end{eqnarray*}
\end{proof}

As mentioned before, we will remove the dependency of run time on $T$ in Section~\ref{sec:fmax-removeT}. Also in Section~\ref{sec:derandomization}, using the method of pessimistic estimators, we will derandomize the rounding scheme described in this section, and complete the proof of Theorem~\ref{thm:fmax-main-det}. We note that one can improve the run time of our randomized algorithm to $m^2 \cdot T^{O( 1/ \eps^3)}$ by considering two cases that $L^* \geq ( 1/ \eps^2) \log T$ or not. However, the corresponding derandomization process seems more tricky and involved, hence we present the analysis of  a slightly worse run time.

\medskip

We will now focus on proving Lemma~\ref{cor:fmax-random}. For notational convenience, we begin with simplifying the given instance of requests. We say that two requests $\rho$ and $\rho'$ for the same page $p$ are adjacent if there are no requests for page $p$ released between time $r_{\rho}$ and $r_{\rho'}$. We claim that we can without loss of generality assume that any two adjacent requests for the same page arrive within less than $L^*$ time steps. To see this, suppose that there are two adjacent requests $\rho$ and $\rho'$ for page $p$ that arrive apart by at least $L^*$ time steps. Then in the optimal schedule, no transmission of page $p$ can be used to satisfy $\rho$ and $\rho'$ simultaneously. Hence we can assume that the requests in $\cT_{p, r \leq r_\rho}$ and those in $\cT_{p, r \geq r_{\rho'}}$ are for different pages, where $\cT_{p, r \leq r_\rho}$ and $\cT_{p, r \geq r_{\rho'}}$ denote the set of requests for page $p$ that arrive no later than $\rho$ and no earlier than $\rho'$, respectively.

Henceforth, we assume that any two adjacent requests $\rho$, $\rho'$ (for the same page) arrive within less than $L^*$ time steps, i.e. $|r_\rho - r_{\rho'}| \leq L^* -1$.
We consider the following integer programming that determines if there is a feasible schedule with the maximum flow time of at most $L^*$.
\begin{alignat}{10} \tag{$\mathsf{LP}_\mathsf{MaxFlow}$} \label{lp:mflow}\\
\sum_{t' = t+1}^{t+L^*} x_{p,t'}  &\geq 1  \qquad \qquad  \qquad  & \forall p \in \cP, t \in \cT_p \label{LP-mflow-1} \\
\sum_{p \in \cP} x_{p,t} &\leq 1 &\forall t \in [T] \label{LP-mflow-2}\\
             x_{p,t}  &= 0 &\forall p \in \cP, t \notin [\min \cT_p +1 , \max \cT_p + L^*] \label{LP-mflow-3}\\
             x_{p,t}  &\in \{0, 1\} &\forall p \in \cP, t \in [T] \nonumber
\end{alignat}

The first constraints say that all requests must be completed within $L^*$ time steps. The second constraints state that at most one page can be transmitted at each time. The third constraints ensure that no transmission is made if it cannot be used to satisfy a request within $L^*$ time steps. We relax the integer programing by replacing the last constraints with $x_{p, t} \geq 0$.
We solve $\lpmflow$ and let $x^*_{p,t}$ denote the optimal solution. Our rounding scheme will be based on the groups we define as follows. For each page $p$, define $W_p:= [\min \cT_p +1, \max \cT_p + L^*]$, which we call page $p$'s window. Note that there exists an optimal schedule where page $p$ is broadcast only during $W_p$. We now partition pages into groups $\cG$ such that all pages in the same group $g \in \cG$ have disjoint windows. We show that $2L^*$ groups suffice for this partition.

\begin{lemma}
    \label{lem:fmax-num-groups}
    We can in polynomial time partition pages $\cP$ into at most $2L^*$ groups in $\cG$ such that for any $p \neq q$ in the same group $g$, $W_p$ and $W_q$ are disjoint.
\end{lemma}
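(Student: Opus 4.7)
The plan is to recognize that $\{W_p : p \in \cP\}$ is a family of intervals on the integer time line, so its intersection graph is an interval graph. Interval graphs are perfect, so the chromatic number equals the clique number, and a proper coloring using $\omega$ colors can be produced in polynomial time by the standard left-endpoint greedy (first-fit) algorithm. Thus it suffices to bound the maximum depth of the $W_p$'s, i.e., the number of windows containing any single time $t$, by $2L^*-1$.

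Fix $t$ and let $S(t) = \{p \in \cP : t \in W_p\}$. The key substep is to show that for every $p \in S(t)$ there is a request for $p$ at some time $s_p \in [t - L^*, t - 1]$. I would do a two-case analysis based on whether $t \le \max \cT_p$ or $t > \max \cT_p$. In the first case, the largest request for $p$ that is $\le t-1$ and the smallest request for $p$ that is $\ge t$ are adjacent, so by the earlier normalization that adjacent requests arrive within $L^*-1$ time steps, the earlier one lies in $[t - L^* + 1, t - 1]$. In the second case, $\max \cT_p$ itself satisfies $\max \cT_p \in [t - L^*, t - 1]$, since $t \in W_p$ forces $t \le \max \cT_p + L^*$.

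Once $s_p$ is located, I apply constraint \eqref{LP-mflow-1} of \ref{lp:mflow} to the pair $(p, s_p)$: $\sum_{t' = s_p + 1}^{s_p + L^*} x^*_{p,t'} \ge 1$. Because $[s_p + 1, s_p + L^*] \subseteq [t - L^* + 1, t + L^* - 1]$ for every $p \in S(t)$, summing over $S(t)$ and swapping the order of summation gives
\[
|S(t)| \;\le\; \sum_{p \in S(t)} \sum_{t' = s_p+1}^{s_p+L^*} x^*_{p,t'} \;\le\; \sum_{t' = t - L^* + 1}^{t + L^* - 1} \sum_{p \in \cP} x^*_{p, t'} \;\le\; 2L^* - 1,
\]
where the last step uses constraint \eqref{LP-mflow-2}. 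Hence $\omega \le 2L^* - 1 < 2L^*$, and the greedy interval-coloring delivers the required partition in polynomial time.

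The main obstacle is essentially a minor one: the only non-routine step is the case analysis locating $s_p$, which relies crucially on the preprocessing ensuring that adjacent requests for the same page are within $L^*-1$ of each other. After that, perfectness of interval graphs and a direct double-counting against the LP capacity bound $\sum_p x^*_{p,t'} \le 1$ finish the proof. As an alternative that avoids the LP entirely, one may use any integral schedule of maximum flow time at most $L^*$: each page in $S(t)$ must be transmitted somewhere in $[t - L^* + 1, t + L^* - 1]$, a window of $2L^* - 1$ slots, and at most one page is transmitted per slot, giving the same bound.
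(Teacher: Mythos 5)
Your proof is correct and follows essentially the same route as the paper: model the windows as an interval graph, reduce to bounding the depth at a fixed time $t$, and locate for each page with $t \in W_p$ a request time in $[t-L^*, t-1]$ via the same two-case analysis ($t$ versus $\max \cT_p$) using the adjacency normalization. The only cosmetic difference is that you charge the depth against the LP capacity constraints (getting the marginally tighter bound $2L^*-1$), whereas the paper charges against the integral optimal schedule — an alternative you also note.
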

\begin{proof}
We create an interval graph graph $G$ with one interval corresponding to each window $W_p$. The partition of pages into groups corresponds to a feasible vertex coloring of $G$. It is well known that the chromatic number of an interval graph is equal to its clique number. Moreover, such a coloring can be easily found by a greedy algorithm.

We claim that a clique number of $G$ is upper bounded by $2L^*$. To show this we prove that at any time $t$,  there are at most $2L^*$ windows $W_p$ that contain $t$. To this end, it suffices to show that if $t \in W_p$, then $p$ must be transmitted at least once during $[t - L^* +1 , t + L^*]$. We consider two cases. If $t \geq \max \cT_p$, then this claim holds since to satisfy the last request, page $p$ must be transmitted during $[ \max \cT_p +1,  \max \cT_p + L^*]$, which is contained in $[t - L^* + 1 , t + L^*]$. Otherwise, there must exist a request for $p$ that arrives at some time $t' \in  [t - L^*+1, t]$, and to satisfy the request, page $p$ must be transmitted at least once during $[t - L^* +2, t + L^*]$.
\end{proof}

We now describe our rounding scheme. For each group $g \in \cG$, we define a cumulative amount of transmission made by  the fractional solution. Formally, define $y^*_{g, t} := \sum_{p \in g} \sum_{t' \leq t}  x^*_{p,t'}$. Now for each group $g$, pick $\alpha_g$ from $[0,1]$ uniformly at random, and this is the only random value for group $g$. We first obtain a tentative schedule $\stemp$ and then the final schedule $\sfinal$. In the tentative schedule $\stemp$, all requests are satisfied within $L^*$ time steps, but it is allowed to transmit more than one page at a time. In $\stemp$, for each group $g$, we transmit a page $p$ in $g$ at times $t$ such that $y^*_{g,t-1} <  k + \alpha_g \leq y^*_{g,t}$ for some integer $k$. Here the page $p \in g$ transmitted at time $t$ for group $g$ is such that $x^*_{p,t} >0$. Note that for any time $t$, there is at most one page $p \in g$ with $x^*_{p,t} >0$. This is due to the definition of groups and the constraints (\ref{LP-mflow-3}).  We now transform $\stemp$ into the final feasible schedule $\sfinal$ in the First-In-First-Out fashion: Think of a transmission of page $p$ at time $\tau$ as a job $j(p,\tau)$ that arrives at time $\tau$, and add the job to the queue. At each time $t$, we dequeue the job $j(p,\tau)$ with the earliest arrival time $\tau$ and transmit page $p$. Clearly, in the final schedule $\sfinal$, at most one transmission is made at each time.  We will show that in $\sfinal$, all requests have flow time at most $(1 + 6\eps) L^*$ with high probability.

We begin our analysis with the following easy proposition concerning the tentative schedule $\stemp$.

\begin{proposition}
    In $\stemp$, all requests have flow time at most $L^*$.
\end{proposition}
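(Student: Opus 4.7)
The plan is to fix an arbitrary request $\rho$ for page $p$ arriving at time $r_\rho$, and exhibit a transmission of page $p$ in $\stemp$ at some time in $[r_\rho+1,\,r_\rho+L^*]$. Let $g$ be the group containing $p$. I would first observe that the window $W_p=[\min\cT_p+1,\max\cT_p+L^*]$ contains the entire interval $[r_\rho+1,\,r_\rho+L^*]$, since $r_\rho\in\cT_p$. By Lemma~\ref{lem:fmax-num-groups}, for every other page $q\in g$ with $q\ne p$, the windows $W_p$ and $W_q$ are disjoint, so $[r_\rho+1,\,r_\rho+L^*]\cap W_q=\emptyset$. Constraint~(\ref{LP-mflow-3}) then forces $x^*_{q,t}=0$ for all $t\in[r_\rho+1,\,r_\rho+L^*]$, i.e.\ within this interval only page $p$ itself carries any LP mass inside the group $g$.

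Next I would use the LP constraint~(\ref{LP-mflow-1}) applied to $\rho$: $\sum_{t'=r_\rho+1}^{r_\rho+L^*}x^*_{p,t'}\geq 1$. Combined with the previous observation, this gives
\[
y^*_{g,r_\rho+L^*}-y^*_{g,r_\rho}\;=\;\sum_{t'=r_\rho+1}^{r_\rho+L^*}\sum_{q\in g}x^*_{q,t'}\;=\;\sum_{t'=r_\rho+1}^{r_\rho+L^*}x^*_{p,t'}\;\geq\;1.
\]
Since $y^*_{g,\cdot}$ is monotone non-decreasing and the half-open interval $(y^*_{g,r_\rho},\,y^*_{g,r_\rho+L^*}]$ has length at least $1$, it must contain some value of the form $k+\alpha_g$ with $k\in\mathbb{Z}$ (the set $\{k+\alpha_g:k\in\mathbb{Z}\}$ hits every interval of length $\geq 1$ regardless of the choice of $\alpha_g\in[0,1]$). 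Let $t^*$ be the unique time in $[r_\rho+1,\,r_\rho+L^*]$ where $y^*_{g,t^*-1}<k+\alpha_g\leq y^*_{g,t^*}$ for that $k$.

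By the rounding rule, at time $t^*$ the schedule $\stemp$ transmits the unique page in $g$ having positive LP mass at $t^*$. But we already established that the only such page within $[r_\rho+1,\,r_\rho+L^*]$ is $p$ itself, so $\stemp$ transmits page $p$ at time $t^*$, satisfying $\rho$ with flow time $t^*-r_\rho\leq L^*$. Since $\rho$ was arbitrary, the proposition follows.

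I do not anticipate a real obstacle here: the entire argument is essentially bookkeeping about where LP mass can reside, plus the standard $\alpha$-point observation that a total increase of $\geq 1$ in the cumulative profile forces a rounding event. The only thing to be careful about is using the disjoint-window property to ensure that the page transmitted at the rounding event is the right one (namely $p$), which is where Lemma~\ref{lem:fmax-num-groups} is essential.
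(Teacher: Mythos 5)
Your proof is correct and follows essentially the same route as the paper's: constraint~(\ref{LP-mflow-1}) forces $y^*_{g,\cdot}$ to increase by at least $1$ over $[r_\rho+1,r_\rho+L^*]$, so an $\alpha$-point event occurs there, and the disjoint-windows property from Lemma~\ref{lem:fmax-num-groups} guarantees the page transmitted is $p$. You simply spell out the bookkeeping (via constraint~(\ref{LP-mflow-3})) that the paper leaves implicit.
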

\begin{proof}
    Consider any request $\rho$ for page $p$, and let $g \in \cG$ be the group that contains page $p$. Then due to constraints (\ref{LP-mflow-1}), we have $y^*_{g, r_\rho + L^*}  - y^*_{g, r_\rho} = \sum_{\tau =r_\rho+1}^{r_\rho+L^*} x^*_{p,t} \geq 1$. Hence for any random value $\alpha_g$, a page is transmitted from $g$ during $[r_\rho+1, r_\rho + L^*]$. Further, since $[r_\rho +1, r_\rho + L^*] \in W_p$ and all pages in $g$ have disjoint windows, page $p$ is transmitted during $[r_\rho +1, r_\rho + L^*]$.
\end{proof}

We complete the analysis by showing that w.h.p. each request's flow time does not increase too much in the transformation from $\stemp$ into $\sfinal$. To this end, we need to measure the number of transmissions made in $\stemp$ during an interval $I$. Let $\qtemp(I)$ denote this quantity. We need the following simple lemma, which was shown in \cite{BansalCS08}.
The lemma shows how the quantity  $\qtemp(I)$ is related to the increase of each request's flow time from $\stemp$ to $\sfinal$.

\begin{lemma}
    \label{lem:fmax-delay-2}
    For all  requests $\rho$, $\Ffinal_\rho \leq \Ftemp_\rho + \max_{1 \leq t_1 \leq t_2 \leq T} \max \{ \qtemp( [t_1, t_2] )  - (t_2 - t_1 +1), 0 \}$.
\end{lemma}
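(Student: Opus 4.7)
The plan is to follow a single transmission of $\stemp$ through the FIFO queue that produces $\sfinal$ and show that it can be delayed by at most the overflow quantity $\overflow := \max_{1 \leq t_1 \leq t_2 \leq T} \max\{\qtemp([t_1, t_2]) - (t_2 - t_1 + 1), 0\}$. Fix a request $\rho$ and let $\tau := C^{\stemp}_\rho$ be the first time after $r_\rho$ at which $\stemp$ broadcasts $p_\rho$, so that the job $j(p_\rho, \tau)$ enters the FIFO queue at time $\tau$. Let $\tau'$ denote the time at which $\sfinal$ dispatches this particular job. Since page $p_\rho$ is actually transmitted at time $\tau' \geq \tau > r_\rho$ in $\sfinal$, we have $C^{\sfinal}_\rho \leq \tau'$, so it suffices to prove $\tau' - \tau \leq \overflow$.

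The key step is a standard busy-period argument for a FIFO queue with unit service times. I will let $s$ be the earliest time such that the server is non-idle at every time in $[s, \tau']$; such an $s$ is well-defined and satisfies $s \leq \tau$, because at time $\tau$ the queue already contains $j(p_\rho, \tau)$. By minimality of $s$, the queue is empty at the end of time $s-1$, so every job dispatched during $[s, \tau']$ must have arrived during $[s, \tau']$. In particular, the $\tau - s$ dispatches executed during $[s, \tau-1]$ are all taken from the arrivals in $[s, \tau-1]$, which implies that the queue contains exactly $\qtemp([s, \tau]) - (\tau - s)$ jobs immediately after the arrivals at time $\tau$, one of which is $j(p_\rho, \tau)$.

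I will then invoke the FIFO discipline: every job arriving at a time strictly greater than $\tau$ is enqueued behind $j(p_\rho, \tau)$ and is therefore processed only after it, so each dispatch in $[\tau, \tau'-1]$ removes one of the $\qtemp([s, \tau]) - (\tau - s)$ jobs already present at time $\tau$. Since $j(p_\rho, \tau)$ itself is one of those jobs, it is dispatched no later than time $\tau + \qtemp([s, \tau]) - (\tau - s) - 1$, giving $\tau' - \tau \leq \qtemp([s, \tau]) - (\tau - s + 1)$. The right-hand side is non-negative because the $\tau - s + 1$ dispatches in $[s, \tau]$ are all drawn from the $\qtemp([s, \tau])$ arrivals, so it is bounded above by $\overflow$ by instantiating the outer maximum at $t_1 = s$, $t_2 = \tau$. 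Combining with $C^{\sfinal}_\rho \leq \tau'$ then yields $\Ffinal_\rho \leq \Ftemp_\rho + \overflow$, which is exactly the claim.

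I do not foresee a genuine obstacle: the statement is a general FIFO queueing identity applied to the dispatches produced by $\stemp$, which is why it can be quoted almost directly from \cite{BansalCS08}. The only points requiring a moment of care are the arbitrary tie-breaking among arrivals at time $\tau$ (irrelevant to the queue-size bound, since every such arrival either sits at or ahead of $j(p_\rho, \tau)$ but still belongs to the $\qtemp([s, \tau]) - (\tau - s)$ jobs counted above) and the boundary cases $s = 1$ or $\qtemp([s, \tau]) = \tau - s + 1$, which are absorbed by the $\max\{\cdot, 0\}$ in the definition of $\overflow$.
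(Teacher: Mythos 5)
Your proof is correct and follows essentially the same busy-period/FIFO-backlog argument as the paper, which traces the delay back to the last time the queue had no backlog and bounds it by $\qtemp$ on the resulting interval minus its length. Your write-up is somewhat more detailed than the paper's one-line version (which is stated tersely and cites \cite{BansalCS08}), but the underlying idea is identical.
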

\begin{proof}
    Consider any request $\rho$. Let $t_0$ denote the latest time $t < r_\rho$ such that there is only one transmission made at time $t$ in $\stemp$ and the transmission is scheduled exactly at time $t$ also in $\sfinal$ without any delay, or no transmission is made at time $t$ in $\sfinal$. If no such time $t$ exists, then $t_0= 0$. Note that $\Ffinal_\rho - \Ftemp_\rho \leq (\qtemp( [t_0 +1 , r_\rho] )  - (r_\rho - ( t_0 +1 ) + 1)$. Hence the lemma follows.
\end{proof}

Motivated by the above lemma, we define the overflow during $[t_1, t_2]$,
$$ \overflow([t_1, t_2]) := \max \{ \qtemp( [t_1, t_2] )  - (t_2 - t_1 +1), 0 \}.$$

Formally,  we will upper-bound the overflow during an interval as follows. By a simple union bound over all possible intervals (at most $T^2$), it will imply Lemma~\ref{cor:fmax-random}.

\begin{lemma}
    \label{lem:fmax-delay}
    Suppose that $L^* \geq (1 / \eps^3) \log T$. Then for any interval $I$, $\Pr[\overflow(I) \geq 6\eps L^*] \leq 1/ T^3$.
\end{lemma}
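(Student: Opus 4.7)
The plan is to write $\qtemp(I)$ as a sum of independent group-indexed random variables of bounded range, compare its expectation to $|I|$ using LP feasibility, and then invoke Hoeffding's inequality, whose exponent becomes $\Theta(\eps^2 L^*)$ precisely because the number of groups is only $O(L^*)$.

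First, for each group $g \in \cG$ I would set $\Delta_g := y^*_{g,t_2} - y^*_{g,t_1 - 1} = \sum_{p \in g}\sum_{t \in I} x^*_{p,t}$ and let $X_g$ denote the number of transmissions issued on behalf of group $g$ during $I$ in the tentative schedule $\stemp$. Since distinct pages in a single group have disjoint windows (and the LP variables are confined to windows by constraints~(\ref{LP-mflow-3})), at most one page of $g$ carries positive mass at any single time, so $X_g$ is just the number of values of the form $k + \alpha_g$ with $k \in \mathbb{Z}$ that fall in $(y^*_{g,t_1-1}, y^*_{g,t_2}]$. This count equals $\lfloor \Delta_g \rfloor$ or $\lceil \Delta_g \rceil$, has expectation exactly $\Delta_g$, and lies in an interval of length at most one. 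The variables $\{X_g\}_{g \in \cG}$ are mutually independent because the $\alpha_g$'s are drawn independently.

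Second, summing gives $\qtemp(I) = \sum_{g \in \cG} X_g$, and by LP constraint~(\ref{LP-mflow-2}),
\[\Ex[\qtemp(I)] \;=\; \sum_{g \in \cG} \Delta_g \;=\; \sum_{p \in \cP}\sum_{t \in I} x^*_{p,t} \;\leq\; |I|.\]
Hence the event $\{\overflow(I) \geq 6\eps L^*\}$ is contained in $\{\qtemp(I) - \Ex[\qtemp(I)] \geq 6\eps L^*\}$, and Hoeffding's inequality, combined with the bound $|\cG| \leq 2L^*$ from Lemma~\ref{lem:fmax-num-groups}, yields
\[\Pr\!\left[\qtemp(I) - \Ex[\qtemp(I)] \geq 6\eps L^*\right] \;\leq\; \exp\!\left(-\frac{2(6\eps L^*)^2}{|\cG|}\right) \;\leq\; \exp(-36 \eps^2 L^*).\]
Plugging in the hypothesis $L^* \geq \eps^{-3} \log T$ and using $\eps \leq 1$ bounds this by $T^{-36/\eps} \leq T^{-3}$, as required.

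The step I expect to be the crux of the argument — and the place where the grouping construction pays off — is observing that each $X_g$ has range at most $1$ \emph{regardless of the magnitude of} $\Delta_g$; this collapses the Hoeffding variance proxy $\sum_g (b_g - a_g)^2$ to $O(L^*)$ rather than to something growing with $n$ or $|I|$. The textbook per-page $\alpha$-point rounding would instead produce an overflow of order $\sqrt{n}$, which is hopeless for a PTAS. Beyond this observation I do not expect any genuine difficulty; the rest is routine.
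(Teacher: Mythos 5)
Your proof is correct and follows essentially the same route as the paper: decompose $\qtemp(I)$ into independent per-group counts, observe that each count has range at most $1$ and that there are at most $2L^*$ groups, and apply a concentration bound. The only (immaterial) differences are that the paper recenters each count by $\lfloor v_g\rfloor$ to get a $0$--$1$ variable and invokes Bernstein's inequality with variance proxy $2L^*$, whereas you keep the raw counts and invoke Hoeffding; both yield an exponent of order $\eps^2 L^*$, which suffices.
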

\begin{proof}
 Consider any fixed interval $I = [t_1, t_2]$. For each group $g \in \cG$, let $N_g$ denote the number of transmissions that are made for pages in $g$ during $I$ in $\stemp$, which is of course a random variable. Let $v_g: = \sum_{p \in g} \sum_{t \in I} x^*_{p,t}$ denote the volume that $\lpmflow$ transmits for the pages in $g$ during $I$. Note that $\Ex [ N_g] = v_g$.
 Also observe that $X_g := N_g - \lfloor v_g \rfloor$ is a 0-1 random variable.  This is because a page in $g$ is transmitted in $\stemp$ at every time step $\lpmflow$ makes one additional volume of transmission for pages in $g$ (except the first one which is done when  $\alpha_g$ volume of transmission is done by $\lpmflow$). Note that $X_g$ are independent with  $\Pr[ X_g] =  v_g -  \lfloor v_g \rfloor$. By  Lemma~\ref{lem:fmax-num-groups}, we know that there are at most $2L^*$ random variables $X_g$. For notational convenience, let $\mu_g := \Ex[X_g]$ and $\mu := \sum_{g \in \cG} \mu_g$. %Also let $N := \sum_{g \in \cG} N_g$ and $v := \sum_{g \in \cG} v_g$.

We can relate the probability in the claim to the following probability of an event in terms of $X_g$. %For any $\nu >0$ ($\nu$ will be set as $5 \eps L^*$),
\begin{eqnarray*}
    \Pr \Big[\overflow(I)  \geq 6 \eps L^* \Big] &\leq& \Pr \Big[ \sum_{g \in \cG} N_g - \sum_{g \in \cG} v_g \geq 6 \eps L^* \Big] \\
    &=& \Pr \Big[ \sum_{g \in \cG} (N_g - \lfloor v_g \rfloor) \geq  \sum_{g \in \cG} (v_g - \lfloor v_g \rfloor) + 6 \eps L^* \Big] \\
    & =& \Pr \Big [ \sum_{g \in \cG} X_g \geq \mu + 6 \eps L^* \Big]
\end{eqnarray*}

The first inequality follows from the facts that $Q_{temp}(I) = \sum_{g \in \cG} N_g$ and $\sum_{g \in \cG} v_g = \sum_{g \in \cG}  \sum_{p \in g} \sum_{t \in I} x^*_{p,t} \leq t_2 - t_1 +1$. The last equality comes from the definition of $X_g$. Knowing that $\Var [ \sum_{g \in \cG} X_g ] \leq 2L^*$, by applying Theorem~\ref{thm:bernstein} (Bernstein Concentration Inequality) with $\Delta = 6\eps L^*$, $V \leq 2L^*$ and $b \leq 1$, we derive
$$\Pr \Big[ \sum_{g \in \cG} X_g \geq \mu + 6 \eps L^*  \Big] \leq \exp \Big( - \frac{\Delta^2}{2V + (2/3)b \Delta} \Big ) \leq \exp \Big( - \frac{( 6\eps L^*)^2}{4L^* + 4 \eps L^*}\Big) < \exp (- 3 \eps^2 L^*) \leq \frac{1}{T^3}$$
\end{proof}
%
%
%
%\begin{remark}
%   We note that the $\alpha$-point rounding is crucial in deriving our PTAS. If we apply the standard $\alpha$-point rounding without grouping, we will be only to able to get an additive approximation of $\sqrt{n}$.
%\end{remark}

\subsection{Removing the Dependency of Run Time on $T$}
\label{sec:fmax-removeT}

In this section, we will make our algorithm run in polynomial time in $m$ for any fixed $\eps >0$. We consider three cases depending on the value of $L^*$, and show how the proposed algorithm for each case can be adapted so that the run time does not depend on $T$. Recall that $T$ is the length of the planning horizon, i.e. an upper-bound on the last time when the optimal schedule transmits. Let $r_{\max} := \max_\rho r_\rho$ denote the latest arrival time of any request. One can without loss of generality set $T := r_{\max} + \min \{m,n\}$.
This is because one can satisfy all requests by $T$ by transmitting all outstanding pages after $r_{\max}$ and there is no incentive of transmitting the same page twice when there are no more requests to arrive.  Hence if $T = O(m)$, we have nothing to prove. So we will assume that $T > m$. We will say that an interval $I$ is silent if no request arrives during $I$.
 \\

\noindent
\emph{Case (a): $L^* = O(1)$.}  Recall that Algorithm~\ref{algo-constant} runs in $O( T) \cdot m^{O(L)}$. Suppose that there is a silent interval $I  = [t_1, t_2] \subseteq [1, T]$ of length at least $L^*$.  Suppose that $I$ is maximal. Then since all requests in $\cR_{r < t_1}$ must be satisfied by time $t_1 + L^* -1 (\leq t_2)$, one can find an optimal schedule on the requests in $\cR_{r < t_1}$ and $\cR_{r > t_2}$ separately. Hence one can without loss of generality assume that there is no silent interval of length $L^*$. This implies that $T = O(L^* \cdot m) = O(m^2)$. Hence we obtain an exact algorithm with run time $m^{O(L^*)}$.

\smallskip
\noindent
\emph{Case (b): $L^* \leq \frac{1}{\eps^3} \log m$.} This case is similar to case (a). In Theorem~\ref{thm:smallL}, we have shown a $(1 + 6\eps)$-approximation with run time  $O(\frac{T}{ \eps L^*} (L^*)^2) \cdot (\frac{3}{\eps})^{18L^*} \cdot (\frac{6e}{\eps})^{\eps L^*}$; here $O(\frac{T}{ \eps L^*})$ is due to the number of time steps considered in the modified instance $J'$. Recall that in $J'$, the maximum flow time $\ell$ was at most $ 2 + 1/ \eps$. By a similar argument as for case (a), we can assume that there is no silent interval of length $\ell$. Hence it follows $\frac{T}{ \eps L^*}  = O(\ell m)$. Hence we obtain a $(1 + 6\eps)$-approximation with run time  $O( \frac{m}{\eps}) \cdot (\frac{3}{\eps})^{18L^*} \cdot (\frac{6e}{\eps})^{\eps L^*}$.

\smallskip
\noindent
\emph{Case (c): $L^* \geq \frac{1}{\eps^3} \log m$.} We will show that one can modify $\mathsf{LP}_\mathsf{MaxFlow}$ so that the number of time steps (not necessarily continuous) in consideration is at most $m$. This will allow us to find an optimal LP solution where $\sum_{p \in \cP} x^*_{p, t} \neq 0$ for at most $m$ time steps $t$.  We will show that one can decompose the requests instance into a sub-instance where the number of time steps is as large as the number of requests, and solve each sub-instance separately. We assume without loss of generality that $\cR_{r =0} \neq \emptyset$.
Suppose that $T > m$, since otherwise there is nothing to prove. Let $t_2 \geq 0$ be the earliest time $t$ such that $| \cR_{0 \leq r \leq t}| \leq t+1$ (or equivalently, $| \cR_{0 \leq r \leq t}| = t+1)$.

\begin{claim}
    The optimal schedule (in fact any reasonable schedule that tries to satisfy at least one outstanding request) satisfies all requests $\cR_{r \leq t_2}$ by time $t_2 +1$.
\end{claim}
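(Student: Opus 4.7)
The plan is to model the requests of $\cR_{r \leq t_2}$ still outstanding at each time as a Lindley-type queue, and then invoke the minimality of $t_2$ to force this queue to be empty at the start of time $t_2 + 2$.

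First I would set $q^*_s$ to be the number of requests in $\cR_{r \leq t_2}$ that remain unsatisfied at the start of time $s$, so that the claim becomes $q^*_{t_2 + 2} = 0$. A key preliminary observation is that for $s \in [1, t_2 + 1]$, every request outstanding at the start of $s$ has $r \leq s - 1 \leq t_2$ and hence already belongs to $\cR_{r \leq t_2}$. Thus, whenever $q^*_s > 0$ on this window, a reasonable schedule is forced to satisfy at least one request counted by $q^*$, which gives the drift inequalities
\[
    q^*_{s+1} \leq \max(q^*_s - 1, 0) + |\cR_{r = s}| \quad \text{for } 1 \leq s \leq t_2,
\]
together with $q^*_{t_2 + 2} \leq \max(q^*_{t_2 + 1} - 1, 0)$ at the last step (since no arrival at $r = t_2 + 1$ contributes to $q^*$).

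Next, I would iterate the recursion to obtain the standard closed-form bound
\[
    q^*_{t_2 + 1} \leq \max_{0 \leq k \leq t_2} \Bigl( |\cR_{k \leq r \leq t_2}| - (t_2 - k) \Bigr).
\]
The term at $k = 0$ equals $|\cR_{r \leq t_2}| - t_2 = 1$ by the very definition of $t_2$. For $k \in [1, t_2]$ I would rewrite $|\cR_{k \leq r \leq t_2}| = (t_2 + 1) - |\cR_{r \leq k - 1}|$, reducing the bracket to $k + 1 - |\cR_{r \leq k - 1}|$, and then apply the minimality of $t_2$ (which forces $|\cR_{r \leq k - 1}| \geq k + 1$ whenever $k - 1 < t_2$) to conclude that the bracket is $\leq 0$. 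This gives $q^*_{t_2 + 1} \leq 1$, and the final-step inequality then yields $q^*_{t_2 + 2} = 0$, which is the claim.

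The only delicate step is justifying the drift inequality for any reasonable schedule: namely, that whenever $q^*_s > 0$ the schedule truly removes at least one from $q^*$. This is immediate once one observes that during $[1, t_2 + 1]$ every outstanding request lies in $\cR_{r \leq t_2}$, so any outstanding request the reasonable schedule serves is automatically counted by $q^*$. After that observation, the rest is purely arithmetic, driven by the minimality condition $|\cR_{r \leq t}| \geq t + 2$ for $t < t_2$.
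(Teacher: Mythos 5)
Your proof is correct and rests on exactly the two counting facts the paper's proof uses, namely $|\cR_{0 \leq r \leq t}| \geq t+2$ for $t < t_2$ and $|\cR_{0 \leq r \leq t_2}| = t_2+1$; the maximizing index $k$ in your unrolled Lindley bound plays precisely the role of the paper's latest idle time $t_1$, so your forward queue recursion and the paper's backward contradiction argument are two packagings of the same argument. Your version has the minor merit of making the intermediate bound $q^*_{t_2+1} \leq 1$ explicit and of avoiding the case split on whether the schedule is ever idle before $t_2+1$.
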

\begin{proof}
Without loss of generality assume that the optimal solution transmits a page only if it satisfies at least one outstanding request. If the optimal schedule is never idle by time $t_2 +1$ (except at time 0), then the claim holds, since $| \cR_{0 \leq r \leq t_2}| = t_2+1$. Also if the optimal schedule is idle at time $t_2+1$, then the claim again follows, since there are no outstanding requests at time $t_2+1$. If not, we show a contradiction. Let $t_1$ be the latest time no greater than $t_2$ when the optimal schedule gets idle.  Note that  $|\cR_{t_1 \leq r \leq t_2}| < t_2 -(t_1-1)$. Otherwise, we will have $|\cR_{0 \leq r \leq t_1 - 1}| \leq  t_1 $, which is a contradiction to the definition of $t_2$. Hence all requests in $\cR_{t_1 \leq r \leq t_2}$ are satisfied during $[t_1+1, t_2 +1]$. Then it implies that the optimal schedule makes $(t_2 - t_1 +1)$ transmissions to satisfy $t_2 - t_1$ requests. A contradiction.
\end{proof}

We have shown that all requests in $\cR_{0 \leq r \leq t_2}$ ($| \cR_{0 \leq r \leq t_2}| = t_2 +1$) can be satisfied during $[1, t_2 +1]$. This will become one sub-instance. By repeating this process on the remaining requests $\cR_{r > t_2}$, we can identify at most $m$ time steps that we need to consider.  Let $\cT'$ denote the set of such times. Then we restrict the LP variables only to times in $\cT'$. Recall that in Lemma~\ref{lem:fmax-delay} we showed that $\Pr[\overflow(I) \geq 6\eps L^*] \leq \frac{1}{T^3}$. The proof can be easily modified to show that when $L^* \geq \frac{1}{\eps^3} \log m$, $\Pr[\overflow(I) \geq 6\eps L^*] \leq \frac{1}{m^3}$. We claim that we only need to focus on the intervals $I$ that start and end at times in $\cT'$. To see this, consider any interval $[\tau_1, \tau_2]$. Let $[\tau_1', \tau_2']$ be the maximal interval such that $\tau_1', \tau_2' \in \cT'$, and $[\tau_1', \tau_2'] \subseteq [\tau_1, \tau_2]$; if no such $\tau_1'$, $\tau_2'$ exist, then $\overflow( [ \tau_1, \tau_2]) = 0$, since in $\stemp$, transmissions can be made only at times in $\cT'$. Note that $\overflow([ \tau_1', \tau_2']) \geq \overflow([ \tau_1, \tau_2])$. Since there are at most $m^2$ intervals to consider, by a simple union bound, we obtain a randomized $(1 + 6\eps)$-approximation that succeeds with a probability of at least $1 - 1/m$.

\medskip
From the above three cases, we can show that the run time is $O( \frac{m}{\eps}) \cdot (\frac{3}{\eps})^{18L^*} \cdot (\frac{6e}{\eps})^{\eps L^*}$ with $L^* = (1 / \eps^3) \log m$. By  a similar algebra as in the proof of Theorem~\ref{thm:fmax-main-random}, the run time simplifies to $m^{O( 1/ \eps^4)}$, thereby removing the dependency on $T$.

\subsection{Derandomization}
    \label{sec:derandomization}

\newcommand{\cIB}{\cI^{big}}
\newcommand{\cIS}{\cI^{small}}

We derandomize our algorithm in Section~\ref{sec:largeL} using the method of pessimistic estimators \cite{Raghavan88}. This method is by now a standard tool for derandomizing an algorithm that relies on  concentration  inequalities. It is useful particularly when there is no concrete target value whose expectation is easy to measure, and hence the standard conditional expectation method can not be used.  Recall from Section~\ref{sec:largeL} that for each interval $I \in \cI$, we created a set of 0-1 variables $X_{g, I}$ with $\Pr [ X_{g, I} = 1] = v_{g, I}  - \lfloor v_{g, I} \rfloor$, and showed that $\Pr \Big[ \sum_{g \in \cG} X_{g, I} \geq \Ex [\sum_{g \in \cG} X_{g, I}] + \Delta \Big] \leq \frac{1}{m^3}$, where $\Delta = 6 \eps L^*$. Let $\mu_{g, I}:=  v_{g, I}  - \lfloor v_{g, I} \rfloor$ and $\mu_I = \sum_{g \in \cG} \mu_{g, I}$. (In Section~\ref{sec:largeL}, to simplify the notation, we considered any fixed interval $I$ and did not include $I$ in the subscript. Also, initially we showed the probability is bounded by $1 / T^3$, and in Section~\ref{sec:fmax-removeT} that we only need to consider at most $m^2$ intervals and the probability is bounded by $1 / m^3$). Recall that for any $g \in \cG$,  all random variables $\{ X_{g,I} \}_{I \in \cI}$ are determined by the same random value $\alpha_g \in [0,1]$. For notational convenience, we index groups by integers from 1 to $k := |\cG|$ in an arbitrary but fixed way and will refer to groups by their index. To apply the method of pessimistic estimators, we define our ``estimator" for each $I \in \cI$. We will distinguish intervals in $\cI$ into two groups: If $\mu_I \geq \eps L^*$ then $I \in \cIB$, otherwise $I \in \cIS$. Throughout this section, let $\lambda := 1+ 3\eps$.

For each $I \in \cIB$, define
$$f_I(  z_1, z_2, ..., z_k):= \frac{ \prod_{i \in [k]}  \exp  \Big ((\log \lambda) z_i \Big)  }{\exp \Big( \lambda (\log \lambda) \mu_I\Big)}$$
and for each $I \in \cIS$, define
$$f_I(  z_1, z_2, ..., z_k):= \frac{ \prod_{i \in [k]}  \exp  ( z_i )  }{\exp (6 \eps L^*)}$$
We will consider groups in increasing order of their index, and will find a ``right" $\alpha_i$ value conditioned on the $\alpha_{1}, \alpha_{2}, ..., \alpha_{i-1}$ values that we have found and fixed. To this end, we need to  carefully define several quantities. Let $x_{i, I}(\alpha'_{i})$ denote $X_{i,I}$'s value when $\alpha_i = \alpha'_i$. When $\alpha'_i$ is clear from the context, we may denote $x_{i, I}( \alpha'_{i})$ simply by $x'_{i, I}$. When $\alpha_1= \alpha'_1$, ..., $\alpha_{i} = \alpha'_{i}$ are fixed, we define the following quantity for each interval $I \in \cI$.
$$E_{i, I} := \Ex[ f_I( X_1, X_2, ..., X_k) \; | \; \alpha_1 = \alpha'_1, \alpha_2 = \alpha'_2, ..., \alpha_{i} = \alpha'_{i} ], $$
where the expectation is defined over $X_{i+1}, X_{i+2}, ..., X_{k}$ (which are determined by $\alpha_{i+1}, \alpha_{i+2}, ..., \alpha_k$). We note that $E_{i,I}$ can be easily computed in polynomial time since $X_{i, I}, i \in [k]$ are independent.

We will show the following two lemmas.
\begin{lemma}
    \label{lem:derand-1}
    Consider any integer $0 \leq h \leq k$, and any $x_{i, I} \in \{0, 1\}$, $i \in [h]$ (more precisely for any $\alpha_{i, I} \in [0,1]$, $i \in [h]$ that determines $x_{i,I}$, $i \in [h]$). Then for any $I \in \cIB$,
    $$\Pr \Big[ \sum_{i \in [k]} X_{i, I} \geq \lambda \mu_I \; \Big | \; X_{i,I}  = x_{i,I},  i \in [h]  \Big] \leq  E_{h, I}.$$
    Also for any $I \in \cIS$,
    $$\Pr \Big[ \sum_{i \in [k]} X_{i, I} \geq 6 \eps L^*\; \Big | \; X_{i,I}  = x_{i,I},  i \in [h]  \Big] \leq  E_{h, I}.$$
\end{lemma}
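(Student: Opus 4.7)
The plan is to apply the standard Markov's-inequality-on-exponentials trick that underlies the Chernoff bound, but in a pointwise form so the resulting inequality survives conditioning on an arbitrary prefix of the $X_{i,I}$'s.

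The key step I would isolate first is a deterministic pointwise inequality between the indicator of the bad event and the estimator $f_I$. For $I \in \cIB$, on every outcome $\omega$,
\[
\indic\Bigl[\sum_{i \in [k]} X_{i,I}(\omega) \geq \lambda \mu_I\Bigr] \;\leq\; \frac{\exp\bigl((\log\lambda)\sum_{i}X_{i,I}(\omega)\bigr)}{\exp\bigl(\lambda(\log\lambda)\mu_I\bigr)} \;=\; f_I\bigl(X_1(\omega),\ldots,X_k(\omega)\bigr).
\]
The final equality is just the algebraic identity $e^{a+b}=e^a e^b$ (no independence needed). The inequality uses $\log\lambda > 0$ (since $\lambda = 1+3\eps > 1$) and splits into two trivial cases: if the indicator is $0$ the right-hand side is still nonnegative, while if it is $1$ then $\sum_{i}X_{i,I}(\omega) \geq \lambda\mu_I$ forces the numerator to dominate the denominator. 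For $I \in \cIS$, I would establish the parallel pointwise inequality with $t=1$ in place of $\log\lambda$ and threshold $6\eps L^*$ in place of $\lambda\mu_I$; the same two-case check gives domination by the corresponding estimator $f_I$.

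Given the pointwise inequality, the lemma is immediate: I would take conditional expectation of both sides with respect to the event $\{X_{i,I}=x_{i,I} : i \in [h]\}$, apply monotonicity of expectation, and recognize the right-hand side as $E_{h,I}$ by definition. Crucially, the pointwise inequality holds for \emph{every} realization, so it passes through \emph{any} conditioning; this is exactly what distinguishes the pessimistic-estimator form from the usual unconditional Chernoff bound.

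There is essentially no real obstacle in this step — the entire content of the lemma is the ``pointwise Chernoff packaging'' of the chosen estimator. The actual work will come in the subsequent step of the derandomization, where one must argue that the estimator behaves well under conditioning on individual $\alpha_i$'s: namely, that one can always pick $\alpha_{h+1}$ so that $\sum_{I \in \cI} E_{h+1,I} \le \sum_{I \in \cI} E_{h,I}$, and that the initial (unconditional) value of $\sum_{I \in \cI} E_{0,I}$ is already smaller than $1$. The present lemma is only the first ingredient that licenses treating $E_{h,I}$ as a valid upper bound on the conditional probability of failure on interval $I$.
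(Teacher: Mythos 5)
Your proof is correct and is essentially the paper's argument: both are the standard Chernoff-style packaging, since Markov's inequality applied to $\exp(t\sum_i X_{i,I})$ is exactly your pointwise domination of the indicator by $f_I$, followed by (conditional) expectation. The only cosmetic difference is that the paper additionally factors $\Ex[\prod_i e^{tX_{i,I}}]$ into $\prod_i \Ex[e^{tX_{i,I}}]$ using independence, which, as you correctly observe, is not needed for this lemma itself (it matters only for computing $E_{h,I}$ efficiently and for the subsequent bound on $\sum_I E_{0,I}$).
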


\begin{lemma}
    \label{lem:derand-2}
    $\sum_{I \in \cI} E_{0,I} \leq \frac{1}{m}$.
\end{lemma}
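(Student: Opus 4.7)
The plan is to bound each $E_{0,I}$ separately by a standard Chernoff/moment-generating-function calculation, and then sum over the at most $m^2$ intervals we need to consider (as established in Section~\ref{sec:fmax-removeT}). The first step uses independence: each $X_{i,I}$ is a deterministic function of the single uniform $\alpha_i$, and the $\alpha_i$'s are independent across groups, so $\{X_{i,I}\}_{i\in[k]}$ are mutually independent. Hence the numerator of $f_I$ factors under expectation, giving
$$E_{0,I} \;=\; \frac{\prod_{i\in[k]} \Ex[\exp(t\,X_{i,I})]}{\exp(\kappa_I)},$$
with $t=\log\lambda$ and $\kappa_I=\lambda(\log\lambda)\mu_I$ when $I\in\cIB$, and $t=1$ and $\kappa_I=6\eps L^*$ when $I\in\cIS$. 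Applying the standard Bernoulli bound $\Ex[\exp(t\,X_{i,I})] = 1+\mu_{i,I}(e^t-1) \leq \exp(\mu_{i,I}(e^t-1))$ to each factor and multiplying over $i$ then collapses the numerator to $\exp((e^t-1)\mu_I)$.

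Next, I would treat the two regimes separately. For $I\in\cIB$, the bound becomes $E_{0,I}\leq \exp\bigl(\mu_I[(\lambda-1)-\lambda\log\lambda]\bigr)$, and I would invoke the classical Chernoff identity $(\lambda-1)-\lambda\log\lambda \leq -(\lambda-1)^2/3$ valid for $0<\lambda-1\leq 1$. With $\lambda-1=3\eps$ (taking WLOG $\eps\leq 1/3$) and $\mu_I\geq \eps L^*$, this gives $E_{0,I}\leq \exp(-3\eps^3 L^*)$, and the case assumption $L^*\geq(1/\eps^3)\log m$ from Section~\ref{sec:fmax-removeT} reduces this to $m^{-3}$. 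For $I\in\cIS$, the bound becomes $E_{0,I}\leq \exp((e-1)\mu_I - 6\eps L^*)$; substituting $\mu_I<\eps L^*$ yields $\exp((e-7)\eps L^*)\leq \exp(-4\eps L^*)$, which is at most $m^{-4/\eps^2}\leq m^{-3}$ under the same assumption on $L^*$ (since $\eps\leq 1$).

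Finally, since Section~\ref{sec:fmax-removeT} shows that it suffices to consider intervals whose endpoints lie in $\cT'$ with $|\cT'|\leq m$, the collection $\cI$ has size at most $m^2$, so $\sum_{I\in\cI}E_{0,I}\leq m^2\cdot m^{-3}=1/m$.

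The main conceptual hurdle is reconciling the two regimes with the single overflow target $6\eps L^*$ from Lemma~\ref{lem:fmax-delay}: the threshold $\mu_I=\eps L^*$ for splitting $\cIB/\cIS$ must be small enough that in the multiplicative case the factor $\lambda\mu_I=(1+3\eps)\mu_I$ genuinely dominates $\mu_I+6\eps L^*$ (so that Lemma~\ref{lem:derand-1}'s bound implies the overflow bound), yet large enough that $-3\eps^2\mu_I\leq -3\eps^3 L^*$ gives a usable exponent; the additive estimator in $\cIS$ then picks up the slack for small-mean intervals. Once these parameter choices are verified, everything else is a standard Chernoff computation.
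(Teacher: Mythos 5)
Your proposal is correct and follows essentially the same route as the paper: factor $E_{0,I}$ using independence of the $X_{i,I}$, bound each Bernoulli moment-generating factor by $\exp(\mu_{i,I}(e^t-1))$, apply the multiplicative Chernoff algebra (with $\delta=3\eps\leq 1$) for $I\in\cIB$ and the additive exponential bound for $I\in\cIS$ to get $E_{0,I}\leq m^{-3}$ in each case, and union over the at most $m^2$ intervals. The only cosmetic differences are your slightly sharper constant $\exp(-4\eps L^*)$ versus the paper's $\exp(-3\eps L^*)$ in the small-mean case, and your closing remark about how the two regimes feed back into Lemma~\ref{lem:derand-1}, which matches the paper's observation that $\mu_I\leq 2L^*$ makes $\lambda\mu_I\leq\mu_I+6\eps L^*$.
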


Lemma~\ref{lem:derand-1} shows how the probability of the bad event for $I$ (an overflow more than $6\eps L^*$) can be bounded by the expectation of our estimator function. Furthermore, this inequality holds for the bad event for $I$ conditioned on any fixed $X_{i,I}$, $i \in [h]$ (or more precisely on fixed $\alpha_{i}$, $i \in [h]$). This will allow us to use the conditional expectation method to find a sequence of $\alpha_1 = \alpha'_1$, $\alpha = \alpha'_2$, ..., $\alpha_k= \alpha'_k$ such that
$$\sum_{I \in \cI} E_{ k, I}  \leq \sum_{I \in \cI} E_{ k-1, I} \leq ... \leq \sum_{I \in \cI} E_{0, I}$$

By Lemma~\ref{lem:derand-2}, we know that $\sum_{I \in \cI} E_{ k, I} \leq 1/ m$. Knowing that $\mu_I \leq 2L^*$, it is easy to see that  $\alpha'_1, \alpha'_2, ..., \alpha'_k$ are the desired ``good" $\alpha$ values such that for all $I \in \cI$,
    $$\sum_{i \in [k]} x_{i, I}(\alpha'_i) \leq \mu_{I}+ 6 \eps L^* .$$

\medskip
It now remains to prove Lemma~\ref{lem:derand-1} and \ref{lem:derand-2}. The proofs are very similar to that of  Chernoff inequalities.

\begin{proofof}[Lemma~\ref{lem:derand-1}]
    We first consider any $I \in \cIB$. Recall that $\eps L^* \leq  \mu_{I} \leq 2L^*$. In the following equations, for notational simplicity, we omit the condition $X_{i,I}  = x'_{i,I},  i \in [h]$.  Recall that $\lambda = 1+ 3 \eps$.
    \begin{eqnarray*}
            && \Pr \Big[ \sum_{i \in [k]} X_{i, I} \geq \lambda \mu_I \Big] \qquad \\
        &=& \Pr  \Big[ \exp \Big( (\log \lambda) \sum_{i \in [k]} X_{i, I} \Big) \geq  \exp \Big( ( \lambda \log \lambda) \mu_I \Big)\Big]  \\
        &\leq&  \frac{ \Ex \Big[ \exp \Big( (\log \lambda) \sum_{i \in [k]} X_{i, I} \Big) \Big] }{\exp \Big( \lambda (\log \lambda) \mu_I\Big)} \qquad \mbox{ [By Markov's inequality]}\\
        &=&  \frac{ \prod_{i \in [k]} \Ex \Big[ \exp \Big( (\log \lambda)  X_{i, I} \Big) \Big] }{\exp \Big( \lambda (\log \lambda) \mu_I\Big)}  \qquad \mbox{ [Since $X_{i, I}$ are independent]}\\
        &=& E_{h, I}
    \end{eqnarray*}
     We now consider any $I \in \cIS$.
    \begin{eqnarray*}
        &&\Pr \Big[ \sum_{i \in [k]} X_{i, I} \geq  6 \eps L^* \Big] = \Pr  \Big[ \exp \Big(  \sum_{i \in [k]} X_{i, I} \Big) \geq  \exp \Big( 6 \eps L^* \Big)\Big]  \\
        &\leq&  \frac{ \Ex \Big[ \exp \Big(  \sum_{i \in [k]} X_{i, I} \Big) \Big] }{\exp ( 6 \eps L^* )} =  \frac{ \prod_{i \in [k]} \Ex \Big[ \exp (  X_{i, I} ) \Big] }{\exp ( 6 \eps L^* )} = E_{h, I}\\
    \end{eqnarray*}
    The last inequality is due to Markov's inequality.

\end{proofof}

\smallskip
\begin{proofof}[Lemma~\ref{lem:derand-1}]
    For each interval $I \in \cI$, it suffices to show that $E_{0,I} \leq \frac{1}{m^3}$. We first consider any $I \in \cIB$. Recall that $\lambda = 1 + 3\eps$ and $L^* \geq (1 /\eps^3) \log m $.
     \begin{eqnarray*}
        E_{0, I} &=& \frac{ \prod_{i \in [k]} \Big( \exp (\log \lambda) \mu_{ i, I} + (1 - \mu_{i, I}) \Big )}{ \lambda^{\lambda  \mu_I}} = \frac{ \prod_{i \in [k]} ( (\lambda -1)\mu_{ i,I} +1)}{ \lambda^{\lambda  \mu_I}}\\
        &\leq&  \frac{  \exp( 3\eps \sum_{i \in [k]} \mu_{i, I})    }{ (1 + 3\eps)^{ (1+3\eps) \mu_I} } =  \Big( \frac{  e^{ 3\eps}   }{ (1 + 3\eps)^{ 1+3\eps}} \Big)^{\mu_I}   \leq e^{-3 \eps^2 \mu_I}  = \frac{1}{m^3}
    \end{eqnarray*}
    The last inequality holds when $0 \leq \eps \leq 1/3$.

    Now consider any $I \in \cIS$.
    \begin{eqnarray*}
        E_{0, I} &=& \frac{ \prod_{i \in [k]} \Big( (e - 1) \mu_{i, I} + 1 \Big )}{ \exp( 6 \eps L^*)} \leq  \frac{  \exp( ( e - 1) \sum_{i \in [k]} \mu_{ i,I})    }{  \exp ( 6 \eps L^*)}  < \exp( - 3\eps L^*)   \leq \frac{1}{m^3}
    \end{eqnarray*}
\end{proofof}

\section{Throughput (Profit) Maximization}
    \label{sec:throughput}

\newcommand{\opts}{\opt^{small}}
\newcommand{\optl}{\opt^{large}}
\newcommand{\optlb}{\opt^{large}_{boundary}}
\newcommand{\optlm}{\opt^{large}_{middle}}
\newcommand{\optb}{\opt^B}
\newcommand{\len}{L}
\newcommand{\lenb}{B}
\newcommand{\lenh}{H}
\newcommand{\optlps}{\opt_{LP}^S}
\newcommand{\optlpl}{\opt_{LP}^L}
\newcommand{\efr}{\texttt{Fst}}
\newcommand{\Left}{\texttt{Left}}
\newcommand{\Right}{\texttt{Right}}
\newcommand{\esr}{\texttt{Snd}}
\newcommand{\ld}{\lambda}
\newcommand{\proxy}{\textrm{Proxy}}
\newcommand{\proxysel}{\textrm{ProxySelected}}
\newcommand{\Bad}{\texttt{Bad}}
\newcommand{\rs}{\cR^{small}}
\newcommand{\rl}{\cR^{large}}

%In this section we study the maximum throughput objective in broadcast scheduling, and give improved approximation algorithms. We first give $(1 - $1 - \frac{2}{3e} - \eps$

In this section, we study the maximum throughput objective in broadcast scheduling. In this setting, each request $\rho$ is associated with its release time $r_\rho$, deadline $d_\rho$, weight (profit) $w_\rho$, and the page $p_\rho$ it asks for. We say request $\rho$ is satisfied within its window if page $p$ is transmitted during $[r_\rho+1, d_\rho]$. If we satisfy request $\rho$ within its window, we obtain the profit $w_\rho$ associated with the request $\rho$. The goal is to find a schedule that maximizes the total profit.

Our main result is an improved 0.7759-approximation for the maximum throughput objective. More precisely, we give a randomized algorithm with approximation factor $(1 / 2 + 3 / (4e) - \eps)$ for any $\eps >0$. Furthermore, the run time is $(1 / \eps)^{O(1/ \eps)}  \cdot poly(m)$. Note that for any fixed input size, the run time required to achieve an approximation factor arbitrarily close to $(1 / 2 + 3 / (4e))$ increases by a multiplicative factor depending only on $\eps$.

The analysis of the 0.7759-approximation is algebraically involved. Hence we first present a $0.754$-approximation to illustrate our main ideas behind the improvement. The 0.7759-approximation  will be presented in Section~\ref{sec:further}. Also we will first present an algorithm with run time $m^{(1 / \eps)^{O(1/\eps)}}$, and reduce the run time to $(1 / \eps)^{O(1/ \eps)}  \cdot poly(m)$ in Section~\ref{sec:throughput-runtime}.

\subsection{Overview of $0.754$-approximation}

Our algorithm improves upon the previous best known approximation by handling requests of large/small windows separately. Here we say that a request $\rho$ has a large window if $\rho$'s window length $|W_\rho| \geq 2 \lenh$, otherwise a small window; later $\lenh$ will be set as $1 / \eps^3$. Let $\rs$ and $\rl$ denote the set of small-window and large-window requests, respectively.

To introduce our integer/linear programming, we need to define a fair amount of notation. To collect most profits from small-window requests, we create configuration variables $y_{I, Q}$ for each sub-interval $I$ which is defined as follows. Let $I_0, I_1, ..., I_{h}$ be the minimum number of disjoint intervals that cover the time horizon $[0, T = \max_{\rho} d_\rho]$ seamlessly; the run time will depend on  $T$, but we will remove this dependency later. All intervals have a length $2\lenh/ \eps$ except the first interval.  We set the length of the first interval as a random number drawn from $[1, 2\lenh / \eps]$. In fact, by trying all possible values in $[1,2 \lenh / \eps]$, we can assume that we know the right value; what ``right" means will become clear soon. Note that the first interval determines all other intervals. We let $\cI := \{I_0, I_1, ..., I_h\}$. For each interval $I_i$, we define configurations. We say that $Q = \{ (t, q_t) \; | \; t \in I_i \}$ is a configuration with respect to $I_i$, where for all $ t \in I_i$, $q_t \in \cP$. Of course, the pair $(t, q_t)$ implies that  page $q_t$ is transmitted at time $t$. Let $\cQ(I_i)$ denote the collection of all configurations with respect to $I_i$. For each configuration $Q \in \cQ(I_i)$ we create a variable $y_{I_i, Q}$, which is 1 if the schedule follows $Q$ during $I_i$. Let $w_{I,Q}$ denote the total profit of the small-window requests arriving in $I$ that are satisfied (within their window) by the transmissions made by $Q$. Also for each request $\rho$ in $\rl$, we create a variable $z_\rho$ that indicates if $\rho$ is satisfied within its window or not. We note that the constant $\lenh$ will depend only on $\eps$, therefore the number of variables is at most a polynomial in  $m$ and $T$ for any fixed $\eps >0$ (In Section~\ref{sec:throughput-runtime}, we will show how to remove the dependency on $T$ and reduce the run time to $(1 / \eps)^{O(1/ \eps)}  \cdot poly(m)$).

We are now ready to set up our integer programming. Let $P(Q, I')$ denote the set of pages that are transmitted during $I'$ according to configuration $Q$.
\begin{alignat}{10} \tag{$\mathsf{IP}_\mathsf{Throughput}$} \label{lp:throughput}\\
    \max \qquad  \sum_{I \in \cI} \sum_{Q \in \cQ(I)} & w_{I, Q} y_{I,Q}+ \sum_{\rho \in \rl} w_\rho z_\rho    \\
%   \sum_{I \in \cI} \sum_{Q \in \cQ(I)} w_{I, Q} y_{I,Q} &\geq (1 - \eps) \opts \label{LP-throughput-1}\\
%   \sum_{\rho \in \cR^{L}} w_\rho z_\rho &\geq \optl \label{LP-throughput-2}\\
     \sum_{Q \in \cQ(I)} y_{I,Q} &\leq 1 \qquad & \forall I \in \cI  \label{LP-throughput-3}\\
%    x_{p, t} &= \sum_{Q: (t,p) \in Q, Q \in \cQ(I)} y_{I,Q} & \qquad \forall I \in \cI, t \in I,  p \in \cP  \label{LP-throughput-4}\\
%    \sum_{p \in \cP} x_{p,t} &\leq 1 \label{LP-throughput-4.5} &\qquad \forall t \in [T]\\
    \sum_{I \in \cI} \sum_{Q \in \cQ(I), p_\rho \in P(Q, I\cap W_\rho)}  y_{I, Q} &\geq z_\rho &\forall \rho \in \rl  \label{LP-throughput-5}\\
%
%       x_{p, t}  &\geq 0 &\forall p \in \cP, t \in [T] \nonumber \\ %\label{LP-throughput-6}\\
       y_{I,Q}  &\geq 0 &\forall I \in \cI, Q \in \cQ(I) \nonumber \\%\label{LP-throughput-7}\\
       z_{\rho}  &\in \{0, 1\}&\forall \rho \in \rl \nonumber%\label{LP-throughput-8}
\end{alignat}

Note that we use different variables $y_{I, Q}$ and $z_{\rho}$ to count profits from requests in $\rs$ and $\rl$, respectively. Further, it should be noted that not all the small-window requests contribute to the objective. More precisely, for each interval $I = [t_1, t_2] \in \cI$, the small-window requests arriving during $[t_2 - 2\lenh+1, t_2]$ may not contribute to $ \sum_{Q \in \cQ(I)} w_{I, Q} y_{I,Q}$. This can happen if the optimal solution satisfies some of those requests during the next interval $I'$ starting at $t_2 +1$. Because of this, we may lose some profits for those requests. However, for the ``right" choice of the length of the first interval in $\cI$, we will lose only $\eps$ fraction of profits from small-window requests, since small-window requests are discarded from only $\eps$ fraction of time steps.

%Constraints (\ref{LP-throughput-1}) and (\ref{LP-throughput-2}) requires that we collect enough profits from small and large window requests compared to the guessed $\opts$ and $\optl$.
 Constraints (\ref{LP-throughput-3}) restrict that only one configuration can be selected for each interval $I \in \cI$.  %Constraints (\ref{LP-throughput-4.5}) allow only one page $p$ to be transmitted at each time $t$.
 Constraints $(\ref{LP-throughput-5})$ follow from the definition of $z_\rho$.  A large-window request $\rho$ is satisfied if $p_\rho$ is transmitted during $I \cap W_\rho$ for some interval $I \in \cI$.
 We can relax this integer program into a linear program $\mathsf{LP}_\mathsf{Throughput}$ by replacing $z_{\rho}  \in \{0, 1\}$ with $0 \leq z_\rho \leq 1$.  We let $ x_{p, t} = \sum_{Q: (t,p) \in Q, Q \in \cQ(I)} y_{I,Q}$ for all $I \in \cI, t \in I,  p \in \cP$, which will relate configuration-based variables $y$ to time-indexed variables $x$.

We can obviously solve $\mathsf{LP}_\mathsf{Throughput}$ in polynomial time for any fixed precision factor constant $\eps$. Since for each $I \in \cI$, there can be at most $m^{2H / \eps} = m^{O( 1/ \eps^4)}$ variables, the run time will be also $m^{O( 1/ \eps^4)}$. As mentioned before, we will show how to reduce the run time down to $(1 / \eps)^{O(1/ \eps)}  \cdot poly(m)$ in Section~\ref{sec:throughput-runtime}. We will let $\opt$ denote the optimal fractional solution of the above LP. We will simply try all possible sizes of the first interval in $\cI$, and will take the best size that maximizes the objective in $\mathsf{LP}_\mathsf{Throughput}$. It is easy to see that $\opt$ is at least $(1 - \eps)$ times as large as the integral optimum. Let $\opts$ and $\optl$ denote the amount that small-window and large-window requests contribute to $\opt$, respectively (the first and second sum in the objective, respectively).

\medskip

We give a randomized rounding that achieves an expected total profit of at least $( 1 -  \eps )^2 \cdot ( 1 - 2/ 3e) \opt$ for any $\eps >0$. We develop two different rounding schemes and combine them. We first give a high-level overview of the algorithm and how we get an improved approximation factor. When $\opts$ is relatively large, we use an independent rounding that selects one configuration $Q \in \cQ(I)$ with probability $y_{I,Q}$ for each $I \in \cI$. Intuitively, since configurations consider enough details for satisfying small-window requests, we will be able to collect most profits from small-window requests. At the same time, due to the nature of independent rounding, we will be able to collect $(1 - 1/e) \optl$ from large-window requests. In Section~\ref{sec:small-window}, we prove that we can achieve a total expected profit of at least
\begin{equation} \label{eqn:profit-small}
    \opts+ (1 - 1/e)\optl
\end{equation}

On the other hand, when $\optl$ is relatively large, we use a different rounding scheme by modifying the $\alpha$-point rounding. Unlike the independent rounding, the $\alpha$-point rounding does not immediately yield a feasible schedule since there could be some time slots where more than one page is transmitted. Hence we need a certain contention resolution scheme. Here the difficulty is that if we move some transmissions to near time slots, they may become completely useless. This is because in the throughput objective, we can obtain a profit $w_\rho$  only when we can satisfy $\rho$ within its window. This could be the case particularly for small-window requests. Hence in this setting, we do not collect as much profits from small-window requests as when using the independent rounding scheme. Nevertheless, we will keep only one page using the fair contention resolution scheme in \cite{FeigeV10}, and will obtain a $(1 - 1/e)$-approximation, which is the same approximation guarantee that the independent rounding gives. For large-window requests, we will be able to collect more profits than $(1 - 1/e) \optl$ since large-window requests are less fragile to moving transmissions. In Section~\ref{sec:large-window}, we  show that we can collect a total expected profit of at least
\begin{equation} \label{eqn:profit-large}
 ( 1- \eps)^2 ((1 - 1/e)\opts+ (1  - 1/2e)\optl)
 \end{equation}

 Assuming that $\eps$ is arbitrarily small, the minimum of the above two quantities (\ref{eqn:profit-small}) and (\ref{eqn:profit-large}) is achieved when $\optl = 2 \opts$. Hence by selecting the better of these two solutions, we achieve a profit of at least
\begin{equation} \label{eqn:profit-mixed}
 ( 1- \eps)^2 (1 - 2 / 3e) \; \opt
 \end{equation}

Since the LP relaxation loses at most $\eps$ fraction of total profit from small-window requests compared to the optimal solution, we derive Theorem~\ref{thm:max-throughput}.

\subsection{When Small-window Requests Give Large Profits }
    \label{sec:small-window}

As mentioned previously, for each $I \in \cI$, we pick one configuration $Q \in \cQ(I)$ with probability $y_{Q, I}$.
By (\ref{LP-throughput-3}) and from the objective, we know that the expected profit we obtain from small-window requests is at least $ \opts$. Now consider any large-window request $\rho$. Let $I'_1, ..., I'_k \in \cI$ denote the time intervals that $\rho$'s window spans over, i.e. $r_{\rho} +1 \in I'_1$ and  $d_\rho \in I'_k$. Let $\gamma_{i}:= \sum_{Q: Q \in \cQ(I'_i): p_\rho \in P(Q, I'_i \cap W_\rho)} y_{Q, I'_i}$ denote the fraction of configurations with respect to $I'_i$ that satisfy request $\rho$. Note that we have $z_\rho = \min \{\sum_{i \in [k]} \gamma_{i}, 1\} $ due to  constraints (\ref{LP-throughput-3}) and (\ref{LP-throughput-5}).

Hence the probability that request $\rho$ is satisfied is at least
$$1- \prod_{i = 1}^{k} ( 1- \gamma_i) \geq 1- \prod_{i = 1}^{k} \exp( - \gamma_i)  = 1 - \exp ( - \sum_{i = 1}^k \gamma_i) \geq 1- \exp( - z_\rho) \geq (1 - 1/e) z_\rho$$
This gives us $(1 - 1/e) \optl$ profit in expectation from large-window requests. This proves the profit claimed in (\ref{eqn:profit-small}).

\subsection{When Large-window Requests Give Large Profits}
\label{sec:large-window}

This section is devoted to describing and analyzing an algorithm that achieves a profit as large as (\ref{eqn:profit-large}). This will be useful when large-window requests give a relatively large total profit.

\subsubsection{Algorithm}
	\label{sec:the-algorithm}
Our algorithm uses only $x_{p,t}$ and consists of several steps.

\medskip
\noindent
\textbf{1. $\alpha$-point Rounding}: The first step using the standard $\alpha$-point rounding we obtain an infeasible tentative schedule. Formally, for each page $p$, choose $\alpha_p$ uniformly at random from $[0, 1]$. Then transmit page $p$ at all times $t$  such that $\sum_{t' = 1}^{t-1} x_{p,t'} < \alpha_p + k \leq \sum_{t' = 1}^{t} x_{p,t'}$ for some integer $k \geq 0$. Let $A_t$ denote the set of all pages transmitted at time $t$, and let $\stemp$ denote the resulting infeasible schedule; $|A_t| >1$ could occur for some time $t$.

\medskip
\noindent
\textbf{2. Fair Contention Resolution}: In this step, the goal is to keep at most one page $p_t$ from $A_t$ at each time $t$; if $A_t = \emptyset$, $p_t$ may not exist. We will say that $p_t$ is the first-round page transmitted at time $t$. We use the fair contention resolution from \cite{FeigeV10}, and choose a page $p$ from $A_t$ with probability
$$\frac{1}{\sum_{q \in \cP} x_{q,t}} \Big( \sum_{q \in A_t \setminus \{p\}} \frac{x_{q,t}}{ |A_t| -1} + \sum_{q \in \cP \setminus A_t} \frac{x_{q,t}}{ |A_t| } \Big)$$
By Lemma 1.4 in \cite{FeigeV10}, it follows that
$$\Pr [ p_t = p ] \geq  \frac{1 - \prod_{q \in \cP} (1 - x_{q,t}) }{\sum_{q \in \cP} x_{q,t} } \cdot x_{p,t}\geq (1 - 1/e) x_{p,t},$$
since $\sum_{q \in \cP} x_{q,t} \leq 1$.

\medskip
To get more profits from large-window requests, we will try to reschedule pages in $A_t \setminus \{p_t\}$ to near time slots so that those pages can still satisfy some large-window requests that they could before rescheduling; we abuse the notation $\{p_t\}$ by letting it denote $\emptyset$ if $p_t$ does not exist. To make sure those pages can find empty time slots nearby (with a probability close to 1), we empty a small fraction of time slots. These freed-up time slots, together with time slots $t$ such that $|A_t| = 0$, will be used as new places to accommodate overflowed transmissions.

\medskip
\noindent
\textbf{3. Freeing-up Times}: Pick a random value $h$ uniformly at random from $[1, \eps H]$. Let $\cB: = \{B_i, 0 \leq i \leq \lceil T /  H \rceil +1\}$ denote a set of disjoint intervals where $B_0: = [0, h)$, and $B_i:= [ h+ (i-1) \eps H, h+ i  \eps H)$ for all $i \geq 1$. Note that all intervals in $\cB$ have the same length except the first interval $B_0$. To distinguish the intervals in $\cB$ from the intervals in $\cI$ used in defining $\mathsf{LP}_\mathsf{Throughput}$, we will call the intervals in $\cB$ blocks. Note that the blocks in $\cB$ cover the interval $[0, T]$ seamlessly. We now empty the last $\eps^2 H$ time slots from each block in $\cB$ by discarding all pages in $A_t$ at those times. We will say that those time slots are ``freed-up."

\medskip
\noindent
\textbf{4. Relocating Overflowed Transmissions (pages)}: The goal of this step is to with a good probability relocate all pages in $A_t \setminus \{p_t\}$ at non-freed-up times $t$ to nearby empty time slots. By empty time slots, we mean the ``freed-up" times plus the empty times $t$ from the beginning, i.e. $t$ such that $|A_t|  = 0$ in $\stemp$. We first decide in which direction to move overflowed transmissions. The direction will be decided either as right or left, each with a   probability $1/2$. Let $\Right$ and $\Left$ denote the former and the latter event, respectively. All overflowed transmissions follow the same direction. Formally, for each block $B = [t_1, t_2 + \eps^2 H) \in \cB$ (if $t_2 <0$, do nothing), we do the following: For each time $t \in [t_1, t_2)$ and $p \in A_t$,
\begin{itemize}
    \item if $p = p_t$, the page $p$ stays at time $t$.
    \item if $p \neq p_t$ and the event $\Left$ occurs, the page $p$ can move to an empty time in $[t_1 - \eps^2 H, t)$.
    \item if $p \neq p_t$ and the event $\Right$ occurs, the page $p$ can move to an empty time in $[t+1, t_2 + \eps^2 H)$.
\end{itemize}
Since this is the final schedule, at most one page can be transmitted at a time. Note that the perfect relocation for a block $B$ corresponds to a matching where all the transmissions at non-freed-up times in $B$ are covered by time steps in the interval $[t_1 - \eps^2 H, t_2)$ or $[t_1, t_2 + \eps^2 H)$ (which is determined by the event $\Left$ or $\Right$). We try to find such a matching. If unsuccessful, we keep only the first-round pages, i.e. $p_t$, $t \in [t_1, t_2)$ and throw out all other overflowed pages, i.e. $A_t \setminus \{p_t\}, t \in [t_1, t_2)$. This will suffice to give our claimed approximation guarantee.

\subsubsection{Analysis}

We say that page $p$ originates from time $t$ if it is transmitted
at time $t$ (in $\stemp$) by the $\alpha$-point rounding. When a page
originates from $t$, it may stay at the time $t$ ($p_t = p$), or
may be relocated to near time slots.  Also it may be discarded when it fails to find a new place. Let $\efr(\rho, t)$ denote the
event that $\rho$ is satisfied by the page $p_\rho$ that originates from
time $t$, and stays at the time $t$. Let $\esr(\rho, t)$ denote the
event that $\rho$ is satisfied by the page $p_\rho$ that originates from
time $t$ but  is successfully relocated to other time slots.

Throughout this section, we will consider all large-window requests $\rl$. Additionally we will also consider all small-window requests whose window is completely contained in an interval in $\cI$. For notational simplicity, by $\rs$, we will refer to only such small-window requests. Recall that we lose only $\eps$ fraction of small-window requests.  For a request $\rho \in \rs$, $z_\rho$ is defined as $\min \{ \sum_{t \in W_\rho} x_{p_\rho, t} , 1 \}$. We note that for any request $\rho$, $\sum_{t \in W_\rho} x_{p_\rho, t} \geq \sum_{I \in \cI} z_{\rho, I}$; recall that $z_{\rho, I} := \sum_{Q \in \cQ(I), p_\rho \in P(Q, I)} y_{I, Q}$. Both sides may not be equal when a configuration $Q \in \cQ(I)$ transmits page $p_\rho$ several times in $W_\rho \cap I$. Then the configuration view thinks of it as satisfying $\rho$ once while the time indexed view thinks of it as satisfying $\rho$ multiples times. There is no difference between these two views in the integer programming, but they can differ in linear programing relaxation.
Our achieved profit for small-window requests will be compared against the time-indexed LP profit, namely, $\sum_\rho w_\rho \cdot \min \{\sum_{t \in W_\rho} x_{p_\rho, t} , 1 \}$. This is justified since it is always no smaller than $\opts = \sum_{I \in \cI} \sum_{Q \in \cQ(I)}  w_{I, Q} y_{I,Q}$.

We will show the following two lemmas. As discussed in the algorithm's description, the first lemma easily
follows from the property of the Fair Contention Resolution scheme
in step (2).

\begin{lemma}
    \label{lem:all-requests}
    For all requests $\rho$ and any time $t$, $\Pr [ \efr(\rho, t) \; | \; p_\rho \in A_t] \geq (1 - \eps) ( 1-
1/e)$.
\end{lemma}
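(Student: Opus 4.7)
The plan is to decompose the event $\efr(\rho,t)$ into two independent sub-events coming from two algorithmic stages, bound each one, and multiply. Specifically, given $p_\rho \in A_t$, the event $\efr(\rho,t)$ occurs precisely when (a) the Fair Contention Resolution step in (2) retains $p_\rho$ at time $t$, i.e.\ $p_t = p_\rho$, and (b) the time slot $t$ survives the Freeing-up step in (3) (and implicitly $t \in W_\rho$, else the event is vacuous). Since these two stages use disjoint sources of randomness (the per-page $\alpha_p$'s and contention-resolution coin flips for (a); the shift $h$ for (b)), the two sub-events are independent conditioned on $p_\rho \in A_t$.

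For sub-event (a), first recall the standard $\alpha$-point rounding guarantee $\Pr[p_\rho \in A_t] = x_{p_\rho,t}$, which follows because $\alpha_{p_\rho}$ is uniform on $[0,1]$ and the scheduling rule adds $p_\rho$ to $A_t$ iff some integer lies in the interval $(\sum_{t'\le t-1}x_{p_\rho,t'}-\alpha_{p_\rho},\sum_{t'\le t}x_{p_\rho,t'}-\alpha_{p_\rho}]$. Next, the Fair Contention Resolution analysis of \cite{FeigeV10} (Lemma~1.4) gives, since $\sum_q x_{q,t}\le 1$,
\[
\Pr[p_t = p_\rho] \;\ge\; (1-1/e)\, x_{p_\rho,t}.
\]
Dividing by $\Pr[p_\rho\in A_t]=x_{p_\rho,t}$ yields $\Pr[p_t=p_\rho \mid p_\rho \in A_t]\ge 1-1/e$.

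For sub-event (b), I would argue that a fixed time slot $t$ is freed up with probability at most $\eps$ over the random choice of $h$. The blocks in $\cB$ (away from the boundary $B_0$) have length $\eps H$, and within each block only the last $\eps^2 H$ slots are freed, so as $h$ varies uniformly over $[1,\eps H]$ the fraction of offsets that place $t$ in a freed-up region is exactly $\eps^2H/(\eps H)=\eps$. Hence $\Pr[t\text{ not freed up}] \ge 1-\eps$. Because $h$ is drawn independently of the $\alpha$-point rounding and the contention resolution coins, the conditional probability given $p_\rho \in A_t$ is still at least $1-\eps$.

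Combining the two independent lower bounds gives
\[
\Pr[\efr(\rho,t) \mid p_\rho \in A_t] \;\ge\; (1-1/e)(1-\eps),
\]
as required. The only subtlety worth double-checking is the independence claim and the cleanness of the $\Pr[p_\rho\in A_t]=x_{p_\rho,t}$ identity (which the analysis of \cite{FeigeV10} already implicitly uses); beyond that the argument is essentially bookkeeping, so I do not expect any serious obstacle here.
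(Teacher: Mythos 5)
Your proof is correct and follows essentially the same route the paper intends: the paper simply asserts that the bound ``follows from the property of the Fair Contention Resolution scheme,'' and your write-up supplies exactly the intended details --- the identity $\Pr[p_\rho \in A_t] = x_{p_\rho,t}$ from $\alpha$-point rounding, the bound $\Pr[p_t = p_\rho] \geq (1-1/e)x_{p_\rho,t}$ from Lemma~1.4 of \cite{FeigeV10}, and the independent $(1-\eps)$ factor for $t$ surviving the freeing-up step. No gaps.
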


\begin{lemma}
    \label{lem:large-window-requests}
    Suppose $H \geq 1/ \eps^3$ and $0 < \eps \leq 1/100$. Then for all large-window requests $\rho \in \rl$ and any time $t$, $\Pr [
  \esr(\rho, t) \; | \; \neg \efr(\rho, t) \mbox{ and } p_\rho \in A_t] \geq
(1 - \eps)/2$.
\end{lemma}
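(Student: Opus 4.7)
The plan is to decompose $\esr(\rho,t)$ into two essentially independent requirements: (i) the direction coin must land on a ``safe'' side, meaning one for which the entire allowed destination range of $p_\rho$ lies inside $W_\rho$, and (ii) the block matching step must find a perfect matching, so that the overflow page $p_\rho$ is actually assigned to some empty slot. The factor $1/2$ in the target bound will come from (i) and the factor $(1-\eps)$ from (ii). Throughout, I will rely on the fact that the direction coin, the freeing-up shift $h$, the fair-contention choice at $t$, and the $\alpha_q$ for $q\neq p_\rho$ are four mutually independent sources of randomness, so the conditioning on $\{\neg\efr(\rho,t),\,p_\rho\in A_t\}$ does not bias the pieces I will actually use.

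For (i), a block has length $\eps H$ and the relocation pool on either side adds only $\eps^2 H$ extra slots, so $p_\rho$ is moved by at most $\eps H$ from $t$. Since $|W_\rho|\ge 2H$, whenever $t\in W_\rho$ at least one direction is safe: if $d_\rho-t\ge \eps H$ then \Right{} is safe, and otherwise $t-r_\rho \ge |W_\rho|-\eps H \ge \eps H$, so \Left{} is safe. Crucially, in a safe direction \emph{every} empty slot in the allowed destination range lies inside $W_\rho$, so if the matching places $p_\rho$ anywhere it places it inside $W_\rho$. The \Left{}/\Right{} coin is a fresh Bernoulli$(1/2)$ independent of the three other randomness sources listed above, so it equals the safe side with probability exactly $1/2$ even after conditioning on $\{\neg\efr(\rho,t),\,p_\rho\in A_t\}$.

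The main technical obstacle is (ii). Working out Hall's condition for the bipartite ``overflow pages $\to$ empty slots'' instance (say for \Right) reduces matching feasibility to the event
\begin{equation*}
\max_{t'\in[t_1,t_2-1]}\;\Bigl(\sum_{\tau=t'}^{t_2-1}|A_\tau|-(t_2-t')\Bigr)\le \eps^2 H - O(1),
\end{equation*}
and an analogous one-sided cumulative bound for \Left. Because the $\alpha_p$ are mutually independent, $|A_\tau|=\sum_p\indic_{p\in A_\tau}$ is a sum of independent Bernoullis with $\sum_p\Ex[\indic_{p\in A_\tau}]=\sum_p x_{p,\tau}\le 1$, by the LP's configuration constraint (\ref{LP-throughput-3}). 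Conditioning on $p_\rho\in A_t$ only freezes one coordinate and lifts the mean of the walk by at most one, which is absorbed into the $O(1)$ slack; conditioning on $\neg\efr(\rho,t)$ only affects the freed-up indicator at $t$ and the contention-resolution outcome at $t$, neither of which couples to the $\alpha_q$ of the other pages or to the direction coin. I would then apply Doob's sub-martingale maximal inequality to the exponential of this suffix walk, together with the Bernstein-type MGF estimate $\Ex[\exp(\theta(|A_\tau|-1))]\le \exp((e^\theta-1-\theta)\sum_p x_{p,\tau})$, and optimize at $\theta=\log(1+\eps)$. The resulting tail is of order $\exp(-\Omega(\eps^3 H))$, which for $H\ge 1/\eps^3$ and $\eps\le 1/100$ is at most $\eps$.

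Putting the two pieces together, the safe direction is chosen with probability at least $1/2$ and, conditional on that, the matching succeeds (so $p_\rho$ lands in some slot inside $W_\rho$) with probability at least $1-\eps$; since the two events depend on disjoint sources of randomness and are left uncorrelated by the conditioning, I conclude $\Pr[\esr(\rho,t)\mid\neg\efr(\rho,t),\,p_\rho\in A_t]\ge (1-\eps)/2$. The one subtlety to double-check at write-up time is the claim that the conditioning truly does not bias the direction coin or the independent-Bernoulli structure of the walk $|A_\tau|$ for $\tau\ne t$; this reduces to the clean product structure of the four randomness sources singled out in the first paragraph.
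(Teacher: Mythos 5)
Your proposal is correct and follows essentially the same route as the paper: the factor $1/2$ comes from observing that a window of length at least $2H$ must contain the entire destination range of at least one direction, and the factor $1-\eps$ comes from a Hall-type cumulative overflow condition for the block, controlled by a Bernstein-style concentration bound (the paper also treats the first block $B_0$, where only \Right{} is available, as a separate easy case). The only differences are cosmetic — the paper uses a union bound over the $\eps H$ starting times in the block rather than a maximal inequality, and is less explicit than you are about why the conditioning on $\neg\efr(\rho,t)$ and $p_\rho\in A_t$ does not bias the direction coin or the other pages' $\alpha$ values — so no substantive changes are needed.
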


Throughout the analysis, we make a simplifying assumption that
$\sum_{t \in W_\rho} x_{p_\rho, t} \leq 1$. Otherwise, it will only
help our analysis. Since all the events that $p_\rho \in A_t, t \in
W_\rho$ are disjoint, these two lemmas will imply that
a large-window request $\rho$ is satisfied within
its window with a probability of at least
\begin{eqnarray}
    &&\sum_{t \in W_\rho} \Pr[  \efr(\rho, t) \mbox { or }  \esr(\rho, t) \; | \; p_\rho \in A_t]  \cdot \Pr [p_\rho \in
    A_t] \nonumber\\
    &=&\sum_{t \in W_\rho} \Big ( \Pr[  \efr(\rho, t) \; | \; p_\rho \in A_t ] + \Pr [ \esr(\rho, t) \mbox{ and } \neg \efr(\rho, t)  \; | \; p_\rho \in A_t] \Big ) \cdot \Pr [p_\rho \in
    A_t] \nonumber\\
     &=&\sum_{t \in W_\rho} \Big(\Pr[  \efr(\rho, t) \; | \; p_\rho \in A_t] +  \Pr [
  \esr(\rho, t) \; | \;  \neg \efr(\rho, t) \mbox{ and }  p_\rho \in A_t] ( 1- \Pr [ \efr(\rho, t) \; | \; p_\rho \in A_t ])\Big)  \nonumber\\
  && \qquad \cdot \Pr [p_\rho \in
  A_t] \nonumber\\
  &\geq& \sum_{t \in W_\rho} (1 - \eps)^2 (1 - 1/2e) x_{p_\rho, t} \nonumber\\
  &=& (1 - \eps)^2 (1 -
  1/2e) z_\rho \label{eqn:410}
\end{eqnarray}

Also we can easily show from Lemma~\ref{lem:all-requests} that any request $\rho$ is satisfied with a probability of at least  $\sum_{t \in W_\rho} \Pr [
  \efr(\rho, t)] \geq (1 - \eps)(1 - 1/e) z_\rho$. Hence we will be
able to derive the profit claimed in (\ref{eqn:profit-large}).

\medskip
It now remains to show Lemma~\ref{lem:large-window-requests}.

\begin{proofof}[Lemma~\ref{lem:large-window-requests}]
Consider any large-window request $\rho$ and time $t'$. Throughout, we assume that the time $t'$ is not
freed-up, which is the case with probability $(1 -
\eps)$. Also assume that $p_\rho \in A_{t'}$ but $p_{t'} \neq p_\rho$. The proof will be conditioned on these events.

We first consider the event $\Right$. Let $B = [t_1, t_2 +
\eps^2 H) \in \cB$ be the block that includes time $t'$.
We begin with showing the following claim.

\begin{claim}
If for all times $t \in [t_1, t_2)$, $\sum_{\tau \in
[t , t_2)} |A_\tau| \leq t_2 - t + \eps^2 H$, then we can relocate
each overflowed page $p \in A_t \setminus \{p_t\}$, $t \in [t_1,
t_2)$, to the right, to one of the empty time slots in $B$ (including $p_\rho$ originating from $t'$).
\end{claim}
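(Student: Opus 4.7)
The plan is to recast the relocation constraint as a bipartite perfect matching problem and invoke Hall's theorem, with the assumed suffix bound providing precisely the Hall condition. Set up a bipartite graph $G$ as follows. On the left, place one vertex for each overflowed page: for every $\tau\in[t_1,t_2)$ with $|A_\tau|\ge 1$ include $|A_\tau|-1$ vertices labelled by origin time $\tau$. On the right, place one vertex for each empty slot of $B=[t_1,t_2+\eps^2 H)$---the freed-up slots $[t_2,t_2+\eps^2 H)$ together with the non-freed-up slots $\tau'\in[t_1,t_2)$ having $|A_{\tau'}|=0$. Connect a left vertex of origin $\tau$ to a right vertex labelled $\tau'$ iff $\tau'>\tau$. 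A left-saturating matching in $G$ yields exactly the desired rightward relocation, because the first-round pages $p_\tau$ already occupy all the ``non-empty'' slots, which are disjoint from the right side of $G$ by construction.

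Because each left vertex's neighborhood is a suffix of the totally ordered right-vertex set, $G$ is an interval bipartite graph. In such graphs, Hall's condition need only be checked on the ``downward closed'' subsets $S_{t^*}:=\{\text{overflowed pages originating in }[t^*,t_2)\}$ for $t^*\in[t_1,t_2)$, whose neighborhood $N(S_{t^*})$ is simply the set of empty slots of $B$ strictly to the right of $t^*$. Writing $k:=|\{\tau\in[t^*,t_2):|A_\tau|\ge 1\}|$, one has
$$|S_{t^*}|=\sum_{\tau\in[t^*,t_2)}(|A_\tau|-1)^+=\sum_{\tau\in[t^*,t_2)}|A_\tau|-k$$
and
$$|N(S_{t^*})|=\eps^2 H+(t_2-t^*-1)-\bigl(k-\indic[|A_{t^*}|\ge 1]\bigr).$$
Rearranging $|S_{t^*}|\le|N(S_{t^*})|$ reduces Hall's condition to
$$\sum_{\tau\in[t^*,t_2)}|A_\tau|\;\le\;(t_2-t^*)+\eps^2 H-\indic[|A_{t^*}|=0],$$
which is exactly the hypothesis applied at $t=t^*$ when $|A_{t^*}|\ge 1$, and applied at $t=t^*+1$ when $|A_{t^*}|=0$ (the boundary case $t^*+1=t_2$ being trivial, since then the left side is $|A_{t^*}|=0$).

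Hall's theorem then supplies the required matching. In particular, since $p_\rho\in A_{t'}\setminus\{p_{t'}\}$ by assumption, the corresponding left vertex of $G$ is matched to some empty slot $>t'$ inside $B$, so $p_\rho$ finds its new home as claimed. The main obstacle I anticipate is the indicator-function bookkeeping in the formulas for $|S_{t^*}|$ and $|N(S_{t^*})|$---tracking whether the boundary slot $\tau=t^*$ itself is occupied or empty, which is precisely what introduces the $\indic[|A_{t^*}|=0]$ correction and makes the hypothesis line up with no slack; everything else is routine.
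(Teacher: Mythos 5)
Your proof is correct. It establishes the same counting fact as the paper, but packages it differently: the paper relocates overflowed pages greedily from $t_2$ down to $t_1$, each to the rightmost empty slot, and argues by a downward induction that the suffix condition guarantees room at every step; you instead set up an explicit bipartite graph between overflowed pages and empty slots and invoke Hall's theorem, using the suffix structure of the neighborhoods to reduce Hall's condition to the sets $S_{t^*}$. The two arguments hinge on the identical inequality --- that for every suffix $[t,t_2)$ the transmission volume $\sum_{\tau\in[t,t_2)}|A_\tau|$ is at most the $t_2-t+\eps^2 H$ available positions --- so this is a re-derivation rather than a new idea. What your version buys is precision: the paper's induction step is asserted in one sentence (``there are enough time slots\dots and this is exactly what the condition of the claim implies''), whereas your indicator bookkeeping makes the per-slot accounting (occupied by a first-round page versus genuinely empty, and the boundary slot $\tau=t^*$) fully explicit, at the cost of invoking Hall's theorem where a constructive greedy suffices. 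One small imprecision: when no overflowed page originates exactly at $t^*$, the true $N(S_{t^*})$ is a proper subset of ``all empty slots right of $t^*$''; this is harmless, since for an arbitrary $S$ the Hall condition follows by applying your inequality at $t^*=\tau_{\min}$, the minimal origin actually present in $S$, where the identification is exact.
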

\begin{proof}
We consider overflowed
pages at time $t$ from $t_2$ to $t_1$, and move each of such pages to the right most empty time slot in $B$.
By an induction on time $t$, we can show that all overflowed pages in $A_\tau \setminus \{p_\tau \}$, $t \leq \tau \leq t_2 -1$ are relocated to an empty time slot in $(t, t_2 + \eps^2 \lenh)$.
This is obviously true when $t = t_2$ ($t_2$ was freed-up, so no overflow at time $t_2$). To complete the induction, it suffices to show that there are enough time slots in $[t, t_2 + \eps^2 \lenh)$ to accommodate all transmissions made in $\stemp$ during $[t, t_2)$, and this is exactly what the condition of the claim implies.
\end{proof}

Let $\Bad_t$ denote the event that $\sum_{\tau \in [t , t_2)}
|A_\tau| > t_2 + \eps^2 H -  t$. We will show that $ \Pr [ \Bad_t]
\leq \eps / H$. Then by taking a union bound over all $t \in [t_1,
t_2)$, we can show that the desired relocation can be done with a
probability of at least $1 - \eps$. Bounding the probability of this
bad event is very similar to the proof of
Lemma~\ref{lem:fmax-delay}. Let $X_p$ denote a 0-1 random
variable that has value one with probability $\mu_p:= \sum_{t \in
[t, t_2)} x_{p,t} - \lfloor \sum_{t \in [t, t_2)} x_{p,t} \rfloor$.
Let $\mu := \sum_{p \in \cP} \mu_p$. Note that
    $$\Pr [ \Bad_t]  \leq \Pr [ \sum_{p \in \cP} X_p \geq \mu + \eps^2 H] $$
Also note that $\Var [ \sum_{p \in \cP} X_p ]  = \sum_{p \in \cP} \Var [
X_p ] \leq \sum_{p \in \cP} ( \mu_p - \mu_p^2) \leq \sum_{p \in \cP} \sum_{t \in [t, t_2)}  x_{p,t} \leq H$.
Hence by applying Theorem~\ref{thm:bernstein} with $V \leq H$, $b
\leq 1$, and $\Delta = \eps^2 H$, we have
$$\Pr [ \Bad_t] \leq \exp(- \frac{\Delta^2}{ 2V + 2b\Delta / 3}) \leq \exp( - \eps^2 H /3) \leq \eps / H,$$
when $H \geq 1/\eps^3$ and $\eps \leq 1/100$. Hence we know that in the event $\Right$, all overflowed pages in $B$ can be
safely relocated with a probability of at least $(1 - \eps)$.
One can easily show that this is the case also for the event $\Left$. The only exception is when $B$ is the first block $B_0$ in $\cB$, since there are no empty slots before time 0.

Suppose $B$ is not the first block. Since $\rho$ is a large-window request (of length at least $2\lenh$),
its window $W_\rho$ must cover at least one of $t_1 - \eps^2 H$ or $t_2 + \eps^2
H$. Hence the request $\rho$ is satisfied by the page $p_\rho$ originating from time $t'$ for at least one of the two events $\Left$ and $\Right$, which move the page to $[t_1 - \eps^2 H, t')$ and $(t', t_2 + \eps^2 H)$, respectively. This implies that  the event $\esr(\rho, t')$ occurs with a probability of at least $(1 - \eps)/2$.
Now suppose $B$ is the first block. In this case, any large-window request $\rho$ that
starts during the first block covers the entire second block. Hence
in the event $\Right$, if the relocation is successful, the request
will be satisfied.
\end{proofof}

\subsection{Further Improvement: $0.7759$-approximation}
	\label{sec:further}

In this section we further improve our approximation guarantee. The algorithm remains almost the same (except small changes in preprocessing), but the improvement comes from a more refined analysis. We will give a $0.7759$-approximation, more precisely a randomized $(1 /2  + 3/(4e) - \eps)$-approximation for any $\eps >0$.  The main idea is to collect more profits from large-window requests. In the previous analysis we counted profits from small-window/large-window requests separately. Recall that by using the $\alpha$-point rounding together with the contention resolution scheme, we were able to satisfy a large-window request $\rho$ in the second round with about a half probability if it is not satisfied in the first round (see Lemma~\ref{lem:large-window-requests}). The reason for the half probability loss can be summarized as follows: When the $\alpha$-point rounding tries to transmit page $p_\rho$ near to $\rho$'s boundary, the transmission may be relocated to a time step which is out of $\rho$'s window. Since we moved the transmission either to the left or right, each with a half probability, the half loss could occur. However, if $\rho$ is satisfied in the ``middle" of
its window, we will be able to avoid such a loss. Hence our analysis will consider two cases depending on whether $\rho$ is satisfied near to its boundary or not.

\smallskip To this end, we preprocess large-window requests as well. Recall that the time horizon was divided into intervals $\cI = \{I_0, I_1, ..., I_h\}$ where all intervals have length $2H / \eps$ except the first. For each large-window request $\rho$, we divide its window into three sub-windows, $W^l_\rho, W^m_\rho$ and $W^r_\rho$: Consider all intervals in $\cI$ that intersect $W_\rho$, and $W^l_\rho$ and $W^r_\rho$ are the intersections with the first and last of those intervals, respectively. The middle window $W^m_\rho := W_\rho \setminus ( W^l_\rho \cup W^r_\rho)$ is defined as the remaining interval of $W_\rho$ other than $W^l_\rho$ and $W^r_\rho$. (If $W_\rho$ intersects only one interval in $I_\rho$, then $W^l_\rho := w_\rho$, and there are no $W^m_\rho$, $W^r_\rho$. If $W_\rho$ intersects exactly two intervals, there is no middle window for $\rho$). Recall that we discarded all small-window requests that are not fully contained in an interval in  $\cI$. We also discard some large-window requests. We discard a large-window request $\rho$ if $W^l_\rho$ or $W^r_\rho$ has a length less than $2 \eps H$; every large-window request has a window of length at least $2H$. Note that we lose only $4 \eps H  /  (2H) \leq 2\eps$. For simplicity we assume that $\eps \leq 1/ 2$, and hence we lose only $\eps$ fraction of total profit from large-window requests in expectation.  We let $W^b_\rho := W^l_\rho \cup W^r_\rho$ to denote possibly two ``boundary" windows together.

%if a large window $\rho$ starts or ends during $I$ and intersects $I$ by at least $\eps H$ time steps, then we also count the profit $w_\rho$ in $w_{I, Q}$ if $Q$ satisfies $\rho$ during $I$.
%We modify our LP as follows.
%\begin{alignat}{10} \tag{$\mathsf{LP}'_\mathsf{Throughput}$} \label{lp:throughput}\\
%    \max \qquad  \sum_{I \in \cI} \sum_{Q \in \cQ(I)} & w_{I, Q} y_{I,Q}+  \sum_{\rho \in \rl} w_\rho z^b_\rho  + \sum_{\rho \in \rl} w_\rho z^m_\rho    \\
%   \sum_{I \in \cI} \sum_{Q \in \cQ(I)} w_{I, Q} y_{I,Q} &\geq (1 - \eps) \opts \label{LP-throughput-1}\\
%   \sum_{\rho \in \cR^{L}} w_\rho z_\rho &\geq \optl \label{LP-throughput-2}\\
%     \sum_{Q \in \cQ(I)} y_{I,Q} &= 1 \qquad & \forall I \in \cI  \\ %\label{LP-throughput-3}\\
%    x_{p, t} &= \sum_{Q: (t,p) \in Q, Q \in \cQ(I)} y_{I,Q} & \qquad \forall I \in \cI, t \in I,  p \in \cP \\ % \label{LP-throughput-4}\\
%    \sum_{p \in \cP} x_{p,t} &\leq 1 \label{LP-throughput-4.5} &\qquad \forall t \in [T]\\
%    \sum_{t \in W^m_\rho} x_{p_\rho,t} &\geq z^m_\rho &\forall \rho \in \rl  \label{LP-throughput-5-2a}\\
%        \sum_{p_\rho \in Q(W^l_\rho): Q \in \cQ(I), W^l_\rho \subseteq I} x_{p_\rho,t}  &\geq z^b_\rho \label{LP-throughput-5-2b}\\
%       x_{p, t}  &\geq 0 &\forall p \in \cP, t \in [T] \nonumber \\ %\label{LP-throughput-6}\\
%       y_{I,Q}  &\geq 0 &\forall I \in \cI, Q \in \cQ(I) \nonumber \\%\label{LP-throughput-7}\\
%       z^b_{\rho}, z^m_{\rho}   &\in [0, 1]&\forall \rho \in \rl \nonumber%\label{LP-throughput-8}
%\%end{alignat}

%The only changes are in the objective and in Constraints

For any $I \in \cI$, configuration $Q \in \cQ(I)$, and for any interval $I'$, we let $P(Q, I')$ denote the set of pages transmitted during $I'$ in the configuration $Q$. Let $z_{\rho, I} := \sum_{Q  \in \cQ(I), p_\rho \in P(Q, W_\rho \cap I)}  y_{I, Q}$ denote the total fraction of configurations in the LP that satisfies $\rho$ in $I$.
Let $z^m_\rho := \min \{ \sum_{I \in \cI, I \in W^m_\rho} z_{\rho, I}, 1 \}$, and let $z^b_\rho := \min \{ \sum_{I \in \cI, I \in W_\rho} z_{\rho, I}, 1 \} -z^m_\rho$. Notice that $z^m_\rho + z^b_\rho = z_\rho$.  We split $\optl$ into two quantities, $\optlb: = \sum_\rho w_\rho z^b_\rho$ and $\optlm: = \sum_\rho w_\rho z^m_\rho$. Note that $\optl = \optlb + \optlm$. By the independent rounding we will show that we can get a total expected profit of at least
\begin{equation} \label{eqn:profit-small-2}
    \opts+ \frac{2}{e} \optlb + \frac{3}{2e}\optlm
\end{equation}

To have a feel how we get these constants $2 / e$ and $3 /(2e)$, suppose that a large-window request is satisfied only in its boundary windows $W^l_\rho$ or $W^r_\rho$, by at most one unit in total. Let $b_1$ and $b_2$ denote the amount of configurations that transmit page $p_\rho$ in $W^l_\rho$ and $W^r_\rho$, respectively.  Then we know that $b_1 + b_2 = z_\rho \leq 1$. The probability that request $\rho$ is satisfied is then $1 - (1 - b_1) (1 - b_2) = b_1 + b_2 - b_1 b_2 \geq b_1 + b_2 - (b_1 + b_2)^2 / 4 \geq (3/4) (b_1 + b_2) = (3/4) z_\rho$. In another extreme case when $\rho$ is satisfied only during $\rho$'s middle window, we know that $\rho$ is satisfied with a probability of at least $(1 - 1/e) z_\rho$.  Ideally, it would be great if we could obtain $(3 / 4) \optlb+ (1 - 1/e)\optlm$. However, it is not the case when $\rho$ is satisfied both in $W^b_\rho$ and $W^m_\rho$. Nevertheless, by careful analysis we will be able to show (\ref{eqn:profit-small-2}) in Section~\ref{sec:small-2}.

Also using the variant of $\alpha$-point rounding with the contention resolution scheme we will show that we can achieve a total profit of at least
\begin{equation} \label{eqn:profit-large-2}
 ( 1- \eps)^2 \Big((1 - \frac{1}{e})\opts+ (1  - \frac{1}{2e})\optlb + \optlm \Big)
 \end{equation}

 As discussed above, if the $\alpha$-point rounding tries to satisfy  a large-window request $\rho$ in the middle of $\rho$'s window, then we can almost always satisfy $\rho$ within its window, either in the first round or second round.

 \smallskip
By selecting the better of these two outcomes, we can get a total expected profit of at least the average of the two, hence we get an approximation guarantee of
\begin{eqnarray*}
   &&( 1- \eps)^2 \cdot \Big(  (1 - \frac{1}{2e}) \opts + ( \frac{1}{2} + \frac{3}{4e}) \optl \Big) \\
   &\geq&  ( 1- \eps)^2 \cdot ( \frac{1}{2} + \frac{3}{4e}) \opt \simeq 0.7759 ( 1- \eps)^2 \opt
\end{eqnarray*}

\subsubsection{Approximation Guarantee of Independent Rounding:  Proof of (\ref{eqn:profit-small-2}) }
	\label{sec:small-2}

In this section, we show that the independent rounding of picking one configuration $Q \in \cQ(I)$ with probability $y_{Q, I}$ gives an expected total profit claimed in  (\ref{eqn:profit-small-2}). As before, we can show that we get an expected profit of $\opts$ from small-window requests. Hence we focus on proving that for each large-window request $\rho$ (which was not discarded in the beginning), we get an expected profit of at least
\begin{equation}
	\label{eqn:310}
(2 / e) z^b_\rho + ( 3 / (2e)) z^m_\rho
\end{equation}
 By summing this lower bound over all large-window requests $\rho$, we can prove the profit claimed in (\ref{eqn:profit-small-2}) due to linearity in expectation.

Recall that $P(Q, I')$ denotes the pages that are transmitted during interval $I'$ by configuration $Q$. Also recall $z_{\rho, I} := \sum_{Q  \in \cQ(I), p_\rho \in P(Q, W_\rho \cap I)}  y_{I, Q}$ denotes the total fraction of configurations in the LP that satisfies $\rho$ in $I$.
For each large window request $\rho$ we define the following three quantities (for notational convenience we omit $\rho$ from the notation):
\begin{align*}
	\eta_1 &:= z_{\rho, I}		&\mbox{ for the unique $I \in \cI$ s.t. $W^l_\rho \subseteq I$} \\
%	\sum_{I \in \cI: } \sum_{Q \in \cQ(I): p_\rho \in P(Q, I \cap W^l_\rho)} y_{Q,I} \\
	\eta_2 &:= \sum_{I \in \cI: I \subseteq W^m_\rho} z_{\rho, I} &\\
	\eta_3 &:= z_{\rho, I}		&\mbox{ for the unique $I \in \cI$ s.t. $W^r_\rho \subseteq I$}
\end{align*}
Intuitively, these quantities $\eta_1, \eta_2$ and $\eta_3$ represent how much the LP satisfies a large window request $\rho$ in its left, middle, right windows, respectively. Observe that $\eta_1, \eta_3 \leq 1$, but not necessarily $\eta_2 \leq 1$. Note that $z^b_\rho = \min \{\eta_1 + \eta_2 + \eta_3, 1 \} - z^m_\rho$ and $z^m_\rho = \min \{\eta_2, 1 \}$. %Observe that $0 \leq z^m_\rho \leq \eta_2$.
The probability that $\rho$ is satisfied in its window $W_\rho$ is at least
\begin{eqnarray*}
	\label{eqn:201}
&&	(1 - ( 1 - \eta_1)( 1 - \eta_3)) + ( 1- \eta_1) ( 1 - \eta_3) ( 1 - \Pi_{I \in \cI: I \subseteq W^m_\rho} (1 -  z_{\rho, I} ))\\
&\geq& (1 - ( 1 - \eta_1)( 1 - \eta_3)) + ( 1- \eta_1) ( 1 - \eta_3) ( 1 - \exp( - \eta_2)) \\
&\geq& (1 - ( 1 - \eta_1)( 1 - \eta_3)) + ( 1- \eta_1) ( 1 - \eta_3) ( 1 - \exp( - z^m_\rho ) )\\
&\geq& (\eta_1 + \eta_3 - \eta_1 \eta_3) \exp( - z^m_\rho ) + 1 - \exp( - z^m_\rho )
%&\geq& (\eta_1 + \eta_3 - \eta_1 \eta_3) ( 1/e -1 ) z^m_\rho  + 1 - (  ( 1/e - 1) z^m_\rho +1 )
\end{eqnarray*}

We will show the following lemma which will show (\ref{eqn:310}), thereby completing the proof of (\ref{eqn:profit-small-2}). The proof is fairly algebraic.

\begin{lemma}
%\begin{eqnarray*}
It holds that
$\displaystyle (\eta_1 + \eta_3 - \eta_1 \eta_3) \exp( - z^m_\rho ) + 1 - \exp( - z^m_\rho )
\geq \frac2e z^b_\rho + \frac{3}{2e} z^m_\rho$.
%\end{eqnarray*}
\end{lemma}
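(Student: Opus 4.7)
The plan is to reduce the inequality to an elementary calculus problem on a two-dimensional triangle, and then verify nonnegativity by ruling out an interior minimum and checking the three boundary edges. First I set $u := \eta_1 + \eta_3$ and $s := z^m_\rho$. By AM-GM, $\eta_1 \eta_3 \leq u^2/4$, so $\eta_1 + \eta_3 - \eta_1 \eta_3 \geq 1 - (1 - u/2)^2$, and it suffices to prove
$$1 - \Bigl(1 - \tfrac{u}{2}\Bigr)^2 e^{-s} \;\geq\; \tfrac{2}{e}\, z^b_\rho + \tfrac{3}{2e}\, s.$$
A brief case analysis on whether $\eta_2 \leq 1$ and whether $\eta_1 + \eta_2 + \eta_3 \leq 1$ shows that $z^b_\rho = \min\{u,\, 1 - s\}$ in every case.

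Next I split the analysis into two subcases. If $u \geq 1 - s$, then $z^b_\rho = 1 - s$ and the right-hand side becomes $\tfrac{2}{e} - \tfrac{s}{2e}$, which is independent of $u$; since the left-hand-side lower bound is monotonically increasing in $u$, the worst case reduces to $u = 1 - s$, which lies on the boundary of the other subcase. So it suffices to consider $u + s \leq 1$, in which case $z^b_\rho = u$, and prove
$$g(u, s) \;:=\; 1 - \Bigl(1 - \tfrac{u}{2}\Bigr)^2 e^{-s} - \tfrac{2}{e}\, u - \tfrac{3}{2e}\, s \;\geq\; 0$$
on the triangle $\Delta := \{(u, s) : u, s \geq 0,\; u + s \leq 1\}$.

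I would then argue $g \geq 0$ on $\Delta$ by excluding an interior minimum and verifying nonnegativity on each boundary edge. Solving $\nabla g = 0$ gives a unique interior critical point $(u^*, s^*) = (1/2,\, \ln(3e/8))$ where the Hessian has determinant $-\tfrac{3}{2}(1 - u/2)^2 e^{-2s} < 0$; hence this point is a saddle and $\min_\Delta g$ is attained on $\partial \Delta$. On the edge $s = 0$, $g(u, 0) = u(1 - 2/e) - u^2/4$ is concave with $g(0,0) = 0$ and $g(1,0) = 3/4 - 2/e > 0$. On the edge $u = 0$, $g(0, s) = 1 - e^{-s} - (3/(2e))\, s$ is concave with $g(0,0) = 0$ and $g(0,1) = 1 - 5/(2e) > 0$. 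Both functions are therefore nonnegative throughout their respective edges.

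The most delicate step is the diagonal edge $u + s = 1$. Letting $h(u) := g(u, 1 - u)$, a direct computation gives $h''(u) = e^{u-1}(2 - u^2)/4 > 0$ on $[0, 1]$, so $h$ is convex with a unique interior minimizer $u^\star$ determined by $u^\star(2 - u^\star)\, e^{u^\star} = 2$. Substituting this relation back to eliminate the exponential yields the closed form
$$h(u^\star) \;=\; 1 - \tfrac{1}{2e}\Bigl(2 + u^\star + \tfrac{2}{u^\star}\Bigr),$$
and the defining equation pins down $u^\star \in (0.74,\, 1)$, inside which $u^\star + 2/u^\star < 2e - 2$ is straightforward to verify, giving $h(u^\star) > 0$. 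This diagonal verification is where I expect the main obstacle to lie: the interior saddle-point argument and the two axis-aligned edges are essentially mechanical, but the margin on the hypotenuse is only about $0.01$, so care is needed to replace numerical evaluation by the exponential-elimination identity above.
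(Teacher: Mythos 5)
Your overall strategy is sound and, at the level of the two-variable reduction, coincides with the paper's: both arguments use $\eta_1\eta_3\le(\eta_1+\eta_3)^2/4$ to reduce to an inequality on the triangle $\{x+y\le 1,\ x,y\ge 0\}$ (the paper gets there via $z^b_\rho\le\eta_1+\eta_3$ and monotonicity of $t\mapsto t-t^2/4$, you via the exact identification $z^b_\rho=\min\{u,1-s\}$ and a subcase split; both are valid). Where you genuinely diverge is in how the triangle is handled. The paper fixes $x=z^b_\rho$ and maximizes over $y$, yielding three cases ($y=0$, an interior critical point in $y$, and $y=1-x$), with the interior case dispatched by the bound $y+1\le e^y$; you instead run a two-dimensional argument, showing the unique interior critical point is a saddle and checking the three edges. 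On the decisive hypotenuse the paper proves concavity and then brackets the maximizer numerically at $0.75,0.755,0.76$ before a mix-and-match evaluation; your substitution of the first-order condition $u^\star(2-u^\star)e^{u^\star}=2$ to eliminate the exponential and obtain $h(u^\star)=1-\frac{1}{2e}\bigl(2+u^\star+\frac{2}{u^\star}\bigr)$ is cleaner and reduces everything to a one-sided estimate on $u^\star$. (Minor slip: the Hessian determinant is $-\frac{1}{2}(1-u/2)^2e^{-2s}$, not $-\frac{3}{2}(\cdots)$; only the sign matters, so this is harmless.)

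However, your final numerical step as stated is wrong. You claim $u^\star+2/u^\star<2e-2$ holds throughout $(0.74,1)$, but $2e-2\approx 3.4366$ while $0.74+2/0.74\approx 3.4427$, so the inequality fails on roughly $(0.74,0.7427)$ and the localization $u^\star>0.74$ is not enough to conclude. The repair is immediate: $t\mapsto t(2-t)e^t$ has derivative $(2-t^2)e^t>0$ on $[0,1]$, so it is increasing, and at $t=0.75$ it equals $0.9375\,e^{0.75}\approx 1.985<2$; hence $u^\star>0.75$. Since $t+2/t$ is decreasing on $(0,1]$, you get $u^\star+2/u^\star<0.75+8/3=3.41\overline{6}<2e-2$, and the proof closes. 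With that one constant tightened from $0.74$ to $0.75$, your argument is complete and is arguably tidier than the paper's on the hypotenuse.
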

\begin{proof}
	For any fixed values of $\eta_1 + \eta_3$ and $\eta_2$ (which fix $z^b_\rho$ and $z^m_\rho$), observe the left-hand-side is minimized when $\eta_1 = \eta_3$. Hence it suffices to show that
\begin{eqnarray*}
&& (\eta_1 + \eta_3 - (\eta_1 + \eta_3)^2/4) \exp( - z^m_\rho ) + 1 - \exp( - z^m_\rho )
\geq \frac2e z^b_\rho + \frac{3}{2e} z^m_\rho
\end{eqnarray*}
Also since $ (\eta_1 + \eta_3 - (\eta_1 + \eta_3)^2/4)$ increases in $\eta_1 + \eta_3$ (when $0 \leq \eta_1 + \eta_3 \leq 2$), and $z^b \leq \eta_1 + \eta_3$, it suffices to show that
\begin{eqnarray}
	\label{eqn:251}
&& (z^b_\rho - (z^b_\rho)^2  /4 ) \exp( - z^m_\rho ) + 1 - \exp( - z^m_\rho )
\geq \frac2e z^b_\rho + \frac{3}{2e} z^m_\rho
\end{eqnarray}

For notional convenience, let $x: = z^b_\rho$ and $y:= z^mb_\rho$ (These variables have nothing to do with the variables in the LP. This override will be in effect only in the proof of this lemma). Note that $x + y \leq 1$ and $x, y \geq 0$. By rearranging terms in (\ref{eqn:251}) it remains to show
$$g( x, y) := (1 -\frac{x}{2})^2 e^{-y} + \frac{2}{e}x + \frac{3}{2e} y \leq 1$$
%By scaling down $x, y$ by a factor of 2$, this can be to be equivalent to
%$$ (1 -x)^2 e^{-y} + \frac{2}{e}x + \frac{3}{2e} y \leq 1$$
Observe that for any fixed $x$, the function $g(x,y)$ is maximized when
\begin{OneLiners}
	\item Case (i) $y= 0$; or
	\item Case (ii) $\frac{\partial g}{\partial y} = 0$; or	
	\item Case (iii) $y= 1-x$.
\end{OneLiners}

We continue our analysis by considering these cases as follows.

\medskip
\noindent \emph{Case (i):} First consider the case when $y = 0$. Then it is easy to see that $g(x,0) = (1 - x/2)^2 + (2 /e) x$ is maximized when $x =0$ for any $0 \leq x \leq 1$, and $g(0, 0) = 1$, hence $g(x, 0) \leq 1$.

\medskip
\noindent \emph{Case (ii):} Now consider the case when $\frac{\partial g}{\partial y} = 0$, which yields $(1 - x/ 2)^2 e^{-y} = 3 / (2e)$. Using this we have
$g(x, y) =  \frac{3}{2e} (1 + y) + \frac{2}{e}x$. Also we observe that  $y+1 \leq e^y \leq \frac{2e}{3} ( 1 - \frac{x}{2})^2$. Hence we derive that
$g(x, y) =  \frac{3}{2e} (1 + y) + \frac{2}{e}x \leq ( 1- \frac{x}{2})^2 + \frac{2}{e}x$. As in the previous case, this is at most $1$ for any $x \in [0,1]$.

\medskip
\noindent \emph{Case (iii):}  It now remains to consider the final case when $y = 1 - x$. By plugging this in $g(x,y)$ we derive that
$$g(x) := g(x, 1 - x) = (\frac{x}{2} -1)^2 \exp( x -1) + \frac{1}{2e} x + \frac{3}{2e}$$

By simple calculations we have
\begin{eqnarray*}
	\label{eqn:254}
	g'(x) &:=& \frac{d}{dx} g(x) = \frac{x}{2} ( \frac{x}{2} - 1) e^{x -1} + \frac{1}{2e}\\
	g''(x) &:=& \frac{d^2}{d^2x} g(x) = ( \frac{x^2}{4}   - \frac{1}{2} ) e^{x-1}
\end{eqnarray*}
Since $g''(x) < 0$ for any $x \in [0,1]$, $g'(0) >0$ and $g'(1) <0$, there exists a unique $x_0 \leq [0,1]$ such that function $g(x)$ increases during $[0, x_0)$ and decreases during $(x_0, 1]$. From the facts that $g(0.75) < 0.993994$, $g(0.755) > 0.993996$, and $g(0.76) \leq 0.993994$, we derive that $0.75 \leq x_0 \leq 0.76$.

Hence we conclude that
$$\max_{0 \leq x \leq 1} g(x) = g(x_0) < (\frac{0.75}{2} -1)^2  \cdot \exp( 0.76 -1) + \frac{1}{2e} \cdot 0.76 + \frac{3}{2e} < 0.99889 <1$$
This completes the proof of the lemma.
\end{proof}

%Hence the expected profit we collect from a large-window request $\rho$ is at least $\frac2e z^b_\rho + \frac{3}{2e} z^m_\rho$. By summing over all requests, we
%complete the proof of (\ref{eqn:profit-small-2}).

\subsubsection{Approximation Guarantee of the algorithm in Section~\ref{sec:the-algorithm}:  Proof of (\ref{eqn:profit-large-2}) }

As before we assume that $\sum_{t \in W_\rho} x_{p_\rho, t} \leq 1$, since otherwise it can only improve our approximation guarantee. Now we take a closer look at the proof of Lemma~\ref{lem:large-window-requests}, and observe the following:
\begin{lemma}
    \label{lem:large-window-requests-2}
    Suppose $H \geq 1/ \eps^3$ and $0 < \eps \leq 1/100$. Then for all large-window requests $\rho \in \rl$ and any time $t$, we have
    \begin{eqnarray*}
    \Pr [
  \esr(\rho, t) \; | \; \neg \efr(\rho, t) \mbox{ and } p_\rho \in A_t  \mbox{ and } t \in W^m_\rho ] &\geq&
(1 - \eps)/2 \\
\Pr [
  \esr(\rho, t) \; | \; \neg \efr(\rho, t) \mbox{ and } p_\rho \in A_t  \mbox{ and } t \in W^b_\rho ] &\geq&
(1 - \eps)
  \end{eqnarray*}
\end{lemma}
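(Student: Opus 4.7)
The plan is to mirror the proof of Lemma~\ref{lem:large-window-requests}, refining the analysis according to whether the originating time $t$ lies in $W^m_\rho$ or $W^b_\rho$. Throughout I condition on the event that $t$ is not a freed-up time (a factor $1-\eps$) and on $p_\rho \in A_t$ with $p_t \neq p_\rho$. Let $B = [t_1, t_2 + \eps^2 H) \in \cB$ be the block of the partition $\cB$ that contains $t$. The proof of Lemma~\ref{lem:large-window-requests} already shows, via a Bernstein tail bound on the events $\Bad_\tau$, that with probability at least $1-\eps$ every suffix of $B$ admits enough slack for a valid matching-based relocation of overflowed transmissions in either of the directions $\Left$ or $\Right$. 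The only remaining question is whether the direction actually chosen places the relocated copy of $p_\rho$ back inside $W_\rho$.

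For the first claim, $t \in W^m_\rho$: the bound $(1-\eps)/2$ is inherited directly from Lemma~\ref{lem:large-window-requests}, which establishes it uniformly over all large-window requests and all $t \in W_\rho$. No additional work is needed, since the middle sub-window is a subcase.

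For the second claim, $t \in W^b_\rho$: I aim to remove the direction-choice factor $1/2$ present in Lemma~\ref{lem:large-window-requests} and reach $(1-\eps)$. The key ingredient is the preprocessing step which discards any large-window request with $|W^l_\rho| < 2\eps H$ or $|W^r_\rho| < 2\eps H$. Under this guarantee, whenever $t \in W^b_\rho$ the block $B$ together with both of its one-sided $\eps^2 H$-extensions $[t_1 - \eps^2 H, t_1)$ and $[t_2, t_2 + \eps^2 H)$ sits inside $W_\rho$: on the side of $t$ facing the interior of $W_\rho$ this is immediate because $W^m_\rho$ (or the opposite boundary sub-window) is adjacent, and on the side of $t$ facing $r_\rho$ or $d_\rho$ the lower bound $|W^l_\rho|, |W^r_\rho| \geq 2\eps H$ of length keeps the $\eps^2 H$-extension inside $W_\rho$. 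Hence \emph{both} $\Left$ and $\Right$ land the relocated $p_\rho$ inside $W_\rho$, the direction-choice factor $1/2$ disappears, and only the $(1-\eps)$ factor from the block-overflow union bound survives.

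The main obstacle will be the careful verification of the geometric containment in the second case for \emph{every} $t \in W^b_\rho$, in particular for $t$ within $\eps^2 H$ of $r_\rho$ or $d_\rho$, where the naive picture can fail. Here I would exploit the random offset $h \in [1, \eps H]$ defining $\cB$: averaging over $h$, the probability that $t$ lands in a block whose one-sided extension escapes $W_\rho$ is $O(\eps)$, and this residual loss is absorbed into the $(1-\eps)$ factor claimed by the lemma. Once this case analysis is in place, combining the block-overflow bound of Lemma~\ref{lem:large-window-requests} with the fact that both directions keep $p_\rho$ inside $W_\rho$ gives the second inequality.
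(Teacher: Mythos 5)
Your first case is fine: the proof of Lemma~\ref{lem:large-window-requests} is a per-$t$ argument, so restricting to $t \in W^m_\rho$ indeed inherits the $(1-\eps)/2$ bound. The genuine gap is in your second case, and it is not repairable along the lines you sketch. For $t \in W^b_\rho$ that is close to an endpoint of $W_\rho$ --- say $t = d_\rho - 1$ --- the direction of relocation is a fair coin independent of everything else, and under $\Right$ the page $p_\rho$ is moved to an empty slot in $(t, t_2 + \eps^2 H)$ of its block, typically into the freed-up tail; for such $t$ this lands beyond $d_\rho$ for essentially every value of the offset $h$, since $h$ only shifts block boundaries and cannot change the fact that $t$ is adjacent to $d_\rho$. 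The statement you must prove is conditional on the specific time $t$, so you also cannot average over $t \in W^b_\rho$. Hence the conditional probability of $\esr(\rho,t)$ for such $t$ is roughly $1/2$, not $1-\eps$, and your claimed geometric containment (``both one-sided extensions sit inside $W_\rho$'') is simply false near $r_\rho$ and $d_\rho$; the lower bound $|W^l_\rho|, |W^r_\rho| \geq 2\eps H$ bounds the \emph{length} of the boundary sub-windows, not the distance from $t$ to the window endpoints.

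In fact the inequality you were trying to force is not the one the paper proves: as printed, the two bounds in the lemma are swapped relative to the paper's own proof sketch and its subsequent use. The paper's argument is the mirror image of yours: relocation moves a page by at most $\eps H$ time steps, and because every non-discarded large-window request has $|W^l_\rho|, |W^r_\rho| \geq 2\eps H$, any $t \in W^m_\rho$ is at distance at least $2\eps H > \eps H$ from both $r_\rho$ and $d_\rho$; therefore for middle times \emph{both} directions keep the relocated page inside $W_\rho$, the direction factor $1/2$ disappears, and only the $(1-\eps)$ matching-success probability remains, while boundary times keep the $(1-\eps)/2$ bound of Lemma~\ref{lem:large-window-requests}. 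This orientation (middle gets $1-\eps$, boundary gets $(1-\eps)/2$) is what is needed downstream, where the sums over $W^b_\rho$ and $W^m_\rho$ are bounded by $(1-\eps)^2(1-1/2e) z^b_\rho$ and $(1-\eps)^2 z^m_\rho$ respectively, yielding the $(1/2 + 3/(4e))$ guarantee. So the missing idea in your proposal is exactly the distance-to-endpoint argument for $W^m_\rho$; the $(1-\eps)$ bound for $W^b_\rho$ that you attempt to establish does not hold for the algorithm as defined.
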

\begin{proof}
	The proof is very similar to that of Lemma~\ref{lem:large-window-requests}, hence we give a sketch of the proof highlighting differences.
	The proof of the first probability remains the same. The reason why we have a better probability for the second case is the following. Recall that in the proof of Lemma~\ref{lem:large-window-requests}, 
we moved a overflowed transmission to the right or left by at most $\eps H$ time steps. Any large-window request $\rho$ we did not discard in the beginning (when solving the LP) intersects each of the two boundary windows $W^l_\rho$ and $W^r_\rho$ by at lest $2\eps H$ time steps. Also recall that we have shown that when we try to move the ``overflowed'' transmissions in an interval $I \in \cI$ to the right or left, we can find enough empty time slots for those transmissions with a probability of at least $(1 - \eps)$. Hence regardless of the direction in which the transmission $(p_\rho, t)$ is moved, it finds an empty time slot with a probability of at least $( 1- \eps)$ and still satisfies $\rho$ in its window conditioned on the transmission not being chosen in the first round.
\end{proof}

We are now ready to complete the proof of (\ref{eqn:profit-large-2}). Since all the events that $p_\rho \in A_t, t \in
W_\rho$ are disjoint, the two bounds in the lemma will imply that
a large-window request $\rho$ is satisfied within
its window with a probability of at least
\begin{eqnarray}
    &&\sum_{t \in W_\rho} \Pr[  \efr(\rho, t) \mbox { or }  \esr(\rho, t) \; | \; p_\rho \in A_t]  \cdot \Pr [p_\rho \in
    A_t]  \nonumber\\
    &\leq&\sum_{t \in W^b_\rho} \Pr[  \efr(\rho, t) \mbox { or }  \esr(\rho, t) \; | \; p_\rho \in A_t]  \cdot \Pr [p_\rho \in
    A_t]  \label{eqn:420}\\
&& + \sum_{t \in W^m_\rho} \Pr[  \efr(\rho, t) \mbox { or }  \esr(\rho, t) \; | \; p_\rho \in A_t]  \cdot \Pr [p_\rho \in
    A_t]  \label{eqn:421}
\end{eqnarray}

By taking similar steps we did in showing (\ref{eqn:410}), we can show that
$(\ref{eqn:420}) \geq ( 1- \eps)^2 ( 1- 1/2e) z^b_\rho$ and $(\ref{eqn:421}) \geq ( 1- \eps)^2 z^m_\rho$. Hence we obtain

\begin{lemma}
	Suppose $H \geq 1/ \eps^3$ and $0 < \eps \leq 1/100$. Then any large-window request $\rho \in \rl$ is satisfied (in the first or second round) with a probability of at least
	$( 1- \eps)^2 \Big( ( 1- 1/2e) z^b_\rho +  z^m_\rho$\Big).
\end{lemma}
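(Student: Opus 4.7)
The plan is to extend the per-$t$ computation in (\ref{eqn:410}) by splitting the sum over times $t \in W_\rho$ into contributions from the boundary window $W^b_\rho$ and the middle window $W^m_\rho$, exploiting the stronger second-round success probability available when $t$ is far from an endpoint of $W_\rho$.

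As in Section~\ref{sec:large-window}, the $\alpha$-point rounding for page $p_\rho$ uses a single random value $\alpha_{p_\rho}$, so under the standing assumption $\sum_{t \in W_\rho} x_{p_\rho,t} \leq 1$ the events $\{p_\rho \in A_t\}_{t \in W_\rho}$ are pairwise disjoint with $\Pr[p_\rho \in A_t] = x_{p_\rho,t}$. Hence $\Pr[\rho \text{ satisfied}] \geq \sum_{t \in W_\rho} \Pr[\efr(\rho,t) \text{ or } \esr(\rho,t) \mid p_\rho \in A_t]\,x_{p_\rho,t}$, which I split along $W_\rho = W^b_\rho \sqcup W^m_\rho$. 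Writing the per-$t$ conditional probability as $a + b(1-a)$ with $a \geq (1-\eps)(1-1/e)$ by Lemma~\ref{lem:all-requests}, I apply the bounds of Lemma~\ref{lem:large-window-requests-2}: the weaker bound $b \geq (1-\eps)/2$ applies to $t \in W^b_\rho$, since near an endpoint of $W_\rho$ only one relocation direction is guaranteed to stay inside the window, while the stronger bound $b \geq (1-\eps)$ applies to $t \in W^m_\rho$, where both relocation directions remain inside $W_\rho$ because the $\cI$-intervals forming $W^m_\rho$ are safely buffered by the adjacent boundary intervals.

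A direct algebraic calculation (mirroring the derivation of (\ref{eqn:410})) then yields per-$t$ lower bounds of $(1-\eps)^2(1 - \frac{1}{2e})\,x_{p_\rho,t}$ on $W^b_\rho$ and $(1-\eps)^2\,x_{p_\rho,t}$ on $W^m_\rho$; each reduces to verifying an inequality of the form $\eps \cdot (\text{something positive}) \geq 0$. Summing gives
$$\Pr[\rho \text{ satisfied}] \;\geq\; (1-\eps)^2 \left[ \Bigl(1 - \frac{1}{2e}\Bigr) \sum_{t \in W^b_\rho} x_{p_\rho,t} + \sum_{t \in W^m_\rho} x_{p_\rho,t} \right].$$

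The final step is to pass from the $x$-sums to $z^b_\rho$ and $z^m_\rho$. From the LP relationship between configuration variables and time-indexed variables, one has $\sum_{t \in I \cap W_\rho} x_{p_\rho,t} \geq z_{\rho,I}$ for every $I \in \cI$, since each configuration with $p_\rho \in P(Q, I \cap W_\rho)$ contributes at least one unit to the $x$-sum. Summing this over the $\cI$-intervals contained in $W^m_\rho$ yields $\sum_{t \in W^m_\rho} x_{p_\rho,t} \geq z^m_\rho$, and summing over the two boundary $\cI$-intervals yields $\sum_{t \in W^b_\rho} x_{p_\rho,t} \geq z^b_\rho$, after a brief case analysis to handle the capping at $1$ in the definitions of $z^b_\rho$ and $z^m_\rho$ (in particular, if the middle sum already reaches $1$ then $z^b_\rho \leq 0$ and the boundary inequality is vacuous). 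I expect this $x$--$z$ bookkeeping around the capping to be the only routine care needed; everything else is direct substitution.
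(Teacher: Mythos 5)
Your proof is correct and follows essentially the same route as the paper's (terse) argument: split the sum in (\ref{eqn:410}) over $W^b_\rho$ and $W^m_\rho$ as in (\ref{eqn:420})--(\ref{eqn:421}), combine the first-round bound of Lemma~\ref{lem:all-requests} with the two second-round bounds of Lemma~\ref{lem:large-window-requests-2}, and then pass from the time-indexed sums to $z^b_\rho$ and $z^m_\rho$; your explicit $x$--$z$ bookkeeping (including the capping-at-$1$ case analysis) fills in a step the paper leaves implicit. One remark: you assign the strong bound $(1-\eps)$ to $t\in W^m_\rho$ and the weak bound $(1-\eps)/2$ to $t\in W^b_\rho$, which is the opposite of what Lemma~\ref{lem:large-window-requests-2} literally states; your assignment is the geometrically correct one (only middle-window times are at distance at least $2\eps H$ from both endpoints of $W_\rho$, so only there do both relocation directions stay inside the window) and is the one actually needed to produce the coefficients $(1-1/2e)$ on $z^b_\rho$ and $1$ on $z^m_\rho$ --- the paper's statement of that lemma simply has its two cases interchanged.
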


We use the same bound shown in Lemma~\ref{lem:all-requests} for small-window requests, and by summing over all request $\rho$, we complete the proof of the profit claimed in  (\ref{eqn:profit-large-2}).

\subsection{Improving the Run Time}
\label{sec:throughput-runtime}

\newcommand{\barP}{\bar{P}}

In this section, we show how to improve the run time of our algorithm to $(1 / \eps)^{O(1 / \eps)} \cdot poly(m)$. We first observe that we can easily remove the dependency on the time horizon length $T$. This can be done by taking similar steps as we took in Section~\ref{sec:fmax-removeT}. Here we explain an easy way of doing this: If there is an interval of length greater than $m$ on which no request's window starts or ends, we throw out all times steps in the interval but $m$ time steps. Note that one can obtain as much profit in in this ``down-sized" instance as in the original instance. Also it is easy to observe that the number of remaining time steps is at most $O(m^2)$.

The main idea for reducing the run time from $m^{O( 1/ \eps^4)}$ to $(1 / \eps)^{O(1 / \eps)} \cdot poly(m)$ is to solve the $\mathsf{LP}_\mathsf{Throughput}$ more efficiently. More concretely, we will solve the dual of the LP using an efficient separation oracle. The dual LP is defined as follows.
\begin{alignat}{10} \tag{$\mathsf{LP}_\mathsf{Dual:Throughput}$} \label{dual:throughput}\\
    \min \qquad  \sum_{I \in \cI} \gamma_I  + \sum_{\rho \in \rl} \xi_\rho  \nonumber \\
s.t. \;\;\;\;\;    w_{I, Q}  + \sum_{\rho \in \rl :p_\rho \in P(Q, I \cap W_\rho)} \delta_\rho &\leq  \gamma_I  \qquad & \forall I \in \cI, Q \in \cQ(I)  \label{dual-1}\\
			 - \delta_\rho + \xi_\rho &\leq w_\rho			&\forall \rho \in \rl \label{dual-2}\\
%			\xi_\rho &\leq w_\rho			&\forall \rho \in \rl \label{dual-3}\\
%
%       x_{p, t}  &\geq 0 &\forall p \in \cP, t \in [T] \nonumber \\ %\label{LP-throughput-6}\\
	       \gamma_I &\geq 0 &\forall I \in \cI \nonumber \\%\label{LP-throughput-7}\\
		\delta_\rho, \xi_\rho &\geq 0 &\forall \rho \in \rl \nonumber
\end{alignat}

To avoid considering all
constraints (\ref{dual-1}), we use an efficient separation oracle. Since there are only polynomially many constraints
(\ref{dual-2}) (with no dependence on $\eps)$, we focus on giving a separation oracle for constraints (\ref{dual-1}).  We observe that for each fixed $I \in \cI$, the problem of finding a (if any) violated constraint (\ref{dual-1}) is essentially equivalent to finding a schedule during $I$ that maximizes the throughput when each small-window request $\rho$ has a profit $w_\rho$ and each large-window request $\rho$ has a profit $
\delta_\rho$. Here we assume that $I$ are the only time steps that exist (each request $\rho$'s window is restricted to $I$). Obviously we will need to transmit at most $2H/ \eps$ pages (the interval $I$ has a length of at most $2H/ \eps$). Hence at each time step $t$, as a potential page to transmit at the time, we only need to consider the $2H / \eps$ pages that yield the maximum profit assuming that the time step $t$ is the only time step when we transmit a page. Hence we can find the best schedule that maximizes the total profit in $I$ in time $(H/ 2 \eps)^{H / 2\eps} \cdot poly(m) = (1 / \eps)^{O(1 / \eps)} \cdot poly(m)$. If the configuration corresponding to this schedule violates the constraint
 (\ref{dual-1}), then we report it as a violated constraint. Otherwise, we conclude that no constraint is violated for the interval $I \in \cI$.

We have shown that one can solve the dual LP in $(1 / \eps)^{O(1 / \eps)} \cdot poly(m)$ time. This implies that we can find an optimal solution for  the dual LP by considering only $(1 / \eps)^{O(1 / \eps)} \cdot poly(m)$ many dual constraints. Hence by solving $\mathsf{LP}_\mathsf{Throughput}$ restricted only to the variables corresponding to those dual constraints, we can obtain an optimal solution for the primal $\mathsf{LP}_\mathsf{Throughput}$ by the strong duality theorem. The run time of our randomized algorithms is negligible.

\bibliographystyle{plain}
{\small

}
%-------------------------------------------

\clearpage
\appendix

\section{LP-guided $\fifo$}
\label{sec:LP-guided-fifo}
Since the algorithm $\fifo$ is $2$-competitive even in the online setting, one may hope that a natural modification of $\fifo$ may yield a better approximation. In this section, we show that a natural ``LP-guided" $\fifo$ does not improve the approximation ratio. The LP-guided $\fifo$ is defined as follows. Let $x^*_{p, t}$ be a fractional solution that satisfies all requests within $L^*$ time steps. At each time $t$ and for any page $p$, let $y^*_{p, t}: = \sum_{t' = t_1+1}^{t} x^*_{p, t'}$ where $t_1$ is the last time we transmitted page $p$; if no such time exists, then $t_1= 0$. Intuitively, $y^*_{p,t}$ suggests the amount or probability that page $p$ needs to be transmitted. Hence using $y_{p,t}$ to determine the priority, we transmit the page $p$ with the largest value of $y^*_{p,t}$ breaking ties favoring the page that has the earliest arriving unsatisfied request.

However, we can find a simple example that shows that this achieves only a 2-approximation. Consider the following instance. We will index pages by integers from 1 to $n$, and will assume that $n$ is even. At each time $t \in [1, n/2]$, two requests, each for page $2t - 1, 2t$, are released. The same sequence of requests is repeated during $[n/2 +1, n]$. That is, at each time $t + n/2$ for any $t \in [1, n/2]$, two requests, one for page $2t - 1$ and one for page $2t$ are released. This completes the description of the requests instance.

Now consider the following fractional solution $x^*_{p,t}$: For any page $p$, page $p$ is transmitted three times by half. More precisely, $x_{p, \lceil p /2 \rceil +1} = x_{p, \lceil p /2 \rceil + n / 2 +1} = x_{p, \lceil p /2 \rceil + n +1} = 1/2$, and for all other times $t$, $x_{p, t} = 0$. It is easy to check that the maximum flow time is $n/2 +1$ in this fractional solution. In contrast, the LP-guided $\fifo$ transmits all pages from $1$ to $n$ during $[2, n+1]$, and repeats this during $[n+2, 2n +1]$. In this schedule, the maximum flow time is $n$.

\newpage

\section{Pseudocodes}
    \label{sec:pseudo}

\begin{algorithm}[h!] \caption{Dynamic Programming with run time $O(T) \cdot n^{O(L^*)}$} \label{algo-constant} %([n]; f_i : 2^{[n]} \rightarrow [0,1],  i \in [m])$} \label{alg:rsv}
   \textbf{INPUT}: An estimate $L$ of the maximum flow time $L^*$; $\cP$ and $\cR$.
    \smallskip
    \begin{algorithmic}[1]
    \FOR {$t = L$ to $T$}
        \STATE $\cQ^f(t) \gets \emptyset$.
    \ENDFOR
   \STATE $\cQ^f(L-1) := \{ \{ ( 0, p_0) , (1, p_1), ..., (L-1, p_{L-1}) \} \; | \; p_0, p_1, ..., p_{L-1} \in \cP \}$.
    \FOR {$t = L$ to $T$}
        \FOR {each $Q  = \{ (t - L, p_{t - L}), (t - L+1 , p_{t - L+1}), ..., (t - 1, p_{t-1})\} \in \cQ^f(t - 1)$}
            \FOR {each $p_t \in \cP$}
                \IF {all requests with release time $t - L$ are satisfied by $p_{t - L+1}, p_{t - L+2}, ..., p_{t - 1}$ or $p_t$}
                          \STATE Let $Q' = \{(t - L+1, p_{t - L+1}), (t- L+2, p_{t - L+2}), ..., (t, p_t) \}$.
                    \STATE $\prev(Q') = Q$.
                    \STATE Add $Q'$ to $\cQ^f(t)$.
                \ENDIF
                \ENDFOR
        \ENDFOR
    \ENDFOR
    \smallskip
    \IF {$\cQ^f(T) = \emptyset$}
            \STATE Declare that the maximum flow time is greater than $L$.
    \ELSE
        \STATE Consider any  $Q \in \cQ^f(T)$.
        \FOR {$t = T$ to $L$}
        \STATE Let $(t, p_t) \in Q$.
        \STATE $p'_t \gets p_t$.
        \STATE $Q \gets \prev(Q)$.
        \ENDFOR
    \STATE Let $Q = \{(0, p_0), (1, p_1), ..., (L-1, p_{L-1})\}$.
    \STATE $p'_1 \leftarrow p_1$, $p'_2 \leftarrow p_2$, ..., $p'_L \leftarrow p_{L}$.
        \STATE Transmit page $p'_t$ at time $t$, $1 \leq t \leq T$.
        \ENDIF
    \end{algorithmic}
%   \smallskip
 %   \textbf{OUTPUT}: Transmit page $p'_t$ at time $t$.
\end{algorithm}

\newpage

\begin{algorithm}[h!] \caption{Dynamic Programming when $L^* \leq (1 / \eps^2) \log T$}  \label{algo-small-2} %\label{alg:rsv} %([n]; f_i : 2^{[n]} \rightarrow [0,1],  i \in [m])$} \label{alg:rsv}
   \textbf{INPUT}: Modified instance $J'$ with $\ell = 2 + 1/\eps$.  %An estimate $L$ of the maximum flow time $L^*$; $\cP$ and $\cR$.
    \smallskip
    \begin{algorithmic}[1]
    \STATE $T' := \lceil \frac{\max_\rho r_{\rho}}{ \eps L^*} \rceil + \ell$.
    \FOR {$t = \ell$ to $ T'$}
        \STATE $\cQ^f(t) \gets \emptyset$.
    \ENDFOR
   \STATE $\cQ^f(\ell-1) := \{ \{ (0, \cA_0), (1, \cA_1), ..., (\ell-1, \cA_{\ell-1}) \} \; | \;   \cA_i \in {\bigcup_{k=0}^{\ell-2} \cP_{k} \choose \eps L^*}, 0 \leq i \leq \ell-1 \}$.
    \FOR {$t = \ell$ to $T'$}   \label{algo-small-line-0}
        \FOR {each $Q  = \{ (t - \ell, \cA_{t - \ell}), (t - \ell+1 , \cA_{t - \ell+1}), ..., (t - 1, \cA_{t-1})\} \in \cQ^f(t - 1)$}    \label{algo-small-line-1}
            \FOR {each $\cA_t \in {\bigcup_{i=1}^\ell \cP_{t-i} \choose \eps L^*}$}                 \label{algo-small-line-2}
                \IF { $\cP_{t - \ell} \subseteq \bigcup_{i=0}^{\ell-1} \cA_{t - i}$}                     \label{algo-small-line-3}
                    \IF { no page appears more than twice in $\cA_{t - \ell+1}, \cA_{t - \ell +2}, ..., \cA_t$}                  \label{algo-small-line-3.5}
            %all requests with release time $t - L$ are satisfied by one of $P_{t - L+1} \cup P_{t - L+2} \cup ..., \cup P_{t - 1} \cup P_t$}
                          \STATE Let $Q' = \{(t - \ell+1, \cA_{t - \ell+1}), (t- \ell+2, \cA_{t - \ell+2}), ..., (t, \cA_t) \}$. \label{algo-small-line-4}
                    \STATE $\prev(Q') = Q$. \label{algo-small-line-5}
                    \STATE Add $Q'$ to $\cQ^f(t)$. \label{algo-small-line-6}
                    \ENDIF
                \ENDIF \label{algo-small-line-7}
                \ENDFOR
        \ENDFOR
    \ENDFOR \label{algo-small-line-10}
    \smallskip
    \IF {$\cQ^f(T') = \emptyset$}
            \STATE Declare that the maximum flow time is greater than $\ell$.
    \ELSE
        \STATE Consider any  $Q \in \cQ^f(T)$.
        \FOR {$t = T'$ to $\ell$}
        \STATE Let $(t, \cA_t) \in Q$.
        \STATE $\cA'_t \gets \cA_t$.
        \STATE $Q \gets \prev(Q)$.
        \ENDFOR
    \STATE Let $Q = \{(0, \cA_0), (1, \cA_1), ..., (\ell-1, \cA_{\ell-1})\}$.
    \STATE $\cA'_1 = \cA_1$, $\cA'_2 = \cA_2$, ..., $\cA'_{\ell-1} = \cA_{\ell-1}$.
        \STATE Transmit pages $\cA'_t$ at time $t$, $1 \leq t \leq T'$.
        \ENDIF
    \end{algorithmic}
%   \smallskip
 %   \textbf{OUTPUT}: Transmit page $p'_t$ at time $t$.
\end{algorithm}

\newpage

\section{Concentration Inequalities}
    \label{sec:chernoff}

\iffalse
\begin{theorem}[\cite{MotwaniR95}]
    \label{thm:cher_a}
Let $X_1, X_2, ..., X_n$ be $n$ independent random variables such
that $\Pr[X_i = 0] = \Pr[X_i = 1] = \frac{1}{2}$. Let $Y = \sum_{i
=1}^{n} X_i$. Then, for any $\Delta >0$, we have
    $$\Pr\Big[|Y - \frac{n}{2}| \geq \Delta \Big] \leq 2 \exp(-2
    \Delta^2/ n).$$
\end{theorem}

\begin{theorem}[\cite{MotwaniR95}]
    \label{thm:cher_b}
Let $X_1, X_2, ..., X_n$ be $n$ independent random variables such
that $\Pr[X_i = 0] = 1 - p_i$ and $\Pr[X_i = 1] = p_i$. Let $Y =
\sum_{i =1}^{n} X_i$. Then we have that

$$\Pr\Big[ Y \geq (1 +
\delta) \mu \Big] \leq \Big( \frac{e^\delta}{(1+\delta)^{1+
\delta}}\Big)^\mu.$$

In particular,
\begin{itemize}
  \item for any $\delta \leq 2e - 1$, $\Pr\Big[ Y \geq (1 + \delta) \mu \Big] \leq \exp( - \mu \delta^2 /4).$
  \item for any $\delta \geq 2e - 1$, $\Pr\Big[ Y \geq (1 + \delta) \mu \Big] \leq 2^{ - \mu (1 +\delta)}.$
\end{itemize}
\end{theorem}
\fi

The following theorem follows from Bernstein inequalities.

\begin{theorem}[\cite{McDiarmid98concentration}]
    \label{thm:bernstein}
Let $X_1, X_2, ..., X_n$ be $n$ independent random variables such that for all $i \in [n]$, $X_i \leq b$. Let $Y = \sum_{i = i}^{n} X_i$, $\mu := \Ex[Y]$, and $V := \Var[Y]$. Then it follow that
$$\Pr \Big[ Y - \mu \geq \Delta \Big] \leq \exp\Big( - \frac{\Delta^2}{2V(1+ (b \Delta / 3V))}\Big)$$
\end{theorem}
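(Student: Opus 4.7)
The plan is to prove the inequality by the standard Chernoff--Cram\'er method. For any $\lambda > 0$, Markov's inequality applied to $e^{\lambda(Y-\mu)}$ together with independence gives
\[
    \Pr[Y - \mu \geq \Delta] \leq e^{-\lambda \Delta}\, \Ex\bigl[e^{\lambda(Y-\mu)}\bigr] = e^{-\lambda \Delta} \prod_{i=1}^n \Ex\bigl[e^{\lambda Z_i}\bigr],
\]
where $Z_i := X_i - \Ex X_i$ satisfies $\Ex Z_i = 0$, $\mathrm{Var}(Z_i) = v_i$, and $Z_i \leq b$ (after absorbing any positive mean shift into $b$; this is the only place the one-sided bound is used).

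The crucial step is bounding each MGF $\Ex[e^{\lambda Z_i}]$. I will use the fact that the function $\varphi(z) := (e^{\lambda z} - 1 - \lambda z)/z^2$ (extended to $\lambda^2/2$ at $z=0$) is monotonically nondecreasing on $\mathbb{R}$ -- a short calculus exercise via a Taylor expansion of $z\varphi'(z)$. Hence $Z_i \leq b$ implies the pointwise bound $e^{\lambda Z_i} - 1 - \lambda Z_i \leq Z_i^2 \cdot (e^{\lambda b}-1-\lambda b)/b^2$. Taking expectations, using $\Ex Z_i = 0$ and $1+x \leq e^x$, gives $\Ex[e^{\lambda Z_i}] \leq \exp(v_i (e^{\lambda b}-1-\lambda b)/b^2)$, and multiplying over $i$ yields $\Ex[e^{\lambda(Y-\mu)}] \leq \exp(V(e^{\lambda b}-1-\lambda b)/b^2)$.

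Next I will apply the Taylor-series estimate $e^x - 1 - x = \sum_{k \geq 2} x^k/k! \leq \tfrac{x^2}{2}\sum_{k \geq 0} (x/3)^k = \tfrac{x^2}{2(1 - x/3)}$ valid for $0 \leq x < 3$ (each term satisfies $x^k/k! \leq (x^2/2)(x/3)^{k-2}$ because $k! \geq 2 \cdot 3^{k-2}$ for $k \geq 2$). Applied with $x = \lambda b$, this gives, for every $0 < \lambda < 3/b$,
\[
    \Pr[Y-\mu \geq \Delta] \leq \exp\!\left(-\lambda\Delta + \frac{\lambda^2 V}{2(1 - \lambda b/3)}\right).
\]
Finally I will choose $\lambda := \Delta/(V + b\Delta/3)$, which is the natural minimizer; plugging in and simplifying gives the exponent $-\Delta^2/(2V + 2b\Delta/3) = -\Delta^2/\bigl(2V(1 + b\Delta/(3V))\bigr)$, matching the claim.

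The main obstacle is establishing the monotonicity of $\varphi$, which underlies the MGF bound; once that is in hand, the rest is algebraic. A minor bookkeeping issue is that the hypothesis $X_i \leq b$ translates only to a one-sided bound $Z_i \leq b$ on the centered variable, but this is exactly what the monotonicity of $\varphi$ (as a function on all of $\mathbb{R}$) is designed to exploit, so nothing is lost.
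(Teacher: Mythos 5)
Your proof is correct: the paper gives no proof of this theorem (it is quoted directly from McDiarmid's survey), and your Chernoff--Cram\'er argument --- monotonicity of $(e^{\lambda z}-1-\lambda z)/z^2$ to bound each MGF, the series estimate $e^x-1-x\le x^2/(2(1-x/3))$, and the optimal choice $\lambda=\Delta/(V+b\Delta/3)$ --- is exactly the standard derivation in that reference. The one wrinkle you flag, namely that the hypothesis should really read $X_i-\Ex X_i\le b$ (as in McDiarmid) rather than $X_i\le b$, is an imprecision in the paper's statement, harmless in its applications where the $X_i$ are $0$--$1$ valued, and you handle it appropriately.
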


\end{document}